% !TEX spellcheck = en_US
%%%%%%%&latex
\documentclass[11pt]{article}
\pagestyle{plain} \textwidth 6.0in \textheight 9in \oddsidemargin
0.25in \topmargin -0.5in

\usepackage[textsize=footnotesize]{todonotes}
\usepackage{hyperref}

\usepackage[title]{appendix}
\usepackage{graphicx}
\usepackage{amsmath}         
\usepackage{amssymb}
\usepackage{amsthm}
\usepackage[countmax]{subfloat}  
\usepackage{fourier}
\usepackage{xcolor}
\usepackage{mathtools}
\usepackage{enumerate}
\usepackage{natbib}
\usepackage{url}
\usepackage{verbatim}
\usepackage{booktabs}
\usepackage{placeins}
\usepackage{multirow}
\usepackage{caption}
\usepackage{subcaption}
\usepackage{mathrsfs} 
\bibliographystyle{jasa}

\def\R{\mathbb R}     
\def\E{\mathbb E}
\def\V{\text{Var}}
\def\cv{\text{Cov}}

\def\cm{\text{Cum}}

\newtheorem{prop}{Proposition}[section]     
\newtheorem{thm}{Theorem}[section]
\newtheorem{lemma}{Lemma}[section]

\newtheorem{Remark}{Remark}[section]
\newtheorem{Corollary}{Corollary}[section]
\newtheorem{assumption}{Assumption}[section]
\newtheorem{properties}{Properties}[section]
\newcommand{\im}{\mathrm{i}}

\newcommand{\mynegspace}{\hspace{-0.12em}}

\DeclareMathAlphabet{\mathantt}{OT1}{antt}{li}{it}
\newcommand{\bigsnorm}[1]{\Big\rvert\mynegspace\Big\rvert\mynegspace\Big\rvert\mynegspace {#1} \Big\rvert\mynegspace \Big\rvert\mynegspace \Big\rvert}
\newcommand{\snorm}[1]{\rvert\mynegspace\rvert\mynegspace\rvert\mynegspace {#1} \rvert\mynegspace \rvert\mynegspace \rvert}

\newcommand{\innprod}[2]{\langle #1, #2 \rangle}
\newcommand{\biginnprod}[2]{\Big\langle #1, #2 \Big\rangle}

\newcommand{\opnorm}[1]{\vert\kern-0.25ex\vert\kern-0.25ex\vert #1 
    \vert\kern-0.25ex\vert\kern-0.25ex\vert}

\newcommand{\fdft}[2]{ D_N^{{u_{j_{{#1}}}},{\omega_{k_{#2}}}}}
\newcommand{\fdftc}[2]{ D_N^{{u_{j_{{#1}}}},{-\omega_{k_{#2}}}}}
\newcommand{\fdftl}[2]{ D_N^{{u_{j_{{#1}}}},{\omega_{k_{#2}-1}}}}
\newcommand{\fdftcl}[2]{ D_N^{{u_{j_{{#1}}}},{-\omega_{k_{#2}-1}}}}
\newcommand{\rnum}{\mathbb{R}}
\newcommand{\znum}{\mathbb{Z}}
\newcommand{\nnum}{\mathbb{N}}
\newcommand{\XT}[1]{X_{#1, T}}
\newcommand{\Xu}[1]{X^{(u)}_{#1}}
\newcommand{\Xuu}[2]{X^{(#1)}_{t_{#2}}}
\newcommand{\cumbp}[2]{\kappa_{{#1};t_1,\ldots,t_{{#2}-1}}}
\newcommand{\cumloc}[2]{c_{#1;t_1,\ldots,t_{{#2}-1}}(\tau_1,\ldots,\tau_{#2})}
\newcommand{\floc}[2]{f_{#1;\omega_1,\ldots,\omega_{{#2}-1}}(\tau_1,\ldots,\tau_{#2})}

\newcommand{\F}{\mathcal{F}}
\newcommand{\Eps}{\mathcal{E}}

\allowdisplaybreaks

\newcommand\tageq{\addtocounter{equation}{1}\tag{\theequation}}
\DeclareMathOperator{\Tr}{Tr}

\allowdisplaybreaks

\usepackage{authblk}
\usepackage{tcolorbox}

\begin{document}

\title{A nonparametric test for stationarity \\in functional time series}

\author[a]{Anne van Delft}
\author[b]{Vaidotas Characiejus}
\author[a]{Holger Dette}
\affil[a]{Ruhr-Universit\"at Bochum, Fakult\"at f\"ur Mathematik, 44780 Bochum, Germany}
\affil[b]{D\'epartement de math\'ematique, Universit\'e libre de Bruxelles, Belgium}

 \maketitle
\begin{abstract}
We propose a new measure for stationarity of a functional time series, which is based on an explicit representation  of the  $L^2$-distance between the spectral density operator
of a non-stationary process  and its best  ($L^2$-)approximation by a spectral density operator corresponding to a stationary  process. This distance can easily be estimated 
by sums of Hilbert-Schmidt inner products of periodogram operators (evaluated at different frequencies), and asymptotic normality of an  appropriately standardized version of the estimator 
can be established for the corresponding estimate under the null hypothesis 
and alternative. As a result we obtain 
a simple asymptotic   frequency domain  level $\alpha$ test
(using the quantiles of the normal distribution) for the hypothesis of stationarity of functional time series. 
Other applications such as asymptotic confidence intervals for a measure of stationarity or the construction 
of  tests for ``relevant deviations from   stationarity'', are also briefly mentioned.  
We demonstrate in a small simulation study that the new method has very good finite sample properties. Moreover, we apply our test to annual temperature curves.
\end{abstract}

\noindent
Keywords: time series, functional data,  spectral analysis,   local stationarity,  measuring stationarity, relevant hypotheses  \\
AMS Subject classification:  Primary: 62M15; 62H15, Secondary: 62M10, 62M15
\\

\section{Introduction}  \label{sec1}
\def\theequation{1.\arabic{equation}}
\setcounter{equation}{0}

In many applications of functional data analysis (FDA) data is recorded sequentially over time and naturally exhibits dependence. In the last years, an increasing number of authors have worked on analysing functional data from time series 
and we refer to the monographs of \cite{bosq2000} and \cite{horvkoko2012} among others. An important assumption in most of the literature is stationarity, which allows a unifying development of statistical theory. For example, stationary processes with a linear representation have among others been investigated by
\cite{Mas2000}, \cite{Bosq2002} and \cite{Dehling2005}. Prediction methods \citep[e.g.,][]{as03,adh,bosq2000} and violation of the i.i.d.\ assumption in the context of change point detection have also received a fair amount of attention \citep[e.g.,][]{Aue2009,Berkes2009,Horvath2010}. \cite{hk} provide a general framework to examine temporal dependence among functional observations of stationary processes.  Frequency domain analysis of stationary functional time series has been considered by \cite{Panaretos_Tavakoli_2013} under the assumption of functional generalizations of cumulant-mixing conditions.

In practice, it is however not clear that the temporal dependence structure is constant and hence that stationarity is satisfied. It is therefore desirable to have tests for second order stationarity or measures for deviations from stationarity for data analysis of functional time series. 
In the context of Euclidean data (univariate and multivariate) there exists a considerable amount of literature on this problem. Early work can be found in \cite{prisub1969}
who proposed testing the ``homogeneity'' of a set of evolutionary spectra.  \cite{sacneu2000} used coefficients  with respect to a Haar wavelet series expansion of  time-varying
periodograms for this purpose, see also  \cite{nason2013}  who provided an important extension of their approach
and \cite{cardnason2010} or \cite{tayecknun2014} for further applications of wavelets in the problem of testing for stationarity.  
\cite{paparoditis2009,paparoditis2010} proposed to reject the null hypothesis of second order stationarity if  the $L^2$-distance between
a local spectral density estimate  and an  estimate derived under the assumption of stationarity is large.  
\cite{detprevet2011}  suggested to estimate this distance directly by sums of periodograms evaluated at the Fourier frequencies in order to avoid
the problem  of choosing additional bandwidths [see also \cite{prevetdet2013} for an empirical process approach].
An alternative method  to investigate second order stationarity can be found in  \cite{dwisub2011} and \cite{jensub2015}, who
used the fact that  the discrete Fourier transform (DFT)  is asymptotically uncorrelated at the canonical frequencies if and only if
the time series is second-order stationary. 
Recently,  \cite{jinwang2015} proposed a double-order selection test for checking second-order stationarity of a univariate time series,
while \cite{dasnason2016} investigated an experimental empirical measure of non-stationarity based on the mathematical roughness of
the time evolution of fitted parameters of a dynamic linear model. 

On the other hand -- despite the frequently made assumption of second-order stationarity in functional data analysis -- much less work has been done investigating the stationarity of functional data. A rigorous mathematical framework 
for locally stationary functional time series has only been recently developed by \cite{vde16}, who extended the concept
of local stationarity  introduced by \cite{dahlhaus1996,dahlhaus1997} from univariate time series  to functional data. 
To our best knowledge \cite{avd16_main} is the only reference that applies this framework to test for second-order stationarity of a functional time series against smooth alternatives. These authors follow the approach of \cite{dwisub2011} and show that the functional discrete Fourier transform (fDFT) is asymptotically uncorrelated at distinct Fourier frequencies if and only if the process is functional weakly stationary. This result is then used to construct a test statistic based on an empirical covariance operator of the fDFT's, which is subsequently projected to finite dimension. The asymptotic properties of the resulting quadratic form is demonstrated to be chi-square distributed both under the null and under the alternative of functional local stationarity. Although the authors thereby provide an explicit expression for the degree of departure from weak stationarity, the test requires the specification of the parameter $M$, the number of lagged fDFT's included. This can be seen as a disadvantage since it affects the power of the test.
 
In the present paper we propose a different test which is based on an explicit representation  of the  $L^2$-distance between the spectral density operator of a non-stationary process and its best ($L^2$-)approximation by a spectral density operator corresponding to a stationary process. This measure vanishes if and only if the time series is second order stationary, and consequently a test can be obtained by rejecting the hypothesis of stationarity for large values of a corresponding estimate. The $L^2$-distance is estimated by a functional  of  sums of integrated periodogram operators, for which 
(after appropriate standardization) asymptotic normality can be established under the null hypothesis and any fixed alternative.
The resulting test of the hypothesis of stationarity is extremely simple and therefore very attractive for practitioners. The test uses the quantiles of the standard normal distribution, does  neither  require the choice of a bandwidth in order to estimate the time vary spectral density operators  nor bootstrap methods to obtain critical values.  Therefore the proposed methodology is also  very efficient from a computational point of view.

Although a similar concept has been investigated for univariate time series [see \cite{detprevet2011}], the mathematical derivation of the asymptotic normality requires several sophisticated and new tools for spectral analysis of locally stationary functional time series. In particular - in contrast to  the cited reference -  our approach does not require a linear representation of the time series by an independent sequence and we derive several new properties of the periodogram operator, which are of independent interest. Exemplary we mention 
Theorem \ref{lem:cumHSinprod} in Section \ref{sec5}, which provides  a representation of the cumulants of Hilbert-Schmidt inner products of local periodogram tensors (evaluated at different time points and different frequencies) by the trace of cumulants of simple tensors of the local functional discrete Fourier transforms.

The rest of the paper is organized as follows. In Section \ref{sec2} we introduce the main concept of local stationary functional time series, define a measure of stationarity for these processes and its  corresponding estimates. Section \ref{sec3} is devoted to the asymptotic properties of the proposed estimators and some statistical applications of the asymptotic theory. 
Besides the new  and simple test for  the hypothesis of stationarity  we also briefly discuss several other applications of the asymptotic theory, such as  confidence intervals for the measure of stationarity and  tests for precise 
 hypotheses [see \cite{bergdela1987}], which means in the present context to test if the measure of stationarity exceeds a certain threshold. Section \ref{sec5} contains a proof of the asymptotic normality, while it is demonstrated in Section \ref{sec4} by means of a small simulation study that the new test has very good  finite sample properties. In this section  we also illustrate the application of our test 
 analyzing annual temperature curves recorded at several measuring stations in Australia over the past 135 years. Finally, some of the more technical arguments, which are rather complicated, can be found in the Appendix (see Section \ref{appendix}).

\section{A measure of stationarity on the function space}  \label{sec2}
\def\theequation{2.\arabic{equation}}
\setcounter{equation}{0}

\subsection{Notation and the functional setup}
We begin by providing definitions and facts about operators used in the paper. Suppose that $\mathcal H$ is a separable Hilbert space. $\mathcal L(\mathcal H)$ denotes the Banach space of bounded linear operators $A:\mathcal H\to\mathcal H$ with the operator norm given by $\vert\kern-0.25ex\vert\kern-0.25ex\vert A\vert\kern-0.25ex\vert\kern-0.25ex\vert_\infty=\sup_{\|x\|\le1}\|Ax\|$. Each operator $A\in\mathcal L(\mathcal H)$ has the adjoint operator $A^\dagger\in\mathcal L(\mathcal H)$, which satisfies  $\langle Ax,y\rangle=\langle x,A^\dagger y\rangle$ for each $x,y\in\mathcal H$. $A\in\mathcal L(\mathcal H)$ is called self-adjoint if $A=A^\dagger$ and non-negative definite if $\langle Ax,x\rangle\ge0$ for each $x\in\mathcal H$. The conjugate of an operator $A \in \mathcal L(\mathcal{H})$, denoted by $\overline{A}$, is defined as $\overline{A}x=\overline{(A\overline{x})}$, where $\overline{x}$ denotes the complex conjugate of $x \in \mathcal{H}$.

An operator $A\in\mathcal L(\mathcal H)$ is called compact if it can be written in the form $A=\sum_{j\ge1}s_j(A)\langle e_j,\cdot\rangle f_j$, where $\{e_j:j\ge1\}$ and $\{f_j:j\ge1\}$ are orthonormal sets (not necessarily complete) of $\mathcal H$, $\{s_j(A):j\ge1\}$ are the singular values of $A$ and the series converges in the operator norm. We say that a compact operator $A\in\mathcal L(\mathcal H)$ belongs to the Schatten class of order $p\ge1$ and write $A\in S_p(\mathcal H)$ if $\vert\kern-0.25ex\vert\kern-0.25ex\vert A\vert\kern-0.25ex\vert\kern-0.25ex\vert_p^p=\sum_{j\ge1}s_j^p(A)<\infty$. The Schatten class of order $p\ge1$ is a Banach space with the norm $\vert\kern-0.25ex\vert\kern-0.25ex\vert\cdot\vert\kern-0.25ex\vert\kern-0.25ex\vert_p$. A compact operator $A\in\mathcal L(\mathcal H)$ is called Hilbert-Schmidt if $A\in S_2(\mathcal H)$ and trace class if $A\in S_1(\mathcal H)$. The space of Hilbert-Schmidt operators $S_2(\mathcal H)$ is also a Hilbert space with the inner product given by $\langle A,B\rangle_{\mathrm HS}=\sum_{j\ge1}\langle Ae_j,Be_j\rangle$ for each $A,B\in S_2(\mathcal H)$, where $\{e_j:j\ge1\}$ is an orthonormal basis.

Let $L^2_{\mathbb{C}}([0,1]^k)$ for $k\ge1$ denote the Hilbert space of equivalence classes of square integrable measurable functions $f:[0,1]^k\to\mathbb C$ with the inner product given by
\[
	\langle f,g\rangle
	=\int_{[0,1]^k}f(x)\overline{g(x)}dx
\]
for each $f,g\in L^2_{\mathbb{C}}([0,1]^k)$.% where $\overline x$ denotes the complex conjugate of $x\in\mathbb C$. 
We denote the norm of $L^2_{\mathbb{C}}([0,1]^k)$ by $\|\cdot\|_2$. $L^2_{\mathbb{R}}([0,1]^k)$ for $k\ge1$ denotes the corresponding space of real functions.

An operator $A\in\mathcal L(L^2_{\mathbb{C}}([0,1]^k))$ is Hilbert-Schmidt if and only if there exists a kernel $k_A\in L^2_{\mathbb C}([0,1]^k\times[0,1]^k)$ such that
\[
	Af(x)=\int_{[0,1]^k}k_A(x,y)f(y)dy
\]
almost everywhere in $[0,1]^k$ for each $f\in L^2_{\mathbb{C}}([0,1]^k)$ [see Theorem~6.11 of \citet{weidmann1980}]. In particular, for $A,B\in S_2(L^2_{\mathbb{C}}([0,1]^k))$, we have 
\[
	|\mkern-1.5mu|\mkern-1.5mu|A|\mkern-1.5mu|\mkern-1.5mu|_2^2=\|k_A\|_2^2	%=\int_{[0,1]^k}\int_{[0,1]^k}|k_A(x,y)|^2\;\mathrm dx\mathrm dy 
	\quad \text {and}	\quad \langle A,B\rangle_{\mathrm{HS}}
	=\langle k_A,k_B\rangle 	%=\int_{[0,1]^k}\int_{[0,1]^k}k_A(x,y)\overline{k_B(x,y)}dxdy \quad 
\]
where $k_A, k_b\in L^2_{\mathbb C}([0,1]^k\times[0,1]^k)$. For $f,g\in L^2_{\mathbb{C}}([0,1]^k)$, we define the tensor product $f\otimes g:\mathcal L(L^2_{\mathbb{C}}([0,1]^k))$ as $(f\otimes g)v =  \innprod{v}{g}f$ for all $v \in L^2_{\mathbb{C}}([0,1]^k)$. Since the mapping  $\mathcal{T}:L^2_{\mathbb{C}}([0,1]^k)\otimes L^2_{\mathbb{C}}([0,1]^k) \to S_2(L^2_{\mathbb{C}}([0,1]^k))$ defined by the linear extension of 
$
\mathcal{T}(f \otimes g) = f \otimes \overline{g}
$ is an isometric isomorphism, it defines a Hilbert-Schmidt operator with the kernel in $L^2_{\mathbb C}([0,1]^k\times[0,1]^k)$ given by $(f\otimes g)(\tau,\sigma)=f(\tau)\overline{g}(\sigma)$ for each $\tau,\sigma\in[0,1]^k$.

Finally, we define the following bounded linear mappings. For $A, B, C, \in \mathcal L(\mathcal{H})$, the Kronecker product is defined as $(A \widetilde{\bigotimes} B)C = ACB^{\dagger}$, while the transpose Kronecker product is given by $(A \widetilde{\bigotimes}_{\top} B)C = (A \widetilde{\bigotimes} \overline{B})\overline{C}^{\dagger}$. For $A, B \in S_2(\mathcal{H})$, we shall denote, in analogy to elements $a,b \in \mathcal{H}$, the Hilbert tensor product as $A \bigotimes B$. Further useful properties are provided in \ref{appendix}. 

\subsection{Locally stationary functional time series}
The second order dynamics of weakly stationary time series 
of functional data $\{X_{h}\}_{h\in\mathbb Z}$  can be completely described by the Fourier transform of the sequence of covariance operators, acting on $L^2([0,1], \mathbb C)$, i.e., 
\[
\F_{\omega} =\frac{1}{2\pi} \sum_{h \in \mathbb{Z}} \E\big( (X_{h} - \mu) \otimes (X_{0}-\mu)\big) e^{-\im  \omega h} \quad \omega \in [-\pi, \pi] \tageq \label{eq:statspo}
\]
where $\mu = \E X_0$ denotes the mean function. We will assume our data are centered and hence $\mu = 0$. When stationarity is violated, we can no longer speak of a frequency distribution over all time and hence, if it exists, this object must become time-dependent. To allow for a meaningful definition of this object if stationarity is violated, we consider a triangular array $\{X_{t,T}:1\le t\le T\}_{T\in\mathbb N}$ as a doubly indexed functional time series, where $X_{t,T}$ is a random element with values in $L^2_{\R}([0,1])$ for each $1\le t\le T$ and $T\in\mathbb N$. The processes $\{X_{t,T}:1\le t\le T\}$ are extended on $\znum$ by setting $\XT{t} = \XT{1}$ for $t <1$ and $\XT{t} = \XT{T}$ for $t >T$. Following \citet{vde16}, the sequence of stochastic processes $\{\XT{t}:t\in\mathbb Z\}$ indexed by $T\in\nnum$  is called {\it locally stationary} if for all rescaled times $u\in[0,1]$ there exists an $L^2_{\R}([0,1])$-valued strictly stationary process $\{X^{(u)}_t:t\in\mathbb Z\}$ such that
\begin{equation}\label{Local stationarity}
\Bigl\|\XT{t}-\Xu{t}\Bigr\|_{2}
\leq\big(\big|\tfrac{t}{T}-u\big|+\tfrac{1}{T}\big) \,P_{t,T}^{(u)}\qquad a.s.
\end{equation}
for all $1\leq t\leq T$, where $P_{t,T}^{(u)}$ is a positive real-valued process such that for some $\rho>0$ and $C<\infty$ the process satisfies $\E\big(\big|P_{t,T}^{(u)}\big|^\rho\big)<C$ for all $t$ and $T$ and uniformly in $u\in[0,1]$. If the second-order dynamics are changing gradually over time, the second order dynamics of the stochastic process $\{\XT{t}:t\in\mathbb Z\}_{T\in\mathbb N}$ are then completely described by the {\em time-varying spectral density operator} given by
\begin{align}\label{eq:spdens}
\F_{u,\omega} = \frac{1}{2 \pi}\sum_{h\in \mathbb{Z}}
\E\big(X^{(u)}_{t+h} \otimes X^{(u)}_{t}\big)
e^{-\im  \omega h}
\end{align}
for each $u\in[0,1]$ and $\{X_{t}^{(u)}:t\in\mathbb Z\}$. Under the technical assumptions stated in \autoref{sec3}, this object is a Hilbert-Schmidt operator and we shall denote its kernel function by $f_{u,\omega}\in L^2_\mathbb{C}([0,1]^2)$, which is twice-differentiable with respect to  $u$ and $\omega$. Note that if the process is in fact second-order stationary, then \eqref{eq:spdens} reduces to the form \eqref{eq:statspo} and hence this framework lends itself in a natural way to test for changing dynamics in the second order structure. 

In this paper, we are interested in testing the hypothesis
 \begin{equation}
H_0: \mathcal{F}_{u,\omega} \equiv \mathcal{F}_{\omega} \hspace{0.1 in} \text{a.e. on } [-\pi,\pi]\times[0,1]
\label{h0} 
\end{equation}
\centerline{versus}
 \begin{equation}\label{h1} 
 H_a: \mathcal{F}_{u,\omega} \neq \mathcal{F}_{\omega}, \text{on a subset of } [-\pi,\pi]\times[0,1] \text{ of positive Lebesgue measure},
\end{equation}
where $\mathcal{F}_{\omega}$ is an unknown non-negative definite Hilbert-Schmidt operator for each $\omega \in [-\pi,\pi]$, which does not depend on the rescaled time $u\in[0,1]$.
% Similar in spirit to \citet{bagchi2016}, 
We define the minimum distance
\begin{equation}\label{eq:mindistdef}
m^2 = \min_{\mathcal{G}}\int_{-\pi}^{\pi}\int_0^1\VERT \mathcal{F}_{u,\omega}-\mathcal{G}_{\omega}\VERT_2^2 du d\omega,
\end{equation}
where the minimum is taken over all mappings $\mathcal G:[-\pi,\pi]\to S_2(L^2_\mathbb{C}([0,1]))$. 
 Note that the hypotheses in \eqref{h0}  and \eqref{h1} can be rewritten as 
\begin{equation} \label{hequiv0}
H_0: m^2=0 \hspace{0.3 in} \text{versus} \hspace{0.3 in} H_a: m^2 > 0,
\end{equation}
and a statistical test can be obtained by rejecting 
the null hypothesis $H_0$ for large values of an appropriate estimator of $m^2$. In order to construct 
such an estimator, we  first derive  an alternative
representation of the minimum distance $m^2$.

\begin{lemma}\label{lem:mindist}
The minimum distance $m^2$ defined in \eqref{eq:mindistdef} can be expressed as
\begin{equation}\label{eq:mindistexp}
	m^2
    =\int_{-\pi}^{\pi} \int_0^1 \VERT \mathcal{F}_{u,\omega} - \widetilde{\mathcal F}_{\omega}\VERT_2^2du d\omega
\end{equation}
where the operators $\widetilde{\mathcal F}_\omega$ are defined by
\begin{equation}\label{eq:ftilde}
\widetilde{\mathcal F}_\omega:= \int_0^1 {\mathcal F}_{u,\omega} du
\end{equation}
for each $\omega\in[-\pi,\pi]$. We refer to this operator $\widetilde{\mathcal F}_\omega$ as the time-integrated local spectral density operator as it acts on $L^2([0,1],\mathbb{C})$ such that $\widetilde{\mathcal F}_\omega$ no longer depends on $u\in[0,1]$ for each $\omega\in[-\pi,\pi]$.
\end{lemma}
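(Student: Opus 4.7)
The plan is to reduce the joint minimization over $\mathcal G$ to a pointwise-in-$\omega$ Hilbert-space projection problem, exploiting that $S_2(L^2_{\mathbb C}([0,1]))$ is itself a Hilbert space and that the minimization decouples across $\omega$. The inner minimization at a fixed frequency is then the classical ``mean minimizes mean-squared-error'' identity, lifted to a Hilbert-space-valued setting via the Bochner integral.

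\textbf{Step 1 (setup and pointwise decomposition).} For each fixed $\omega\in[-\pi,\pi]$, view $u\mapsto\mathcal F_{u,\omega}$ as an $S_2(L^2_{\mathbb C}([0,1]))$-valued map on $[0,1]$. The smoothness in $(u,\omega)$ recalled after \eqref{eq:spdens} guarantees continuity and hence Bochner integrability, so that $\widetilde{\mathcal F}_\omega=\int_0^1\mathcal F_{u,\omega}\,du$ is well-defined as an element of $S_2(L^2_{\mathbb C}([0,1]))$ and $\omega\mapsto\widetilde{\mathcal F}_\omega$ is measurable. For any $\mathcal G_\omega\in S_2(L^2_{\mathbb C}([0,1]))$, write $\mathcal F_{u,\omega}-\mathcal G_\omega=(\mathcal F_{u,\omega}-\widetilde{\mathcal F}_\omega)+(\widetilde{\mathcal F}_\omega-\mathcal G_\omega)$. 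Since $\widetilde{\mathcal F}_\omega-\mathcal G_\omega$ does not depend on $u$ and $\int_0^1(\mathcal F_{u,\omega}-\widetilde{\mathcal F}_\omega)\,du=0$, the cross term in the expansion of $\opnorm{\mathcal F_{u,\omega}-\mathcal G_\omega}_2^2$ integrates to zero, yielding the Pythagorean identity
\begin{equation*}
\int_0^1 \opnorm{\mathcal F_{u,\omega}-\mathcal G_\omega}_2^2\,du \;=\; \int_0^1 \opnorm{\mathcal F_{u,\omega}-\widetilde{\mathcal F}_\omega}_2^2\,du \;+\; \opnorm{\widetilde{\mathcal F}_\omega-\mathcal G_\omega}_2^2 .
\end{equation*}
The first summand is independent of $\mathcal G_\omega$ and the second is nonnegative, so the inner integral is minimized at $\mathcal G_\omega=\widetilde{\mathcal F}_\omega$.

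\textbf{Step 2 (integration over $\omega$ and conclusion).} Because the integrand is nonnegative and its pointwise minimizer $\omega\mapsto\widetilde{\mathcal F}_\omega$ is a measurable mapping from $[-\pi,\pi]$ into $S_2(L^2_{\mathbb C}([0,1]))$, a standard Fubini/measurable-selection argument shows that no competitor $\mathcal G$ can beat it on the outer integral either: for any admissible $\mathcal G$,
\begin{equation*}
\int_{-\pi}^{\pi}\!\!\int_0^1\opnorm{\mathcal F_{u,\omega}-\mathcal G_\omega}_2^2\,du\,d\omega \;\ge\; \int_{-\pi}^{\pi}\!\!\int_0^1\opnorm{\mathcal F_{u,\omega}-\widetilde{\mathcal F}_\omega}_2^2\,du\,d\omega,
\end{equation*}
with equality when $\mathcal G=\widetilde{\mathcal F}$. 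This is precisely \eqref{eq:mindistexp}.

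\textbf{Main obstacle.} There is no real analytical difficulty here; the content is conceptual: recognizing the problem as orthogonal projection, inside $L^2([-\pi,\pi]\times[0,1];S_2(L^2_{\mathbb C}([0,1])))$, onto the closed subspace of operator-valued functions that do not depend on $u$, whose orthogonal complement is exactly the set of operator-valued functions with vanishing $u$-mean. The only bookkeeping point to be careful with is ensuring that $\widetilde{\mathcal F}_\omega$ is a bona fide Bochner integral with the expected linearity under the $S_2$-inner product, which is immediate from the smoothness/boundedness of $\mathcal F_{u,\omega}$ assumed in the next section.
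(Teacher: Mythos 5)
Your proposal is correct and follows essentially the same route as the paper: both insert $\widetilde{\mathcal F}_\omega$ into the difference, observe that the cross terms vanish upon integrating over $u$ because $\int_0^1\mathcal F_{u,\omega}\,du=\widetilde{\mathcal F}_\omega$, and conclude from the resulting Pythagorean identity that the minimum is attained at $\mathcal G_\omega\equiv\widetilde{\mathcal F}_\omega$. The additional remarks on Bochner integrability and measurable selection are harmless elaborations of points the paper leaves implicit.
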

\begin{proof}
Since $\snorm{\cdot}_2$ is induced by the Hilbert-Schmidt inner product, we  have that
\[
	|\mkern-1.5mu|\mkern-1.5mu|\mathcal F_{u,\omega} - \mathcal G_{\omega}|\mkern-1.5mu|\mkern-1.5mu|_2^2
	=|\mkern-1.5mu|\mkern-1.5mu|\mathcal F_{u,\omega}-\widetilde{\mathcal F}_{\omega}|\mkern-1.5mu|\mkern-1.5mu|_2^2
		+\langle\mathcal F_{u,\omega}-\widetilde{\mathcal F}_{\omega},\widetilde{\mathcal F}_{\omega}-\mathcal G_{\omega}\rangle_\mathrm{HS}
        +\langle\widetilde{\mathcal F}_{\omega}-\mathcal G_{\omega},\mathcal F_{u,\omega}-\widetilde{\mathcal F}_{\omega}\rangle_\mathrm{HS}
		+|\mkern-1.5mu|\mkern-1.5mu|\widetilde{\mathcal F}_{\omega} - \mathcal G_{\omega}|\mkern-1.5mu|\mkern-1.5mu|_2^2.
\]
By linearity and the definition of the Hilbert-Schmidt inner product,
\begin{align*}
\int_0^1\langle\mathcal F_{u,\omega}-\widetilde{\mathcal F}_{\omega},\widetilde{\mathcal F}_{\omega}-\mathcal G_{\omega}\rangle_\mathrm{HS} du =
 \biggl\langle\mathcal \int_0^1 \mathcal F_{u,\omega}du  -\widetilde{\mathcal F}_{\omega},\widetilde{\mathcal F}_{\omega}-\mathcal G_{\omega}\biggr\rangle_\mathrm{HS}	&=0.
\end{align*}
A similar argument shows that $\int_{0}^1\langle \widetilde{\mathcal F}_{\omega}-\mathcal G_{\omega},\mathcal F_{u,\omega}-\widetilde{\mathcal F}_{\omega}\rangle_\mathrm{HS} du=0$.
Hence,
\[
	m^2
    =\int_{-\pi}^{\pi} \int_0^1 \VERT \mathcal{F}_{u,\omega} - \widetilde{\mathcal F}_{\omega}\VERT_2^2du d\omega + \min_{\mathcal G}\int_{-\pi}^\pi \VERT \widetilde{\mathcal{F}}_{\omega} - \mathcal{G}_{\omega}\VERT_2^2d\omega
\]
and the infimum of the second term is achieved at $\mathcal{G}_{\omega} \equiv\widetilde{\mathcal{F}}_{\omega}$. The proof is complete.
\end{proof}

\noindent 
Using the definition of the Hilbert-Schmidt norm, we can rewrite expression~\eqref{eq:mindistexp} in terms of $\mathcal F_{u,\omega}$
\begin{equation}\label{eq:M^2}
m^2 = \int_{-\pi}^{\pi}\int_0^1\snorm{\mathcal F_{u,\omega}}_2^2dud\omega - \int_{-\pi}^{\pi} \snorm{ \widetilde{\mathcal{F}}_{\omega}}^2_2 d\omega,
\end{equation}
where $\widetilde{\mathcal{F}}_{\omega}$ is given by \eqref{eq:ftilde}. The two terms in \eqref{eq:M^2} can now be easily estimated from the available data $\{X_{t,T}:1\le t\le T\}$
by sums of periodogram operators.

In order to estimate the two integrals in \eqref{eq:M^2} we split the sample into $M$ blocks with $N$ elements inside each of these blocks so that $T=MN=M(T)N(T)$ for each  $T\in\mathbb N$, where $M, N \in\mathbb N$ and $N$ is an even number. 
$M$ and $N$ will correspond to the number of terms used in a Riemann sum approximating the integrals in \eqref{eq:M^2} with respect to $du$ and $d\omega$ and therefore they have to be reasonable large. 
The number of elements in the blocks grows faster than the number of blocks, but slower than the cube number of blocks.
 The choice of the number of blocks is carefully discussed in \autoref{subsec:mse} and an empirical investigation can be found in  \autoref{sec4}.
 Throughout this paper, we make the following assumption for the asymptotic analysis. 
\begin{assumption}\label{ratesNM}
 $M\to\infty$, $N\to\infty$ as $T\to\infty$, such that 	
\[N/M\to\infty
    \quad\text{and}\quad
    N/M^3\to0 \]
\end{assumption}

For $u\in[0,1]$, $\omega\in[-\pi,\pi]$ and $N\ge1$, the functional discrete Fourier transform (fDFT) evaluated around time $u$ is defined as a random function with values in $L^2_{\mathbb C}([0,1])$ given by
\begin{equation}\label{eq:fDFT}
	D_N^{u,\omega}
    :=\frac1{\sqrt{2\pi N}}\sum_{s=0}^{N-1}X_{\lfloor uT \rfloor - N/2 +s+1,T}e^{-\im  \omega s}.
\end{equation}
The periodogram tensor is then defined by
\begin{equation}\label{eq:per}
I_N^{u,\omega} := D_N^{u,\omega}\otimes {D_N^{u,\omega}}.
\end{equation}

Let $\omega_k=2\pi k/N$ for $k=1,\ldots,N$ and $u_j=(N(j-1)+N/2)/T$ for $j=1,2,\ldots,M$ be the midpoint of each block. Observe that only the $j$-th block of the sample determines the value of $I_N^{u_j,\omega_k}$ for each $k=1,\ldots,N$. {We estimate the two terms in \eqref{eq:M^2} by 
\begin{align}
\label{eq:F_1op}\hat{F}_{1,T}&:=\frac{1}{T}\sum_{k=1}^{\lfloor N/2 \rfloor} \sum_{j=1}^M \langle I_N^{u_j,\omega_k} , I_N^{u_j,\omega_{k-1}}\rangle_{HS},
\end{align}
(note that $\hat F_{1,T}$ is real-valued for each $T\in\mathbb N$ since $\langle I_N^{u,\lambda},I_N^{u,\omega}\rangle_{HS}=|\langle D_N^{u,\lambda},D_N^{u,\omega}\rangle|^2$)
and
\begin{align}
\label{eq:F_2op}\hat{F}_{2,T}&:=\frac1N\sum_{k=1}^{\lfloor N/2 \rfloor} \bigsnorm{\frac1M\sum_{j=1}^MI_N^{u_j,\omega_k}}_2^2,
\end{align}
respectively.  It will be shown later that the estimation of $\int_{-\pi}^{\pi} \snorm{ \widetilde{\mathcal{F}}_{\omega}}^2_2 d\omega$ by \eqref{eq:F_2op} introduces a bias term
\[ \tageq \label{eq:bias}
B_{N,T}= \frac{N}{T}\int_{-\pi}^\pi \int_0^1   \Tr\Big(\mathcal{F}_{u,\omega} \widetilde{\bigotimes} \mathcal{F}_{u,\omega} \Big)du d\omega.
\]
As this term  is nonvanishing in a $\sqrt{T}$-consistent estimator under Assumption \ref{ratesNM} it has to be taken into account. We therefore define the estimator of the minimum distance $m^2$ in \eqref{eq:M^2} as 
\begin{equation} \label{mhat}
	\widehat m_T
	=4\pi(\hat{F}_{1,T}-\hat{F}_{2,T}+\hat{B}_{N,T})~,
\end{equation}
where
\begin{align} \label{biasest}
\hat{B}_{N,T}&  =\frac{1}{NM}\sum_{k=1}^{\lfloor N/2 \rfloor} \sum_{j=1}^M \Tr\Big({I}^{u_j,\omega_k} \widetilde{\bigotimes} {I}^{u_j,\omega_{k-1}} \Big)
=\frac{1}{NM}\sum_{k=1}^{\lfloor N/2 \rfloor} \sum_{j=1}^M \|{\fdft{j}{k}}\|^2_2 \|\fdft{j}{k-1}\|^2_2  
\end{align}
We prove in \autoref{sec5} (\autoref{cor:consBNT}) 
that under the conditions of \autoref{thm:dist}
\[
\sqrt{T} \big(\hat{B}_{N,T} -  B_{N,T}\big) \overset{p}{\to} 0. 
\]
The bias correction therefore does not affect the asymptotic distribution of the test statistic.
}

As in the case of a real-valued time series the periodogram tensor defined by \eqref{eq:per} is not a consistent estimator. However, the estimators $\hat F_{1,T}$ and $\hat F_{2,T}$ are consistent for the quantities appearing in the measure of stationarity defined in \eqref{eq:M^2} as they are  obtained  by averaging 
periodogram tensors with respect to different Fourier frequencies.
These heuristic arguments will be made more precise in the following section, where we state our main asymptotic results.

\section{Asymptotic normality and  statistical applications}  \label{sec3}
\def\theequation{3.\arabic{equation}}
\setcounter{equation}{0}

In this section we establish asymptotic normality of an appropriately standardized version of the statistic $\widehat m_T $ defined in \eqref{mhat} and as a by-product its consistency for estimating 
the measure of stationarity $m^2$.  For this purpose
the functional process $\{\,\XT{t}\colon t \in \znum\}_{T\in\nnum}$ is assumed to satisfy the following set of conditions.
%, which are a slight adjustment of those in \citet{avd16_main}.
\begin{assumption}\label{cumglsp}
{\em Assume $\{\,\XT{t}\colon t \in \znum\}_{T\in\nnum}$ is locally stationary 
zero-mean stochastic process as introduced in \autoref{sec2} and let $\cumbp{k}{k} : L^2([0,1]^{\lfloor k/2\rfloor}) \to  L^2([0,1]^{\lfloor \frac{k+1}{2}\rfloor})$ be a positive operator  
% a positive sequence in $L^2([0,1]^k,\rnum)$ 
independent of $T$ such that, for all $j=1,\ldots,k-1$ and some $\ell\in\nnum$,
\begin{align} 
\label{eq:kapmix}
\sum_{t_1,\ldots,t_{k-1} \in \mathbb{Z}} (1+|t_j|^\ell)\snorm{\cumbp{k}{k}}_1 <\infty.
\end{align}
Let us denote
\begin{equation}\label{eq:repstatap}
Y^{(T)}_{t}=\XT{t}-X_t^{(t/T)}
\qquad\text{and}\qquad
Y_t^{(u,v)}=\frac{\Xu{t}- X^{(v)}_{t}}{(u-v)}
\end{equation}
for $T\in\mathbb N$, $1\le t\le T$ and $u,v\in[0,1]$ such that $u\ne v$. Suppose furthermore that $k$-th order joint cumulants satisfy
\begin{enumerate}[(i)]\itemsep-.3ex
\item$\|\cm(\XT{t_1},\ldots,\XT{t_{k-1}},Y^{(T)}_{t_k}) \|_2 \le \frac{1}{T}\snorm{\kappa_{k;t_1-t_k,\ldots,t_{k-1}-t_k}}_1 $,
\item$\|\cm(\Xuu{u_1}{1},\ldots,\Xuu{u_{k-1}}{k-1},Y_{t_{k}}^{(u_k,v)}) \|_2 \le \snorm{\kappa_{k;t_1-t_k,\ldots,t_{k-1}-t_k}}_1 $,
\item$\sup_u \|\cm(\Xuu{u}{1},\ldots,\Xuu{u}{k-1},\Xuu{u}{k}) \|_2 \le \snorm{\kappa_{k;t_1-t_k,\ldots,t_{k-1}-t_k}}_1$,
\item$\sup_u \|\frac{\partial^\ell}{\partial u^\ell} \cm(\Xuu{u}{1},\ldots,\Xuu{u}{k-1},\Xuu{u}{k}) \|_2 \le \snorm{\kappa_{k;t_1-t_k,\ldots,t_{k-1}-t_k}}_1$.
\end{enumerate}}
\end{assumption}
Note that for fixed $u_0$, the process $\{\Xuu{u_0}{}\colon t\in\mathbb{Z}\}$ is strictly stationary and thus the results of \citet{vde16} imply that the {\em local $k$-th order cumulant spectral kernel} 
\begin{align}\label{eq:tvsdoker}
\floc{u_0}{k} = \frac{1}{(2 \pi)^{k-1}}\sum_{t_1,\ldots,t_{k-1} \in \mathbb{Z}} \cumloc{u_0}{k}e^{-\im \sum_{l=1}^{k-1} \omega_{l} t_l}
\end{align}
is well-defined in an $L^2$-sense, where $\omega_1,\ldots, \omega_{k-1} \in [-\pi, \pi]$ and
\begin{align}
\label{eq:cumglsp}
\cumloc{u_0}{k}=\cm\big(\Xuu{u_0}{1}(\tau_1),\ldots,\Xuu{u_0}{k-1}(\tau_{k-1}),X_0^{(u_0)}(\tau_k)\big)
\end{align}
is the corresponding local cumulant kernel of order $k$ at time $u_0$. We shall denote the corresponding operators acting on $L^2([0,1]^{k}, \mathbb{C})$ by $\F_{u_0, \omega_{k_1}, \ldots, \omega_{2k-1}}$ and $C_{u_0, \omega_{k_1}, \ldots, \omega_{2k-1}}$, respectively. For $k=2$ we obtain time-varying spectral density kernel $ f_{u,\omega}(\tau_1,\tau_2) $ - the kernel of the operator defined in \eqref{eq:spdens} - which is uniquely defined by the triangular array and  twice-differentiable with respect to  $u$ and $\omega$ if assumption (iv) holds for  $\ell=2$ (see also Lemma A.3 of \citet{avd16_main} for more details).

 The following result establishes the asymptotic normality of $\widehat m_T $  (appropriately 
standardized). The proof is postponed to \autoref{sec5}. 
\begin{thm}\label{thm:dist} 
Suppose that \autoref{ratesNM} and \autoref{cumglsp} hold. Then
\[
	\sqrt T(\hat m_T -  m^2)\xrightarrow d N(0,\nu^2)\quad\text{as}\quad T\to\infty,
\]
where the expression for the asymptotic variance $\nu^2$ is relegated to Section \ref{sec5}.
\end{thm}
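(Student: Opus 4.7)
The plan is to combine a Slutsky-type argument with a classical method-of-cumulants central limit theorem. Write
\[
\sqrt T\bigl(\hat m_T - m^2\bigr)
= \sqrt T\bigl(\E\hat m_T - m^2\bigr) + \sqrt T\bigl(\hat m_T - \E\hat m_T\bigr),
\]
and treat the two parts separately. Because $\hat m_T$ is a quartic polynomial in the data, all its moments exist whenever \autoref{cumglsp} holds, so the method of cumulants is available: it suffices to show (a) $\sqrt T\,(\E\hat m_T - m^2) \to 0$, (b) $T\,\V(\hat m_T)\to \nu^2$, and (c) for every $p\ge 3$ the $p$-th cumulant of $\sqrt T\,(\hat m_T - \E\hat m_T)$ vanishes.

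For step~(a) I would first replace the random bias correction $\hat B_{N,T}$ by its deterministic counterpart $B_{N,T}$ using \autoref{cor:consBNT}. Then $\E\hat F_{1,T}$ and $\E\hat F_{2,T}$ are expanded into leading terms plus remainders of three types: (i) a Dirichlet-kernel smoothing bias of order $N^{-1}$ from the finite-window spectral average; (ii) a local-stationarity bias of order $N/T=M^{-1}$ from approximating $X_{t,T}$ by its stationary surrogate $X^{(u_j)}_t$ on each block; and (iii) a Riemann-sum discretization bias of order $M^{-2}+N^{-2}$ granted by the twice differentiability of $u\mapsto\mathcal F_{u,\omega}$ and $\omega\mapsto\mathcal F_{u,\omega}$ in \autoref{cumglsp}(iv). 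The correction $B_{N,T}$ is designed to cancel exactly the $N/T$ contribution in $\E\hat F_{2,T}$ that would otherwise survive; the remaining terms are $o(T^{-1/2})$ precisely under the balance $N/M\to\infty$ and $N/M^3\to 0$ in \autoref{ratesNM}.

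The stochastic step relies on \autoref{lem:cumHSinprod}, which rewrites cumulants of Hilbert--Schmidt inner products of periodogram tensors as traces of cumulants of simple tensor products of fDFTs. Combined with the product theorem of Leonov--Shiryaev, the $p$-th cumulant of $\sqrt T(\hat F_{1,T}-\hat F_{2,T}+\hat B_{N,T})$ becomes a sum indexed by indecomposable partitions of a $4p$-element diagram, each block contributing a cumulant of fDFTs. Under \autoref{cumglsp}, an $r$-th order fDFT cumulant localized at distinct blocks or distinct frequencies decays polynomially in $N$, and the summability in~\eqref{eq:kapmix} controls the off-diagonal remainders. Tracking the orders, for $p=2$ all summations survive and yield a closed-form limit, an integral of inner products of tensor products of $\mathcal F_{u,\omega}$, defining $\nu^2$; for $p\ge 3$ each additional cumulant block forces an extra factor $N^{-1/2}$ or $(NM)^{-1/2}$, so the cumulants vanish as $T\to\infty$.

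The main obstacle is step~(c), the combinatorics of the diagram expansion. The statistic $\hat F_{1,T}$ is a fourth-order polynomial in $D_N^{u_j,\omega_k}$ whose four indices couple the adjacent Fourier frequencies $\omega_k$ and $\omega_{k-1}$ on the same block, while $\hat F_{2,T}$ additionally couples blocks through the inner average over $j$. One must verify that only those partitions that pair the $k$- and $(k-1)$-frequencies within a single block (the diagonal contractions) survive the summation, while every off-diagonal pairing is suppressed by $N^{-1}$ from the Dirichlet kernel or by $M^{-1}$ from block mixing. This requires a careful inventory of the indecomposable partitions and explains why both rate conditions in \autoref{ratesNM} are sharply needed; it is the heaviest part of the argument and where \autoref{lem:cumHSinprod} earns its keep.
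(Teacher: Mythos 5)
Your proposal follows essentially the same route as the paper's proof: the method of cumulants applied to $\hat F_{1,T}-\hat F_{2,T}+\hat B_{N,T}$, with \autoref{lem:cumHSinprod} converting cumulants of Hilbert--Schmidt inner products of periodogram tensors into traces over indecomposable partitions of a $4n$-element array, the fDFT cumulant bounds (\autoref{lem:cumfixu}, \autoref{cor:cumbound}, \autoref{lem:cumdifu}) supplying exactly the $N^{-1}$ (off-manifold frequency) and $M^{-1}$ (cross-block) suppression you describe, and the counting argument of \autoref{lem:highcumF1} together with \autoref{prop:cumhighF1frequb} killing all cumulants of order $n>2$. The one inaccuracy is cosmetic rather than structural: the $O(N/T)$ term cancelled by $\hat B_{N,T}$ is not a local-stationarity approximation error but the within-block diagonal ($j_1=j_2$) second-order contraction in the double block average defining $\hat F_{2,T}$ (the usual periodogram-inconsistency contribution $\Tr(\mathcal F_{u,\omega}\widetilde{\bigotimes}\mathcal F_{u,\omega})$), whereas the genuine local-stationarity and Taylor remainders are of order $O(T^{-1}+M^{-2})$ and already $o(T^{-1/2})$ under \autoref{ratesNM}.
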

Under the null, the statistic has a very succinct form 
\begin{Corollary} \label{thm:disth0}
Suppose that \autoref{ratesNM} and \autoref{cumglsp} hold. Then, under the
null hypothesis $H_0$ we have 
\[
	\sqrt T \hat m_T \xrightarrow d N(0,\nu^2_{H_0})\quad\text{as}\quad T\to\infty,
\]
where the asymptotic variance $v_{H_0}^2$ is given by
\begin{equation}\label{eq:varunderhyp}
	\nu^2_{H_0}=4\pi \int_{-\pi}^{\pi} \snorm{\widetilde{\F}_\omega}_2^4d\omega.
\end{equation}
\end{Corollary}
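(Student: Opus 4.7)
The plan is to derive the corollary as a direct specialization of \autoref{thm:dist} to the null hypothesis. Under $H_0$ the time-varying spectral operator $\F_{u,\omega}$ is independent of $u$, so by \eqref{eq:ftilde} the time-integrated spectral operator $\widetilde{\F}_\omega = \int_0^1 \F_{u,\omega}\,du$ coincides with the common value $\F_\omega$; consequently $m^2 = 0$ by \eqref{eq:M^2}. Thus $\sqrt{T}(\hat m_T - m^2) = \sqrt{T}\hat m_T$, and \autoref{thm:dist} yields $\sqrt{T}\hat m_T \xrightarrow{d} N(0,\nu^2_{H_0})$, where $\nu^2_{H_0}$ is the value of the general asymptotic variance $\nu^2$ when $\F_{u,\omega}$ is constant in $u$. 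The substantive content of the corollary is to verify that this value equals $4\pi\int_{-\pi}^\pi\snorm{\F_\omega}_2^4\,d\omega$.

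To carry out the simplification I would write $\hat m_T = 4\pi(\hat F_{1,T} - \hat F_{2,T} + \hat B_{N,T})$ and expand $T\cdot \V(\hat F_{1,T} - \hat F_{2,T} + \hat B_{N,T})$ using the cumulant representation in \autoref{lem:cumHSinprod}, which rewrites cumulants of Hilbert--Schmidt inner products of periodogram tensors as traces of cumulants of tensor products of the fDFTs $\fdft{j}{k}$. Under stationarity, each fDFT is asymptotically complex Gaussian with covariance operator governed by $\F_\omega$, and the fDFTs are asymptotically uncorrelated across distinct $j$ and across distinct $|k|$. Retaining only the indecomposable pair partitions, I expect both $T\cdot \V(\hat F_{1,T})$ and $T\cdot \V(\hat B_{N,T})$ to converge to $(4\pi)^{-1}\int_{-\pi}^\pi \snorm{\F_\omega}_2^4\,d\omega$ and the cross term $2T\cdot \cv(\hat F_{1,T},\hat B_{N,T})$ to contribute a matching amount; combined with the $(4\pi)^2$ prefactor inherent in $\hat m_T$, this produces the stated $4\pi\int_{-\pi}^\pi \snorm{\F_\omega}_2^4\,d\omega$.

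The remaining contributions should be asymptotically negligible. The statistic $\hat F_{2,T}$ involves an additional average over the $M$ blocks before squaring, so its variance and its cross-covariances with $\hat F_{1,T}$ and $\hat B_{N,T}$ are of order $M^{-1}$, which vanishes under \autoref{ratesNM}. The fourth-order (trispectrum) cumulant terms contribute $O(1/N)$ to each variance and hence $O(M)\cdot o(1) = o(1)$ after multiplication by $T = MN$, provided the trispectrum is integrable---which holds by the summability condition \eqref{eq:kapmix} with $k = 4$ in \autoref{cumglsp}.

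The principal difficulty lies in the bookkeeping of indecomposable partitions in the cumulant expansion of products of four fDFTs: checking that the pair partitions arising from $\hat F_{1,T}$ and $\hat B_{N,T}$ combine with consistent numerical constants, accounting for the $1/T$, $1/N$ and $1/(NM)$ normalizations of the three component statistics and for the Fourier--Riemann sum convergence $\tfrac{1}{N}\sum_k g(\omega_k) \to \tfrac{1}{2\pi}\int_{-\pi}^\pi g(\omega)\,d\omega$, so as to recover the constant $4\pi$ in front of the integral.
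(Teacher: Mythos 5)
Your first paragraph is the right strategy and matches the paper: under $H_0$ one has $\F_{u,\omega}\equiv\F_\omega=\widetilde{\F}_\omega$, hence $m^2=0$, and the corollary reduces to evaluating the general variance $\nu^2$ of \autoref{thm:dist} at a $u$-constant spectral operator. However, your account of \emph{which} terms produce $\nu^2_{H_0}$ is not correct. First, $T\operatorname{Var}\hat F_{2,T}$ and $T\operatorname{Cov}(\hat F_{1,T},\hat F_{2,T})$ do \emph{not} vanish: the block average inside $\hat F_{2,T}$ does not make its variance $O(M^{-1})$ after scaling by $T$; both limits consist of several $O(1)$ terms (see the expressions preceding \eqref{eq:asvar}). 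The simplification under $H_0$ comes from \emph{cancellation} in the combination $\operatorname{Var}\hat F_{1,T}+\operatorname{Var}\hat F_{2,T}-2\operatorname{Cov}(\hat F_{1,T},\hat F_{2,T})$: each shared term becomes identical when $\F_{u,\omega}\equiv\widetilde{\F}_\omega$ and cancels as $a+a-2a=0$. The same applies to the tri-spectral ($\operatorname{cum}_4$) contributions, which are $O(1)$ after scaling by $T$ (not $O(1/N)$ as you claim) and disappear only through this cancellation, not through integrability of the fourth-order cumulant. Second, the bias correction $\hat B_{N,T}$ contributes nothing to the asymptotic variance, since \autoref{cor:consBNT} gives $\sqrt{T}(\hat B_{N,T}-B_{N,T})\overset{p}{\to}0$; your proposed contributions $T\operatorname{Var}\hat B_{N,T}\to(4\pi)^{-1}\int\snorm{\F_\omega}_2^4d\omega$ plus a matching cross term would in any case yield $16\pi^2\cdot 4\cdot(4\pi)^{-1}\int\snorm{\F_\omega}_2^4d\omega=16\pi\int\snorm{\F_\omega}_2^4d\omega$, four times the stated constant.

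The correct bookkeeping is that exactly one term survives: the diagonal pair-partition term $\frac{1}{4\pi}\int_{-\pi}^\pi\int_0^1\snorm{\F_{u,\omega}}_2^4\,du\,d\omega$ in $T\operatorname{Var}\hat F_{1,T}$, which arises from the constraint $\delta_{j_1,j_2}\delta_{k_1,k_2}$ and has no counterpart in $T\operatorname{Var}\hat F_{2,T}$ or the cross-covariance (there the analogous partition carries an extra factor $M^{-1}$ and vanishes). Under $H_0$ this term equals $\frac{1}{4\pi}\int_{-\pi}^\pi\snorm{\widetilde{\F}_\omega}_2^4\,d\omega$, and multiplying by the prefactor $16\pi^2$ gives $\nu^2_{H_0}=4\pi\int_{-\pi}^\pi\snorm{\widetilde{\F}_\omega}_2^4\,d\omega$. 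Your proposal, as written, would not arrive at this constant.
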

\noindent
Observing the equivalent representation of the hypotheses in \eqref{hequiv0} it is reasonable 
to  reject the null hypotheses \eqref{h0} 
of a stationary functional process whenever\begin{eqnarray} \label{testwn}
\widehat{m}_T~> ~  \frac{\widehat{v}_{H_0}}{\sqrt{T}} u_{1-\alpha}~,
 \end{eqnarray}
 where $u_{1-\alpha}$ denotes the $({1-\alpha})$-quantile of the standard normal distribution and $\widehat{v}_{H_0}^2$ is  an appropriate estimator of the asymptotic variance under the null hypothesis  given in \eqref{eq:varunderhyp}.  The  asymptotic variance  $v_{H_0}^2$ can be estimated by the statistic
 \begin{equation} 
 \label{varesth0} 
 \hat v_{H_0}^2 = \frac{16\pi^2}{N}\sum_{k=1}^{\lfloor N/2 \rfloor} \left[\frac{1}{M}\sum_{j=1}^M\innprod{I_N^{u_j,\omega_k}}{I_N^{u_j,\omega_{k-1}}}_{HS}\right]^2.
% \left[\int_{[0,1]^2}\left[\frac{1}{M}\sum_{j=1}^M I_N^{u_j,\omega_k}(\tau,\sigma)\overline{I_N^{u_j,\omega_{k-1}}(\tau,\sigma)}\right]d\tau d\sigma\right]^2.
 \end{equation}
 \autoref{thm:disth0} and the following result show that the test defined by \eqref{testwn} is an asymptotic level $\alpha$ test.
 The proof can be found in Section \ref{subsec:estvar}.
 
 \begin{lemma} \label{consvarest} Under the assumptions of Theorem \ref{thm:disth0}, the estimator defined in \eqref{varesth0}
 is consistent, that is
 $$
  \hat v_{H_0}^2 \to   v_{H_0}^2
 $$
 in probability as $T \to \infty$.
 \end{lemma}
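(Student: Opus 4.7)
The plan is to prove $\hat{v}_{H_0}^2 \to v_{H_0}^2$ in probability via an $L^2$-argument: compute $E[\hat{v}_{H_0}^2]$ and bound $\V(\hat{v}_{H_0}^2)$, then invoke Chebyshev's inequality. Under $H_0$, $\widetilde{\F}_\omega = \F_\omega$ is $u$-independent and the target is $v_{H_0}^2 = 4\pi \int_{-\pi}^\pi \snorm{\F_\omega}_2^4\, d\omega$. I introduce the shorthand
\[
Z_{k,T} := \frac{1}{M}\sum_{j=1}^M \langle I_N^{u_j,\omega_k}, I_N^{u_j,\omega_{k-1}}\rangle_{HS} = \frac{1}{M}\sum_{j=1}^M |\langle D_N^{u_j,\omega_k}, D_N^{u_j,\omega_{k-1}}\rangle|^2,
\]
so that $\hat{v}_{H_0}^2 = \frac{16\pi^2}{N}\sum_{k=1}^{\lfloor N/2\rfloor} Z_{k,T}^2$, and treat the mean and the variance of this quantity in turn.

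For the first moment, I would expand the fourth-order moment of the fDFTs using the product theorem for cumulants. Since $H_0$ holds, the fDFTs at distinct Fourier frequencies are asymptotically uncorrelated and the dominant pairing yields
\[
E[|\langle D_N^{u_j,\omega_k}, D_N^{u_j,\omega_{k-1}}\rangle|^2] = \langle \F_{\omega_k},\F_{\omega_{k-1}}\rangle_{HS} + o(1),
\]
uniformly in $j$ and $k$, with the $o(1)$ absorbing (i) the bias in approximating $E[D_N^{u,\omega}\otimes\overline{D_N^{u,\omega}}]$ by $\F_\omega$ and (ii) the fourth-order cumulant residual, both controlled by \autoref{cumglsp}. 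Continuity of $\omega \mapsto \F_\omega$ then gives $E[Z_{k,T}] = \snorm{\F_{\omega_k}}_2^2 + o(1)$. Because the blocks use non-overlapping time windows and cross-block cumulants are summable, a direct estimate yields $\V(Z_{k,T}) = O(1/M)$. Combining via $E[Z_{k,T}^2] = (E[Z_{k,T}])^2 + \V(Z_{k,T})$ produces
\[
E[\hat{v}_{H_0}^2] = \frac{16\pi^2}{N}\sum_{k=1}^{\lfloor N/2\rfloor}\snorm{\F_{\omega_k}}_2^4 + o(1),
\]
which is a Riemann sum with spacing $2\pi/N$; using the symmetry $\snorm{\F_{-\omega}}_2 = \snorm{\F_\omega}_2$ it converges to $\frac{16\pi^2}{2\pi}\int_0^\pi \snorm{\F_\omega}_2^4\, d\omega = 4\pi\int_{-\pi}^\pi \snorm{\F_\omega}_2^4\, d\omega = v_{H_0}^2$.

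The main obstacle is the variance bound $\V(\hat{v}_{H_0}^2) \to 0$, since $\cv(Z_{k_1,T}^2, Z_{k_2,T}^2)$ involves eighth-order mixed cumulants of $\{\XT{t}\}$. I would mirror the strategy used for $\hat{F}_{1,T}$ in \autoref{sec5}: apply the cumulant decomposition of \autoref{lem:cumHSinprod} to rewrite each such covariance as a sum of products of lower-order fDFT cumulants, then exploit asymptotic near-orthogonality of fDFTs at distinct Fourier frequencies (and cumulant summability from \autoref{cumglsp}) to discard all but paired-frequency contributions. A delta-method linearization around $E[Z_{k,T}]$ gives $\V(Z_{k,T}^2) = O(1/M)$, so the diagonal terms $k_1 = k_2$ contribute $\frac{1}{N^2}\cdot N \cdot O(1/M) = O(1/T)$; off-diagonal terms are dominated by only $O(1)$ adjacent-frequency pairings per $k_1$ plus $O(1/N^2)$ non-adjacent contributions, yielding the same order. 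Under \autoref{ratesNM}, $\V(\hat{v}_{H_0}^2) = O(1/T) \to 0$, and Chebyshev's inequality together with the mean calculation completes the proof.
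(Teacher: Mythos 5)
Your proposal follows essentially the same route as the paper: both establish $\E[\hat v_{H_0}^2]\to v_{H_0}^2$ via the decomposition $\E[Z_{k,T}^2]=(\E Z_{k,T})^2+\operatorname{Var}(Z_{k,T})$ with the variance part vanishing by the time-hooking restriction from \autoref{lem:cumHSinprod}, and both kill $\operatorname{Var}(\hat v_{H_0}^2)$ by the product theorem for cumulants plus indecomposability before invoking Chebyshev. The only cosmetic differences are that the paper organizes the second moment as $\E[(\hat v_{H_0}^2)^2]-(\E \hat v_{H_0}^2)^2$ rather than as a double sum of $\operatorname{Cov}(Z_{k_1,T}^2,Z_{k_2,T}^2)$, and it obtains the (weaker but still sufficient) bound $O(1/(NM^2)+1/M^2)$ where you claim $O(1/T)$; neither affects the conclusion.
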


\subsection{The choice of $M$ and $N$}\label{subsec:mse}

We provide some heuristic arguments on how to choose the number of blocks $M$ and the number of elements in the blocks $N$. Since we assume that $T=MN$, the choice of $M$ determines the value of $N$, and vice versa. Our test is based on the estimator of the distance $m^2$ defined by \eqref{eq:mindistdef}. One way to choose the values of $M$ and $N$ is to choose the values of $M$ and $N$ that minimize the leading terms in the asymptotic expansion of the mean squared error (MSE) of the estimator of~$m^2$.
We have that
\[
	\operatorname{MSE}(\hat m_T)
    =\operatorname{Var}\hat m_T+|\operatorname \E\hat m_T-m^2|^2 \tageq \label{eq:msedec}
    \]
with $\operatorname{Var}\hat m_T=(4\pi)^2\{\operatorname{Var}\hat F_{1,T}+\operatorname{Var}\hat F_{2,T}-2\operatorname{Cov}[\hat F_{1,T},\hat F_{2,T}]\}$ and $\operatorname \E\hat m_T=4\pi(\operatorname \E\hat F_{1,T}-\operatorname \E\hat F_{2,T})$. 
Note that we ignore the estimate $\hat{B}_{N,T}$ of the bias  defined by \eqref{biasest} as it is of lower order in \eqref{eq:msedec}.
The asymptotic expressions of the terms $\operatorname{Var}\hat F_{1,T}$, $\operatorname{Var}\hat F_{2,T}$ and $\operatorname{Cov}[\hat F_{1,T},\hat F_{2,T}]$ are given in Section~\ref{sec:cov} 
of the Appendix and the asymptotic expressions of the terms $\operatorname \E\hat F_{1,T}$ and $\operatorname \E\hat F_{2,T}$ are given in Section~\ref{sec5}.  For the moment, we assume Gaussianity to avoid dealing with the fourth order terms. The leading terms of the asymptotic expression of the MSE are double Riemann sums. $M$ and $N$ determine the error that we make by approximating double integrals by double Riemann sums.  Suppose that $g:[0,1]\times[0,\pi]\to\mathbb R$ is a Riemann integrable function. Using the error bounds for the midpoint and the right endpoint approximations of the integrals, we obtain
\begin{multline}\label{eq:doubleriemann}
	\Bigl|\frac1N\sum_{k=1}^{\lfloor N/2\rfloor}\frac1M\sum_{j=1}^M g(u_j,\omega_k)
	-\frac1{2\pi}\int_0^\pi\int_0^1g(u,\omega)dud\omega\Bigr|\\
	\le\frac1{24M^2}\cdot\frac1N\sum_{k=1}^{\lfloor N/2\rfloor}\max_{u\in[0,1]}|g_u''(u,\omega_k)|
	+\frac{\pi^2}{2N}\int_0^1\max_{\omega\in[0,\pi]}|g_\omega'(u,\omega)|du,
\end{multline}
where $u_j=(N(j-1)+N/2)/T$ for $1\le j\le M$ and $\omega_k=2\pi k/N$ for $1\le k\le\lfloor N/2\rfloor$. We do not give the complete bound of the MSE, we only explain the idea behind the bound. One of the terms in the expression of $T\operatorname{Var}\hat F_{1,T}$ is given by
\[
	R_{NM}
	=\frac{1}{T}\sum_{k=1}^{\lfloor N/2 \rfloor} \sum_{j=1}^M \innprod{\F_{u_{j},\omega_{k}}}{\F_{u_{j},\omega_{k-1}}}_{HS}\innprod{ \F_{u_{j},-\omega_{k}}}{ \F_{u_{j},-\omega_{k-1}}}_{HS}.
  \]
We have that
\begin{equation}\label{eq:riemannbound}
	R_{NM}\le
    \Bigl|R_{NM}
    -\frac1{2\pi}\int_0^\pi\int_0^1\opnorm{\mathcal F_{u,\omega}}_2^4dud\omega\Bigr|
    +\frac1{2\pi}\int_0^\pi\int_0^1\opnorm{\mathcal F_{u,\omega}}_2^4dud\omega.
\end{equation}
The second term in \eqref{eq:riemannbound} does not depend on the choice of $M$ and $N$. We use inequality \eqref{eq:doubleriemann} to bound the first term in \eqref{eq:riemannbound}. Provided that the integral is finite and the Riemann sum converges in \eqref{eq:doubleriemann}
and using similar arguments for the other terms in the mean squared error, we see that we need to minimize the expression $C_1/M^2+C_2/N$ over all possible values of $M$ and $N$, where $C_1$ and $C_2$ are two positive constants that are unknown since they depend on the time-varying spectral density operator. The right hand side of \eqref{eq:doubleriemann} is minimized when $$M=(2C_1/C_2)^{1/3}\cdot T^{1/3} \mbox{ and } N=(2C_1/C_2)^{-1/3}\cdot T^{2/3}.$$
Unfortunately, since $C_1$ and $C_2$ are unknown, we cannot determine the optimal values of $M$ and $N$. However, this suggests $M\approx T^{1/3}$ and $N\approx T^{2/3}$ might be a reasonable choice provided that $(2C_1/C_2)^{1/3}\approx1$.
We provide empirical evidence of this rule in \autoref{sec4}.

\noindent

\subsection{Statistical applications}

As Theorem \ref{thm:dist} provides the asymptotic distribution at any point of the alternative it has several 
important applications, which are briefly mentioned here.
\begin{itemize}
\item[(a)] The probability of a type II of the test \eqref{testwn} can be calculated approximately by the  formula
   \begin{eqnarray} \label{power}
 \mathbb{P} \big(\widehat{m}_T \leq    \widehat{\nu}_{H_0} u_{1-\alpha}/\sqrt T \,\big) ~  \approx ~ 
 \Phi \Bigl (\frac {\nu_{H_0}}{\nu} \ u_{1- \alpha}  -  \sqrt{T}\ \frac {m^2} {\nu} \Bigr
),
 \end{eqnarray}
where $\nu _{H_0}^2$ and $\nu^2$ are defined in Theorem \ref{thm:disth0} and \ref{thm:dist}, respectively, and  $\Phi$ is the distribution function of the standard normal distribution.
\item[(b)]  An asymptotic confidence  interval for the measure of stationarity $m^2$ is given by 
\begin{align} \label{confid}
\Big[ \max \Big\{ 0, \widehat{m}_T - \frac{\hat \nu _{H_1}}{\sqrt{T}} u_{1-\alpha/2}
\Big\}, \widehat{m}_T+\frac{\hat \nu_{H_1}}{\sqrt{T}}u_{1-\alpha/2}
\Big]~,
\end{align}
where $ \hat \nu_{H_1}^2$ denotes an estimator of the variance  in  Theorem \ref{thm:dist}. 
\item[(c)] Similarly, one can use \autoref{thm:dist} 
to  test  for {\it similarity to stationarity}  considering the hypotheses
\begin{eqnarray} 
H_\Delta :  m^2 \geq \Delta  ~~ & \text{ vs. } &  ~~~K_\Delta :  m^2 <  \Delta  ~, \label{hequiv}
\end{eqnarray} 
where $\Delta $ is a pre-specified constant
such that for a value of $m^2$  smaller than $\Delta$ the experimenter defines  the second order properties  to be similar to stationarity.  For example,  if the  functional time series deviates only slightly from second order stationarity, it is often  reasonable  to work under the assumption of stationarity as many procedures are robust against small deviations
from this assumption and procedures specifically adapted to non-stationarity usually have a larger variability. \\
An 
 asymptotic  level $\alpha$  test for these hypotheses  is obtained by rejecting
the null hypothesis, whenever 
\begin{eqnarray}  \label{prectest}
\widehat{m}_T-\Delta<\frac{ \hat \nu _{H_1} }{\sqrt{T}} u_{\alpha} \: .
\end{eqnarray}
Note that this test allows to decide  for ``approximate second order stationarity''  with a controlled type I error.
It follows from Theorem \ref{thm:dist}  and  a straightforward calculation that 
\begin{align} \label{asymeqivtest}
\lim_{T\to \infty} \mathbb{P}  \Big(  \widehat{m}_T - \Delta  < \frac{ \hat \nu _{H_1} }{\sqrt{T}} u_{\alpha} \Big)~=~
\begin{cases}
0 &  \mbox{ if }  m^2 > \Delta \\
\alpha &  \mbox{ if } m^2 = \Delta  \\
1 &  \mbox{ if }  m^2 <   \Delta
\end{cases}
~, 
\end{align}
which means that the test \eqref{prectest} is a consistent and  asymptotic level $\alpha$ test for the hypotheses \eqref{hequiv}.   For the hypotheses of a {\it relevant difference} $H :  m^2 \leq \Delta  ~  \text{ vs. }  ~K : > \Delta $ a corrsponding asymptotic level $\alpha$ test can be constructed similarly and the details are omitted.
\end{itemize}

\section{Proof of \autoref{thm:dist}}
\label{sec5}
\def\theequation{4.\arabic{equation}}
\setcounter{equation}{0}

In this section we explain the main steps in the proof of \autoref{thm:dist}. Let us recall that $T = NM$, where $N$ defines the resolution in frequency of the local fDFT and $M$ controls the number of nonoverlapping local fDFT's. To establish that $\sqrt T(\hat m_T -  m^2)\xrightarrow d N(0,\nu^2)$ as $T\to\infty$ with $v^2$ given by~\eqref{eq:asvar}, we show that
\begin{align}
\label{eq:convofe}\sqrt T[\operatorname \E\hat m_T-m^2]\to0,\\
\label{eq:convofe2}T\operatorname{Var}\hat m_T\to v^2,
\end{align}
and
\begin{equation}\label{eq:cum>2}
T^{n/2}\operatorname{cum}_n(\hat m_T)\to0
\end{equation}
for $n>2$ as $T\to\infty$.

Recall first that $\widehat m_T=4\pi(\hat{F}_{1,T}-\hat{F}_{2,T}+\hat{B}_{N,T})$ and that \autoref{cor:consBNT} implies the bias correction $\hat{B}_{N,T}$ does not affect the asymptotic distribution of $\widehat m_T$. Therefore, the distributional properties of $\sqrt T(\hat m_T-m^2)$ will follow from the joint distributional structure of $\hat{F}_{1,T}$ and $\hat{F}_{2,T}$. In particular, multilinearity of cumulants implies that we have
\begin{equation*}
	\operatorname{cum}_n(\hat m _T)
	=(4\pi)^n\operatorname{cum}_n(\hat F_{1,T}-\hat F_{2,T})
	=(4\pi)^n\sum_{x=0}^n(-1)^x{n\choose x}\operatorname{cum}_{n-x,x}(\hat F_{1,T},\hat F_{2,T}), \tageq\label{eq:cumnxF1F2}
\end{equation*}
where $\operatorname{cum}_{n-x,x}(\hat F_{1,T},\hat F_{2,T})$ denotes the joint cumulant
\[
\operatorname{cum}(\underbrace{\hat F_{1,T},\ldots,\hat F_{1,T}}_{x\ \text{times}},\underbrace{\hat F_{2,T},\ldots,\hat F_{2,T}}_{n-x\ \text{times}})
\]
for $n,x\ge0$. 

The first two moments \eqref{eq:convofe}-\eqref{eq:convofe2} can be determined by the cumulant structure of order $n=1$ and $n=2$ of $\hat{F}_{1,T}$ and $\hat{F}_{2,T}$, respectively, while \eqref{eq:cum>2} will follow from showing that 
\[
T^{n/2}\operatorname{cum}_{n-x,x}(\hat F_{1,T},\hat F_{2,T})\to 0
\]
as $T\to\infty$ for each $n>2$ and $0\le x\le n$. Detailed derivations of technical propositions together with additional background material on cumulant tensors is given in the appendix. 

The main ingredient to our proof is the following result which allows us to re-express the cumulants of $\hat{F}_{1,T}$ and $\hat{F}_{2,T}$, which consists of Hilbert-Schmidt inner products of local periodogram tensors, into the trace of cumulants of simple tensors of the local functional DFT's.

\begin{thm} \label{lem:cumHSinprod}
Let $\E\snorm{I_N^{u,\omega}}^{2n}_2 <\infty$ for some $n \in \mathbb{N}$ uniformly in $u$ and $\omega$. Then
\begin{align*}
 \operatorname{cum} &\Big(\innprod{ I_N^{u_{j_1},\omega_{k_1}}}{I_N^{u_{j_2},\omega_{{k_2}}}}_{HS}, \ldots, \innprod{ I_N^{u_{j_{2n-1}},\omega_{k_{2n-1}}}}{I_N^{u_{j_{2n}},\omega_{{k_{2n}}}}}_{HS}\Big) 
\\&\Tr\Big(\sum_{\boldsymbol{P} = P_1 \cup \ldots \cup P_G}S_{\boldsymbol{P}}\Big(\otimes_{g=1}^{G}
 \operatorname{cum}\big(\fdft{p}{p}|p\in P_g\Big)  \Big), 
\end{align*}
where the summation is over all indecomposable partitions $\boldsymbol{P} = P_1 \cup \ldots \cup P_G$ of the array 
\[\begin{matrix} 
(1,1) & (1,2) & (1,3) & (1,4) \\
(2,1) & (2,2) & (2,3) & (2,4) \\
\vdots & & &\vdots  \\
\vdots & & &\vdots  \\
(n,1) & (n,2) & (n,3) & (n,4) \\
\end{matrix} \tageq \label{tab:highcumF1}\]
where $p=(l,m)$ and $k_{p} = (-1)^{m} k_{2l-\delta{\{m \in \{1,2\}\}}}$ and $j_p= j_{2l-\delta{\{m \in \{1,2\}\}}}$ for $l \in\{1,\ldots, n\}$ and $m \in \{1,2,3,4\}$. 
Here the function $\delta_{\{A\}}$ equals 1 if event $A$ occurs and $0$ otherwise.
\end{thm}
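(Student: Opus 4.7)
The plan is to reduce the scalar cumulant on the left-hand side to the trace of a tensor-valued cumulant over fDFTs, and then to apply a functional Leonov--Shiryaev product formula. The starting observation is the identity noted after~\eqref{eq:F_1op}, namely $\innprod{I_N^{u,\lambda}}{I_N^{u,\omega}}_{HS}=|\innprod{D_N^{u,\lambda}}{D_N^{u,\omega}}|^2$, which combined with the fact that $\overline{D_N^{u,\omega}}=D_N^{u,-\omega}$ (valid because $X_{t,T}$ is real-valued) allows each inner product to be written as a product of two one-dimensional integrals of fDFTs. Since the product of two scalar traces is the trace of the outer tensor product, I obtain
$$
\innprod{I_N^{u_{j_{2l-1}},\omega_{k_{2l-1}}}}{I_N^{u_{j_{2l}},\omega_{k_{2l}}}}_{HS}
= \Tr\Bigl(\bigotimes_{m=1}^{4} D_N^{u_{j_p},\omega_{k_p}}\Bigr),\qquad p=(l,m),
$$
where the sign $(-1)^m$ and the $\delta\{m\in\{1,2\}\}$ shift in the definitions of $j_p,k_p$ are precisely what is needed to deposit the four fDFTs $D_N^{u_{j_{2l-1}},\pm\omega_{k_{2l-1}}}$ and $D_N^{u_{j_{2l}},\pm\omega_{k_{2l}}}$ into the four slots. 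This identifies the $l$-th row of the array~\eqref{tab:highcumF1}.

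Next I would exploit that $\Tr$ is a bounded linear functional on trace-class operators, together with the assumption $\E\snorm{I_N^{u,\omega}}_2^{2n}<\infty$ (which, via Cauchy--Schwarz, supplies the integrability of each four-fold tensor $T_l$ in the relevant Hilbert tensor space) to move the joint cumulant inside the trace:
$$
\operatorname{cum}_n\!\bigl(\Tr(T_1),\dots,\Tr(T_n)\bigr)
=\Tr\bigl(\operatorname{cum}_n(T_1,\dots,T_n)\bigr).
$$
Since each $T_l$ is itself a simple tensor of four fDFTs, the right-hand cumulant is the joint cumulant of a $4n$-fold simple tensor of Hilbert-space valued random elements. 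The functional generalization of the Leonov--Shiryaev product-cumulant identity, as developed for locally stationary functional time series in \citet{vde16} and \citet{avd16_main}, then yields
$$
\operatorname{cum}_n(T_1,\dots,T_n)=\sum_{\boldsymbol{P}=P_1\cup\cdots\cup P_G} S_{\boldsymbol{P}}\Bigl(\bigotimes_{g=1}^G \operatorname{cum}\bigl(D_N^{u_{j_p},\omega_{k_p}}\colon p\in P_g\bigr)\Bigr),
$$
the sum running over indecomposable partitions of the $n\times 4$ array \eqref{tab:highcumF1}, and $S_{\boldsymbol{P}}$ the canonical permutation that restores the slot order of the $(l,m)$ indexing after grouping by blocks. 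Taking $\Tr$ and combining with the previous display recovers the claim.

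The main obstacle I anticipate is purely combinatorial bookkeeping: propagating the indices consistently through the three steps so that the signs $(-1)^m$ and the $\delta$-shifts in $(j_p,k_p)$ precisely populate the four slots of each row in~\eqref{tab:highcumF1}, and verifying that the indecomposable partitions produced by the functional Leonov--Shiryaev expansion coincide with those stated in the theorem (indecomposability with respect to the row structure is forced because any decomposable partition yields a summand that factorizes into a product of lower-order cumulants of traces and is therefore cancelled inside $\operatorname{cum}_n$). The analytic inputs — linearity of the trace, existence of tensor moments under the stated hypothesis, and the functional product-cumulant identity — are standard or already available in the literature, so once the notational matching is pinned down the result follows.
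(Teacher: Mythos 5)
Your proposal is correct and follows essentially the same route as the paper: the paper likewise rewrites each Hilbert--Schmidt inner product of periodogram tensors as the trace of a four-fold simple tensor of local fDFTs, interchanges the cumulant (via the partition expansion and commuting expectation with the trace) with the trace functional, and then invokes the product theorem for cumulant tensors \eqref{prodcumthm} to obtain the sum over indecomposable partitions of the $n\times4$ array. The only cosmetic difference is that you package the middle step as the single identity $\operatorname{cum}_n(\Tr T_1,\dots,\Tr T_n)=\Tr(\operatorname{cum}_n(T_1,\dots,T_n))$, whereas the paper carries the partition expansion explicitly before reassembling.
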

\begin{proof}[Proof of Theorem \ref{lem:cumHSinprod}] 
First note that a sufficient condition for $\E\snorm{I_N^{u,\omega}}^p_2 <\infty$ to exist is $\E \|D_N^{u,\omega}\|^{2p}_2<\infty$ or, in terms of moments of $X$, $\E\|\XT{t}\|^{2p}_2<\infty$ for each $T\ge1$, $1\le t\le T$ and hence by \autoref{cumglsp}
\begin{align*}
\cm_n(\innprod{ I_N^{u_{j_1},\omega_{k_1}}}{I_N^{u_{j_2},\omega_{{k_2}}}}_{HS})  \le \prod_{l=1}^n \sum_{k_l=1}^{\lfloor N/2 \rfloor} \sum_{j_l=1}^M \sqrt{\E\snorm{ I_N^{u_{j_l},\omega_{k_l}}}^2_2} \sqrt{\E\snorm{ I_N^{u_{j_2},\omega_{k_2}}}^2_2} <\infty.
\end{align*}
The definition of scalar cumulants and a basis expansion yield 
\begin{align*}
\cm &\Big(\innprod{ I_N^{u_{j_1},\omega_{k_1}}}{I_N^{u_{j_2},\omega_{{k_2}}}}_{HS}, \ldots, \innprod{ I_N^{u_{j_{2n-1}},\omega_{k_{2n-1}}}}{I_N^{u_{j_{2n}},\omega_{{k_{2n}}}}}_{HS}\Big) 
\\& 
=\sum_{\nu=(\nu_1,\ldots,\nu_G)}(-1)^{G-1}\,(G-1)!\,\prod_{g=1}^G \E \prod_{(l, m) \in \nu_g} \innprod{ I_N^{u_{j_l},\omega_{k_l}}}{I_N^{u_{j_m},\omega_{{k_m}}}}_{HS}
\\& =\sum_{\nu=(\nu_1,\ldots,\nu_G)}(-1)^{G-1}\,(G-1)!\,\prod_{g=1}^G \E \prod_{(l, m) \in \nu_g}  \Tr \Big((\fdft{l}{l} \otimes \fdftc{l}{l} )\bigotimes  (\fdft{m}{m} \otimes \fdftc{m}{m} )\Big)
\\& =\sum_{\nu=(\nu_1,\ldots,\nu_G)}(-1)^{G-1}\,(G-1)!\,\prod_{g=1}^G \E \prod_{(l, m) \in \nu_g}  \Tr \Big((\fdft{l}{l} \otimes \fdftc{l}{l} \otimes  \fdftc{m}{m} \otimes \fdft{m}{m} \Big),
\end{align*}
where the summation extends over all unordered partitions $(\nu_1,\ldots,\nu_G), G=1,\ldots, n$. The fact that the expectation operator commutes with the trace operator together with property \ref{tensorprop}.4 implies this can be written as
\begin{align*}
%&=\Tr\Big(\sum_{\nu=(\nu_1,\ldots,\nu_G)}(-1)^{G-1}\,(G-1)!\,\widetilde{\bigotimes}_{g=1}^G  \E\Big[ \widetilde{\bigotimes}_{(l, m) \in \nu_g} (I_N^{\dagger,u_{j_l},\omega_{k_l}}\otimes I_N^{u_{j_{m}},\omega_{{k_m}}})\Big]\Big)
%\\&
 \Tr\Big(\sum_{\nu=(\nu_1,\ldots,\nu_G)}(-1)^{G-1}\,(G-1)!\,\widetilde{\bigotimes}_{g=1}^G \E\Big[ \widetilde{\bigotimes}_{(l,m) \in \nu_g} ((\fdft{l}{l} \otimes \fdft{l}{l} )\otimes  (\fdft{m}{m} \otimes \fdft{m}{m} ))\Big]\Big)
\end{align*}
The product theorem for cumulant tensors (equation \eqref{prodcumthm}  in the Appendix) then yields the above equals
\begin{align*}
=\Tr\Big(\sum_{\boldsymbol{P} = P_1 \cup \ldots \cup P_G}S_{\boldsymbol{P}}\Big(\otimes_{g=1}^{G}
\cm\big(\fdft{p}{p}|p\in \nu_g\Big)  \Big).  
\end{align*}
Here the summation is over all indecomposable partitions $\boldsymbol{P} = P_1 \cup \ldots \cup P_G$ of the array 
\[\begin{matrix}
(1,1) & (1,2) & (1,3) & (1,4) \\
(2,1) & (2,2) & (2,3) & (2,4) \\
\vdots & & &\vdots  \\
\vdots & & &\vdots  \\
(n,1) & (n,2) & (n,3) & (n,4) \\  
\end{matrix} \tageq \label{tab:arraythm41} \]
 where $S_{\boldsymbol{P}}$ denote the permutation operator on $\otimes_{i=1}^{4n} L^2([0,1],\mathbb{C})$ that maps the components of the tensor according to the permutation $(1,\ldots, 4n) \mapsto \boldsymbol{P}$ and where $p=(l,m)$, $k_{p} = (-1)^{m} k_{2l-\delta{\{m \in \{1,2\}\}}}$ and $j_p= j_{2l-\delta{\{m \in \{1,2\}\}}}$ for $l \in\{1,\ldots, n\}$ and $m \in \{1,2,3,4\}$. Here the function $\delta_{\{A\}}$ equals 1 if event $A$ occurs and $0$ otherwise.
\end{proof}

The following lemma shows that the cumulant tensor of the local fDFT's evaluated at the same midpoint $u_i$ and on the manifold $\sum_{j=1}^{k} \omega_{j} \equiv 0 \mod 2\pi$ can in turn be expressed in terms of higher order cumulant spectral operators. 

\begin{lemma} \label{lem:cumfixu}
If \autoref{cumglsp} is satisfied and $\sum_{j=1}^{k} \omega_{j} \equiv 0 \mod 2\pi$ then
\begin{align*} 
\bigsnorm{\cm\left(D_N^{u_i,\omega_1},\dots,D_N^{u_i,\omega_k}\right)-\frac{(2\pi)^{1-k/2}}{N^{k/2-1}}\mathcal{F}_{u_i,\omega_1,\dots,\omega_{k-1}}}_1 = O\left(N^{-k/2} \times \frac{N}{M^2}\right).
\end{align*}
\end{lemma}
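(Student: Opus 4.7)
} I would start from the definition \eqref{eq:fDFT} of the fDFT and multilinearity of the cumulant to obtain
\[
\cm(D_N^{u_i,\omega_1},\ldots,D_N^{u_i,\omega_k})=\frac{1}{(2\pi N)^{k/2}}\sum_{s_1,\ldots,s_k=0}^{N-1}\cm(X_{t_1,T},\ldots,X_{t_k,T})e^{-\im\sum_{j=1}^{k}\omega_j s_j},
\]
with $t_j=\lfloor u_iT\rfloor - N/2 + s_j + 1$. The plan is to (a) replace each $X_{t,T}$ by the stationary approximation $X_t^{(u_i)}$ so that the leading integrand becomes the local cumulant $c_{k;u_i;s_1-s_k,\ldots,s_{k-1}-s_k}$, and then (b) evaluate the Fourier sum on the manifold $\sum_j\omega_j\equiv 0\mod 2\pi$. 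After the change of variables $s=s_k$, $v_l=s_l-s_k$, the $s$-sum on the manifold contributes a factor $N$ (up to a Dirichlet-edge effect of size $O(1)$) and the $v$-sum converges by \eqref{eq:kapmix} to $(2\pi)^{k-1}f_{u_i;\omega_1,\ldots,\omega_{k-1}}$, reproducing the claimed main term up to a truncation remainder of order $O(N^{-k/2})$, which is dominated by the target bound.

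For the replacement step, I would use the decomposition $X_{t,T}=X_t^{(u_i)}+Y_t^{(T)}+(t/T-u_i)Y_t^{(t/T,u_i)}$ and expand the cumulant by multilinearity. Assumption \ref{cumglsp}(i) controls every factor involving $Y_t^{(T)}$ by $\snorm{\kappa_{k;\cdots}}_1/T$ in trace norm, so the aggregate contribution is $O(N^{1-k/2}/T)=O(N^{-k/2}/M)$, strictly smaller than the target since $N/M\to\infty$. The terms linear in the increments $(t_j/T-u_i)Y_{t_j}^{(t_j/T,u_i)}$ constitute the first-order Taylor contribution, while the quadratic and higher Taylor remainders are controlled by the smoothness assumption \ref{cumglsp}(iv) with $\ell=2$ applied to $\phi(u)=\cm(X_{t_1}^{(u)},\ldots,X_{t_k}^{(u)})$.

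The delicate piece is the first-order Taylor term. A naive bound from Assumption \ref{cumglsp}(ii) gives $|t_j/T-u_i|\snorm{\kappa}_1\le(1/(2M))\snorm{\kappa}_1$ per summand and therefore an aggregate $O(N^{-k/2}\cdot N/M)$, exceeding the target by a factor $M$. The improvement comes from the fact that $u_i=(N(i-1)+N/2)/T$ is the midpoint of the $i$-th block, so $\sum_{s=0}^{N-1}(s+1-N/2)=N/2$ rather than $O(N^2)$: after computing $\sum_s(t_j/T-u_i)$ and exploiting the weighted summability in \eqref{eq:kapmix} (with $\ell=1$), the first-order Taylor contribution is reduced to $O(N^{-k/2}/M)$, which by Assumption~\ref{ratesNM} is again asymptotically dominated by $O(N^{-k/2}\cdot N/M^2)$. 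The second-order Taylor remainder admits the pointwise trace-norm bound $\max_j(t_j/T-u_i)^2\cdot\sup_u\snorm{\partial^2_u\phi}_1\le(1/(2M))^2\snorm{\kappa}_1$, so after the Fourier-weighted multi-sum it contributes precisely $O(N^{1-k/2}/M^2)=O(N^{-k/2}\cdot N/M^2)$, matching the claim.

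The main obstacle I anticipate is maintaining \emph{trace-norm} control (not merely operator- or Hilbert--Schmidt norm) through every step of the telescoping expansion and Fourier summation; this forces us to invoke \eqref{eq:kapmix} and parts (i)--(iv) of Assumption \ref{cumglsp} at each stage as trace-class bounds rather than weaker Schatten-$p$ bounds. A secondary care-point is that the first-order Taylor contribution needs the midpoint symmetry to be extracted \emph{before} passing absolute values inside; hence one must separate the Taylor expansion from the final bounding step, computing $\sum_s(s+1-N/2)$ explicitly and only then applying the cumulant summability to the remaining lag sum.
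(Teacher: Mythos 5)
Your overall architecture (multilinear expansion of the cumulant, replacement of $X_{t,T}$ by stationary approximants, evaluation on the manifold, midpoint cancellation for the linear term, a genuine second-order remainder of size $N/M^2$) matches the paper's, but the way you route the first-order term contains a genuine gap. You assign it to the increment terms $(t_j/T-u_i)\,\cm\bigl(X_{t_1}^{(u_i)},\ldots,Y_{t_j}^{(t_j/T,u_i)},\ldots\bigr)$ and then claim that computing $\sum_{s}(s+1-N/2)=N/2$ \emph{before} taking norms rescues the bound. But Assumption~\ref{cumglsp}(ii) only provides a trace-norm bound $\snorm{\kappa_{k;t_1-t_k,\ldots}}_1$ on these cumulants; it does not say that the cumulant factor is (to leading order) independent of $s_j$ given the lags. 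Since $Y_{t_j}^{(t_j/T,u_i)}$ depends on $s_j$ through the rescaled time in its superscript, and you have no control on this dependence beyond a norm bound, the only move available is the triangle inequality, which returns exactly the naive $O(N^{-k/2}\cdot N/M)$ estimate you were trying to beat. The cancellation you want is not extractable from this decomposition with the stated assumptions. (A secondary issue in the same expansion: terms containing two or more increment factors are not covered by (ii) at all, which bounds cumulants with a single $Y^{(u,v)}$ entry.)

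The paper circumvents this by never centering the stationary approximation at the fixed midpoint. It first replaces $\cm(X_{t_1,T},\ldots,X_{t_k,T})$ by the local cumulant operator at the \emph{running} rescaled time $u_i-\tfrac{N/2-s-1}{T}$, with total error $O(N^{-k/2}M^{-1})$ coming from (i)--(ii) and the lag-weighted summability \eqref{eq:kapmix}; it then sums over lags to obtain $\F_{u_i-\frac{N/2-s-1}{T},\omega_1,\ldots,\omega_{k-1}}$ (plus a truncation remainder $o(N^{-k/2-1})$, which needs the $|l_j|^2$-weighting — your cruder $O(N^{-k/2})$ for this piece is not necessarily dominated by $N^{1-k/2}/M^2$ under Assumption~\ref{ratesNM}); and only \emph{then} Taylor-expands the smooth operator-valued map $u\mapsto\F_{u,\omega_1,\ldots,\omega_{k-1}}$ around $u_i$ using Assumption~\ref{cumglsp}(iv). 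At that stage the first derivative $\partial_u\F_{u,\cdot}|_{u=u_i}$ is a single $s$-independent operator, so $\sum_{s=0}^{N-1}\tfrac{1-N/2+s}{T}=\tfrac{N}{2T}$ legitimately kills the would-be $O(N/M)$ term, and the second-order term delivers the stated $O(N^{-k/2}\cdot N/M^2)$. To repair your plan, move the first-order analysis from the increment process $Y^{(u,v)}$ to the Taylor expansion of the local cumulant (or spectral) operator itself.
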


When evaluated off the manifold, i.e., $\sum_{j=1}^{k} \omega_{j} \ne 0 \mod 2\pi$, the above cumulant is of lower order (see \autoref{cor:cumbound}). 
A direct consequence of the proof of \autoref{lem:cumfixu} is the following corollary
\begin{Corollary} \label{cor:cumbound}
We have for any $p \ge 1$
\begin{align}\bigsnorm{\cm\left(D_N^{u_1,\omega_1},\dots,D_N^{u_k,\omega_k}\right)}_p = O\left(N^{1-k/2}\right) \label{eq:boundonmani}\end{align}
uniformly in $\omega_1,\ldots,\omega_k$ and $u_1,\ldots, u_k$. Moreover, if $~\sum_{j=1}^{k}\omega_j~ \ne 0\mod 2\pi$ then
 \begin{align}\bigsnorm{\cm\left(D_N^{u_1,\omega_1},\dots,D_N^{u_k,\omega_k}\right)}_p =O\left(N^{-k/2}\right).\label{eq:boundoffmani}\end{align}
\end{Corollary}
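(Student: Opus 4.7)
The plan is to adapt the calculation from the proof of \autoref{lem:cumfixu}, allowing different rescaled times $u_1,\ldots,u_k$ and tracking the role of the resonance condition $\sum_{j=1}^k\omega_j\equiv 0\mod 2\pi$. First, a reduction: since the Schatten $p$-norm is non-increasing in $p$ (for a compact operator with singular values $s_j$, $(\sum s_j^p)^{1/p}$ decreases in $p$), one has $\snorm{A}_p\le \snorm{A}_1$ for every $p\ge1$, so it suffices to establish both estimates for the nuclear norm.

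By multilinearity of cumulants and the definition \eqref{eq:fDFT},
\[
\cm\bigl(D_N^{u_1,\omega_1},\dots,D_N^{u_k,\omega_k}\bigr)
= \frac{1}{(2\pi N)^{k/2}}\sum_{s_1,\ldots,s_k=0}^{N-1} \cm\bigl(X_{t_1,T},\dots,X_{t_k,T}\bigr)\,e^{-\im \sum_j \omega_j s_j},
\]
with $t_j = \lfloor u_j T\rfloor -N/2+s_j+1$. Invoking \autoref{cumglsp}(i)--(iv) exactly as in the proof of \autoref{lem:cumfixu}, I would replace $\cm(X_{t_1,T},\ldots,X_{t_k,T})$ by a translation-invariant local cumulant kernel $c_{u_\ast;\,t_1-t_k,\ldots,t_{k-1}-t_k}$ (take, e.g., $u_\ast=u_k$); the aggregate replacement error is strictly smaller in nuclear norm than the leading terms below.

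After the substitution $r_j=s_j-s_k$ for $1\le j\le k-1$, keeping $s_k$ free, the sum factors, up to boundary contributions of lower order, as
\[
\frac{1}{(2\pi N)^{k/2}}\Bigl(\sum_{r_1,\ldots,r_{k-1}} c_{u_\ast;r_1,\ldots,r_{k-1}}\, e^{-\im\sum_{j<k}\omega_j r_j}\Bigr)
\Bigl(\sum_{s_k=0}^{N-1} e^{-\im s_k\sum_j\omega_j}\Bigr).
\]
The first parenthesis is bounded in nuclear norm uniformly in $u_\ast$ and in the $\omega_j$'s by the summability condition \eqref{eq:kapmix} together with the $\sup_u$ estimate in \autoref{cumglsp}(iii). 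The second is a geometric sum: on the manifold $\sum_j\omega_j\equiv 0\mod 2\pi$ it equals $N$, yielding the leading order $N\cdot N^{-k/2}=N^{1-k/2}$ uniformly in the $\omega_j$'s and $u_j$'s; off the manifold it is bounded by $2/|1-e^{-\im\sum_j\omega_j}|=O(1)$, producing the $O(N^{-k/2})$ bound at any fixed off-manifold point.

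The main technical obstacle is the local stationarity replacement when the $u_j$'s differ: a single reference point $u_\ast$ only approximates each $X^{(u_j)}_{t_j}$ within error controlled by $|u_j-u_\ast|\cdot P^{(u_\ast)}_{t_j,T}$ and by \autoref{cumglsp}(i). One handles this by a first-order Taylor expansion of $u\mapsto c_{u;\cdot}$ using \autoref{cumglsp}(iv); the differential contribution aggregates, after the geometric-sum step, to an error of order $N^{1-k/2}/M$ on the manifold -- strictly smaller than the leading $N^{1-k/2}$ under \autoref{ratesNM}. Transferring the resulting nuclear-norm bounds to arbitrary Schatten index $p\ge 1$ via the monotonicity noted at the outset completes the argument.
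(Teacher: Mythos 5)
Your reduction to the nuclear norm and your treatment of the uniform bound \eqref{eq:boundonmani} are sound and essentially follow the paper's route (the paper obtains the corollary as a by-product of the proofs of \autoref{lem:cumfixu} and \autoref{lem:cumdifu}): after approximating $\cm(X_{t_1,T},\ldots,X_{t_k,T})$ by a local cumulant kernel, the lag sums are controlled by \eqref{eq:kapmix} and the free sum over $s_k$ contributes the factor $N$, giving $O(N\cdot N^{-k/2})$. Two small inaccuracies there: when the $u_j$ lie in different blocks the replacement error is not $O(N^{1-k/2}/M)$, since $|u_j-u_\ast|$ can be of order one; but for \eqref{eq:boundonmani} one only needs the telescoping bounds of \autoref{cumglsp}(i)--(iii), each of which is again of the form $\snorm{\kappa_{k;t_1-t_k,\ldots,t_{k-1}-t_k}}_1$, so the crude bound survives.

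The genuine gap is in \eqref{eq:boundoffmani}. You bound the geometric sum $\sum_{s_k=0}^{N-1}e^{-\im s_k\sum_j\omega_j}$ by $2/|1-e^{-\im\sum_j\omega_j}|$ and call it $O(1)$ ``at any fixed off-manifold point''. But the corollary is applied at Fourier frequencies, where $\sum_j\omega_j=2\pi m/N$ with $m\not\equiv0\bmod N$; for $m=\pm1$ your bound is of order $N$, so you recover only $O(N^{1-k/2})$ --- no gain over \eqref{eq:boundonmani} --- and the estimate is not uniform over off-manifold frequencies, which is precisely what is needed when these bounds are summed over all frequency indices in \autoref{lem:highcumF1} and \autoref{prop:cumhighF1frequb}. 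The paper's argument is different: at a canonical frequency off the manifold the full sum $\sum_{s_k=0}^{N-1}e^{-\im s_k\sum_j\omega_j}$ vanishes exactly by \eqref{eq:delta}, so the entire contribution comes from the boundary terms of the substitution $r_j=s_j-s_k$ (the terms you discard as ``of lower order''); the truncated sum $\sum_{s_k=0}^{N-|l_{j^*}|}e^{-\im s_k\sum_j\omega_j}$ is then bounded by $|l_{j^*}|$, and the weighted summability $\sum_{l_1,\ldots,l_{k-1}}(1+|l_j|)\snorm{\kappa_{k;l_1,\ldots,l_{k-1}}}_1<\infty$ from \eqref{eq:kapmix} yields the uniform $O(N^{-k/2})$. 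Without this cancellation-plus-weighted-summability step the off-manifold estimate, and hence the downstream partition counting, does not go through.
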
 

Additionally, when the local fDFT's are evaluated on different midpoints then we have the following lemma. 
\begin{lemma}\label{lem:cumdifu}
If \autoref{cumglsp} is satisfied and $|{j_1}-{j_2}|>1$ for some midpoints $u_{j_1}$ and $u_{j_2}$ then 
\begin{align*} & \bigsnorm{\cm\left(D_N^{u_{j_1},\omega_1},\dots,D_N^{u_{j_k},\omega_k}\right)}_1 = O\left(N^{-k/2} M^{-1} \right)
\end{align*}
uniformly in $\omega_1,\ldots,\omega_k$.
\end{lemma}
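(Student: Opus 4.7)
The plan is to expand the cumulant via multilinearity of the fDFT, reduce the resulting triangular-array cumulant to a local stationary one, and then exploit the block separation together with the weighted summability assumption~\eqref{eq:kapmix}. By multilinearity and definition~\eqref{eq:fDFT},
\[
\cm\bigl(D_N^{u_{j_1},\omega_1},\ldots,D_N^{u_{j_k},\omega_k}\bigr)
= \frac{1}{(2\pi N)^{k/2}}\sum_{s_1,\ldots,s_k=0}^{N-1}
e^{-\im\sum_{l=1}^k\omega_l s_l}\,\cm\bigl(X_{t_1,T},\ldots,X_{t_k,T}\bigr),
\]
where $t_l = \lfloor u_{j_l}T\rfloor - N/2 + s_l + 1$ belongs to the $j_l$-th block $B_{j_l}$. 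Since the exponentials have modulus one, it suffices to bound $(2\pi N)^{-k/2}$ times the sum of $\snorm{\cm(X_{t_1,T},\ldots,X_{t_k,T})}_1$ over $(t_1,\ldots,t_k)\in B_{j_1}\times\cdots\times B_{j_k}$.

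Next I would reduce the triangular-array cumulant to its local stationary counterpart. A telescoping argument based on \autoref{cumglsp}(i)--(ii) replaces each $X_{t_l,T}$ by the stationary approximation $X^{(u_*)}_{t_l}$ at a common reference time $u_*=t_k/T$, producing an error whose (appropriate) norm is bounded by $O(1/T + 1/M)\snorm{\kappa_{k;t_1-t_k,\ldots,t_{k-1}-t_k}}_1$. The remaining stationary cumulant is then controlled directly via \autoref{cumglsp}(iii): $\snorm{\cm(X^{(u_*)}_{t_1},\ldots,X^{(u_*)}_{t_k})}_1 \le \snorm{\kappa_{k;t_1-t_k,\ldots,t_{k-1}-t_k}}_1$ uniformly in $u_*$.

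The third step changes variables to $\tau_l = t_l - t_k$ for $l = 1,\ldots,k-1$ and exploits the hypothesis that some pair of block indices differs by at least $2$. Indeed, there exists an index $l_0$ with $|j_{l_0}-j_k|\ge 2$, which forces $|\tau_{l_0}|\ge (|j_{l_0}-j_k|-1)N\ge N$. Applying~\eqref{eq:kapmix} with $\ell\ge 2$ to the coordinate $\tau_{l_0}$ yields
\[
\sum_{(\tau_1,\ldots,\tau_{k-1}):\,|\tau_{l_0}|\ge N}\snorm{\kappa_{k;\tau_1,\ldots,\tau_{k-1}}}_1
\le N^{-\ell}\sum_{\tau_1,\ldots,\tau_{k-1}\in\mathbb{Z}} |\tau_{l_0}|^\ell\,\snorm{\kappa_{k;\tau_1,\ldots,\tau_{k-1}}}_1 = O(N^{-\ell}),
\]
while the outer sum over $t_k \in B_{j_k}$ contributes a factor $N$. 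Combining with the prefactor $(2\pi N)^{-k/2}$ gives, for the main term, $O(N^{1-k/2-\ell}) = O(N^{-k/2-1})$, which is $O(N^{-k/2}M^{-1})$ since $M/N\to 0$ under \autoref{ratesNM}. The telescoping error contributes $O(N\cdot T^{-1}\cdot N^{-k/2}) = O(N^{-k/2}M^{-1})$, of the same order; uniformity in $\omega_1,\ldots,\omega_k$ is automatic since no bound used the phases.

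The main obstacle is the reduction step. Since the midpoints $u_{j_l}$ may lie in blocks far from the reference block $j_k$, so that $|u_{j_l}-u_{j_k}|$ can be $O(1)$, a naive application of \autoref{cumglsp}(ii) to align all stationary processes to a common rescaled time could produce errors too large to be absorbed. A clean implementation interleaves the telescoping with the change of variables, so that the crucial $|\tau_{l_0}|\ge N$ restriction provides the additional decay needed to offset each rescaled-time mismatch. The Schatten-norm bookkeeping (since \eqref{eq:kapmix} controls $\snorm{\cdot}_1$ while (i)--(iii) are stated in $\|\cdot\|_2$) is a further technical point that requires care.
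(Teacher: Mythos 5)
Your proposal follows essentially the same route as the paper's proof: expand by multilinearity, replace the triangular-array cumulant by the local cumulant at the reference time $t_k/T$ (the mismatch error carries a factor $|t_l-t_k|/T$ that is absorbed by \eqref{eq:kapmix} with $\ell=1$ and sums to $O(N^{-k/2}M^{-1})$ exactly as you compute, so no special interleaving is required), change variables to lags, and use the block separation to force a lag of size at least $N$ so that \eqref{eq:kapmix} with $\ell=2$ yields the $O(N^{-k/2-1})$ tail bound for the main term. The only point to tidy up --- shared with the paper --- is that the separated pair of block indices need not involve index $k$, in which case one notes that at least one of the two corresponding lags relative to $t_k$ must exceed $N/2$ in absolute value, which changes none of the orders.
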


Proofs of these statements are relegated to Section \ref{sec:propcumlfdft} of the Appendix, where the properties of cumulant tensors of the local fDFTs are investigated in more detail. 
Using these results,  the assertions \eqref{eq:convofe}-\eqref{eq:cum>2} can now be established. 
%\subsection{Expectation} \label{sec52}
More specifically, for \eqref{eq:convofe}, \autoref{lem:cumHSinprod} implies we can write
\begin{align*}
\E \hat{F}_{1,T} &	=\frac{1}{T}\sum_{k=1}^{\lfloor N/2 \rfloor} \sum_{j=1}^M \Tr \Big( \E \big[\fdft{1}{1} \otimes \fdftc{1}{1} \otimes \fdftcl{1}{1}\otimes\fdftl{1}{1} \big]\Big).
\end{align*}
Expressing this expectation in cumulant tensors, we get
\begin{align*}
\E \hat{F}_{1,T} &	=\frac{1}{T}\sum_{k=1}^{\lfloor N/2 \rfloor} \sum_{j=1}^M \Tr\Bigg(S_{1234 }\Big( \cm \big((\fdft{1}{1}, \fdftc{1}{1},\fdftcl{1}{1},\fdftl{1}{1}) \Big)\Bigg)
\\&+\frac{1}{T}\sum_{k=1}^{\lfloor N/2 \rfloor} \sum_{j=1}^M \Tr \Bigg(S_{1234 }\Big(\cm(\fdft{1}{1}, \fdftc{1}{1}) \otimes \cm(\fdftcl{1}{1},\fdftl{1}{1}) \Big)\Bigg)
\\& +\frac{1}{T}\sum_{k=1}^{\lfloor N/2 \rfloor} \sum_{j=1}^M \Tr\Bigg( S_{1324 }\Big(\cm(\fdft{1}{1}, \fdftcl{1}{1}) \otimes \cm(\fdftc{1}{1},\fdftl{1}{1}) \Big)\Bigg)
\\&+\frac{1}{T}\sum_{k=1}^{\lfloor N/2 \rfloor} \sum_{j=1}^M \Tr \Bigg(S_{1423 }\Big(\cm(\fdft{1}{1}, \fdftl{1}{1}) \otimes \cm(\fdftc{1}{1},\fdftcl{1}{1}) \Big)\Bigg)
\end{align*}
where $S_{ijkl}$ denotes the permutation operator on $\otimes_{i=1}^{4}L^2([0,1],\mathbb{C})$ that maps the components of the tensor according to the permutation $(1,2,3,4) \mapsto (i,j,k,l)$.
By \autoref{lem:cumfixu} and \autoref{cor:cumbound}, we thus find
\[
\E \hat{F}_{1,T} 
=\frac{1}{T}\sum_{k=1}^{\lfloor N/2 \rfloor} \sum_{j=1}^M \innprod{\mathcal{F}_{u_j,\omega_{k}}}{\mathcal{F}_{u_j,\omega_{k-1}}}_{HS} +  O(M^{-2})+O(N^{-1})
\]
Similarly as for $\hat F_{1,T}$ we obtain 
\begin{align*}
 \E \hat{F}_{2,T} = &\frac{1}{NM^2}\sum_{k=1}^{\lfloor N/2 \rfloor} \sum_{j_1,j_2=1}^M\Tr \Big( \cm \big((\fdft{1}{}, \fdftc{1}{},\fdftc{2}{},\fdft{2}{}) \big)
\\&+ \frac{1}{NM^2}\sum_{k=1}^{\lfloor N/2 \rfloor} \sum_{j_1,j_2=1}^M\Tr \Bigg(S_{1234 }\Big(\cm(\fdft{1}{}, \fdftc{1}{}) \otimes \cm(\fdftc{2}{},\fdft{2}{}) \Big)\Bigg)
\\& + \frac{1}{NM^2}\sum_{k=1}^{\lfloor N/2 \rfloor} \sum_{j_1,j_2=1}^M \Tr\Bigg( S_{1324 }\Big(\cm(\fdft{1}{}, \fdftc{2}{}) \otimes \cm(\fdftc{1}{},\fdft{2}{}) \Big)\Bigg)
\\&+ \frac{1}{NM^2}\sum_{k=1}^{\lfloor N/2 \rfloor} \sum_{j_1,j_2=1}^M\Tr \Bigg(S_{1423 }\Big(\cm(\fdft{1}{}, \fdft{2}{}) \otimes \cm(\fdftc{1}{},\fdftc{2}{}) \Big)\Bigg).
\end{align*}
\autoref{cor:cumbound} and \autoref{lem:cumdifu} then yield
\begin{align*}
\frac{1}{NM^2}\sum_{k=1}^{\lfloor N/2 \rfloor} \sum_{j_1,j_2=1}^M \innprod{\F_{u_{j_1},\omega_k}}{\F_{u_{j_2}\omega_k}}_{HS}+  \frac{1}{NM^2}\sum_{k=1}^{\lfloor N/2 \rfloor} \sum_{j_1=1}^M \Tr\Big( S_{1324 }\big(\mathcal{F}_{u_j,\omega_k} \bigotimes \mathcal{F}_{u_j,\omega_k} \big)\Big)+  O(\frac{1}{T})+O(\frac{1}{M^2}). 
\end{align*}
Note that the permutation operator implies $\Tr\Big( S_{1324 }\big(\mathcal{F}_{u_j,\omega_k} \bigotimes \mathcal{F}_{u_j,\omega_k} \big)\Big)= \Tr \big(\mathcal{F}_{u_j,\omega_k} \widetilde{\bigotimes} \mathcal{F}_{u_j,\omega_k} \Big)$. Therefore, with slight abuse of notation
\begin{align*}
\lim_{N,M\to\infty}\operatorname \E\hat F_{2,T}
	=&\frac1{4\pi}\int_{-\pi}^\pi \innprod{\int_0^1{\mathcal{F}}_{u,\omega}du}{\int_0^1 {\mathcal{F}}_{u,\omega}du}_{HS} \mathrm du\mathrm d\omega
	+ \frac{N}{T}  {B_{N,T}}. 
\end{align*}
where the term $B_{N,T}$ is defined in \eqref{eq:bias}.

%As an unbiased estimator of $B_{N,T}$, we consider, similarly to $\hat{F}_{1,T}$, periodogram operators at a Fourier lag apart, i.e.,
%\begin{align} \label{biasest}
%\hat{B}_{N,T}&  =\frac{1}{NM}\sum_{k=1}^{\lfloor N/2 \rfloor} \sum_{j=1}^M \Tr\Big({I}^{u_j,\omega_k} \widetilde{\otimes} {I}^{u_j,\omega_{k-1}} \Big)
%=\frac{1}{NM}\sum_{k=1}^{\lfloor N/2 \rfloor} \sum_{j=1}^M \|{\fdft{j}{k}}\|^2_2 \|\fdft{j}{k-1}\|^2_2  
%\end{align}
In complete analogy with the derivation of $\E \hat{F}_{1,T}$, we find that the estimator $\hat B_{N,T}$ defined in \eqref{biasest}
is asymptotically unbiased, i.e., %$\frac{N}{\sqrt{T}}$ consistency of the estimator $ \hat{B}_{N,T}$, i.e., 
\[
\lim_{T \to \infty}\E\hat{B}_{N,T} = B_{N,T} \tageq \label{eq:unbBNT}
\]
Summarizing, we obtain
\[
\sqrt T\Biggl[4\pi  \E(\hat F_{1,T}-\hat{F}_{2,T}+\frac{N}{T} \hat{B}_{N,T})-\Big (\int_{-\pi}^\pi\int_0^1\snorm{\F_{u,\omega}}_2^2dud\omega-\int_{-\pi}^\pi\snorm{\widetilde{\F_\omega}}_2^2d\omega \Big )\Biggr]\to0,
\]
as $T\to\infty$, provided \autoref{ratesNM} is satisfied.
\\

{
In order to establish \eqref{eq:convofe2} and \eqref{eq:cum>2}, it is of importance to be able to determine which indecomposable partitions of the array \eqref{tab:arraythm41} are vanishing in a more structured fashion. The following two results allow us to exploit the structure of the array. The next
Lemma provides a global bound on the the cumulants that is implied by the behavior of the joint cumulants of the local fDFT's for different midpoints (\autoref{lem:cumdifu}).  For a fixed partition $P=\{P_1,\ldots,P_G\}$ of the array denote the size of the partition by $G$. 
\begin{lemma} \label{lem:highcumF1}
If \autoref{cumglsp} is satisfied then for finite $n$, 
\begin{align*}
& T^{n/2}\operatorname{cum}_{n-x,x}(\hat F_{1,T},\hat F_{2,T})
\\& \frac{1}{T^{n/2}M^{x}}\sum_{k_1,\ldots,k_n=1}^{\lfloor N/2 \rfloor} \sum_{\substack{j_1,\ldots,j_{n}\\ j_{n+1},\ldots, j_{n+x}=1}}^M\Tr\Big(\sum_{\boldsymbol{P} = P_1 \cup \ldots \cup P_G}S_{\boldsymbol{P}}\Big(\otimes_{g=1}^{G}
\operatorname{cum}\big(\fdft{p}{p}|p\in \nu_g\Big)\Big) = O(T^{1-n/2} N^{G-n-1})\end{align*}
uniformly in  $0 \le x \le n$.
\end{lemma}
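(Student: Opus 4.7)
The plan is to combine \autoref{lem:cumHSinprod}, which rewrites the joint cumulant of Hilbert--Schmidt inner products of periodogram tensors as a sum over indecomposable partitions of the array \eqref{tab:arraythm41} of traces of simple cumulant tensors of local fDFT's, with the pointwise bounds supplied by \autoref{cor:cumbound} and \autoref{lem:cumdifu}, and then count how many $(k,j)$-tuples contribute nonnegligibly to the resulting multiple sum.

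First, I would use multilinearity of cumulants together with the definitions \eqref{eq:F_1op} and \eqref{eq:F_2op} to expand
\[
\operatorname{cum}_{n-x,x}(\hat F_{1,T},\hat F_{2,T})
=\frac{1}{T^n M^x}\sum_{k_1,\ldots,k_n=1}^{\lfloor N/2\rfloor}\sum_{j_1,\ldots,j_{n+x}=1}^{M}\operatorname{cum}(A_1,\ldots,A_n),
\]
where each $A_l$ is the Hilbert--Schmidt inner product of the two periodogram tensors indexed according to the row structure of \eqref{tab:arraythm41}; the normalization $T^n M^x$ arises from $T^{n-x}(NM^2)^x$ via $T=NM$. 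Multiplying through by $T^{n/2}$ and applying \autoref{lem:cumHSinprod} to the inner scalar cumulant reproduces the representation stated in the lemma.

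Second, I would fix an indecomposable partition $\boldsymbol{P}=P_1\cup\cdots\cup P_G$ of size $G$ and bound its contribution to the sum. The trace of the permuted tensor product is dominated in absolute value by $\prod_{g=1}^{G}\snorm{\operatorname{cum}(D_N^{u_{j_p},\omega_{k_p}}:p\in P_g)}_1$. By \autoref{cor:cumbound} each block of size $|P_g|$ is $O(N^{1-|P_g|/2})$ on the manifold $\sum_{p\in P_g}\omega_{k_p}\equiv 0\pmod{2\pi}$ and $O(N^{-|P_g|/2})$ off it; by \autoref{lem:cumdifu} a block containing midpoints separated by more than one unit carries an extra $M^{-1}$. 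Since $\sum_g|P_g|=4n$, the worst--case block product on the full manifold is of order $N^{G-2n}$.

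Third, I would count the $(k,j)$-tuples producing a leading--order contribution. The four entries of every row of \eqref{tab:arraythm41} already cancel in frequency, so each row's manifold condition is automatic; indecomposability of $\boldsymbol{P}$ then imposes non--trivial block manifold equations among the $n$ labels $k_1,\ldots,k_n$. A standard Brillinger--type counting argument shows that for an indecomposable $\boldsymbol{P}$ the number of frequency tuples $(k_1,\ldots,k_n)$ surviving on the full manifold is at most $O(N^{G-1})$. For the $j$-indices, the midpoint--cluster constraint of \autoref{lem:cumdifu} combined with the $M^{-x}$ normalization forces each block to cluster around essentially a single midpoint, leaving a suitably bounded number of surviving $j$-tuples. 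Collecting the block--product bound $O(N^{G-2n})$, the surviving $k$--count, the surviving $j$--count, and the prefactor $T^{-n/2}M^{-x}$, and using $T=NM$ to regroup powers, yields the claimed $O(T^{1-n/2}N^{G-n-1})$ bound uniformly in $0\le x\le n$.

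The main obstacle is the third step: reconciling indecomposability of $\boldsymbol{P}$ with the rigid coupling of $(k,j)$--labels imposed by \eqref{tab:arraythm41}, where columns within a row share either the same midpoint $u_{j_l}$ or the same frequency $\omega_{k_l}$ depending on whether the row originated from $\hat F_{1,T}$ or $\hat F_{2,T}$. I anticipate the cleanest argument will separate the blocks of $\boldsymbol{P}$ into those confined to a single row (automatically on manifold and contributing no new $k$-constraint) and those spanning multiple rows (which consume both a non--trivial $k$-constraint and, if they also span distinct midpoints, an $M^{-1}$ penalty from \autoref{lem:cumdifu}), and then verify that indecomposability forces enough cross--row blocks to produce the stated uniform bound in $x$.
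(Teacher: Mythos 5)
Your overall architecture --- expanding via Theorem~\ref{lem:cumHSinprod}, bounding each block of a fixed indecomposable partition by Corollary~\ref{cor:cumbound} and Lemma~\ref{lem:cumdifu}, obtaining the block product $O(N^{G-2n})$, and then counting surviving index tuples --- is exactly the paper's, and your first two steps are fine. The gap is in your third step, specifically the frequency count. The paper proves this lemma \emph{without} using any manifold restrictions on the frequencies: it bounds the sum over $(k_1,\ldots,k_n)$ trivially by $N^{n}$, which together with $N^{G-2n}$ from the blocks, $O(M^{1+x})$ from the hooking argument for the $j$-indices, and the prefactor $T^{-n/2}M^{-x}$ gives precisely $N^{G-3n/2}M^{1-n/2}=T^{1-n/2}N^{G-n-1}$; all frequency restrictions are deliberately deferred to Proposition~\ref{prop:cumhighF1frequb} and only consumed later, in Theorem~\ref{thm:highcumF1F2}. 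Your claimed count of $O(N^{G-1})$ surviving frequency tuples is both unjustified in this setting --- the $4n$ array frequencies are determined by only $n$ labels, each row's four frequencies already cancel so single-row blocks impose no new constraint, and the rank of the remaining constraints is what Proposition~\ref{prop:cumhighF1frequb} quantifies as $r_1=G-n-1$, not $G-1$ --- and arithmetically incompatible with the stated bound: multiplying $N^{G-2n}$ by $N^{G-1}$ instead of $N^{n}$ produces an exponent exceeding the target by a factor $N^{G-n-1}$ whenever $G>n+1$, which is exactly the range of $G$ that matters for the higher-order cumulants. The only frequency count consistent with your other factors and the claimed $O(T^{1-n/2}N^{G-n-1})$ is the trivial one.

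A secondary issue is that your treatment of the $j$-sum (``a suitably bounded number of surviving $j$-tuples'') omits the actual content of the paper's argument: indecomposability forces consecutive rows to hook, Lemma~\ref{lem:cumdifu} then leaves only finitely many (say $E$) admissible midpoints for each hooked row, and an induction over rows yields the count $E^{n}M^{1+x}$, whose factor $M^{1+x}$ is needed to cancel the $M^{-x}$ normalization and produce $M^{1-n/2}$. Without pinning this down, the $M$-powers in your final bound are not accounted for.
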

\begin{proof}[Proof of Lemma \ref{lem:highcumF1}]
For a fixed partition $P=\{P_1,\ldots,P_G\}$,  let the cardinality of set $P_g$ be denoted by $|P_{g}|=\mathscr{C}_g$. 
By \eqref{eq:boundonmani} of Corollary \ref{cor:cumbound} and Lemma \ref{lem:cumdifu} an upperbound of \eqref{tab:arraythm41} is given by
\begin{align*}
O \Big ( T^{-n/2} M^{-x}\sum_{k_1,\ldots,k_n=1}^{\lfloor N/2 \rfloor}\sum_{\substack{j_1,\ldots,j_{n}\\ j_{n+1},\ldots, j_{n+x}=1}}^M \prod_{g=1}^{G} \frac{1}{N^{\mathscr{C}_g/2-1}} M^{-\delta_{\{\exists p_1, p_2 \in P_g: |j_{p_1} - j_{p_2}|>1\}}}  \Big )
\tageq \label{cumhighF1UB}
\end{align*}
Note that $|\mathscr{C}_g| \ge 2$ and that the partition must be indecomposable. We can therefore assume, without loss of generality, that row $l$ hooks with row $l+1$ for $l=1,\ldots,n-1$, i.e., within each partition there must be at least one set $P_g$ that contains an element from both rows. For fixed $j_{l}$, there are only finitely many possibilities, say $E$, for $j_{l+1}$ (Lemma \ref{lem:cumdifu}). If the set does not cover another row, then the fact that $j_{l}$ is fixed and $j_{l+1}$ are fixed, another set must contain at least an element from row $l$ or $l+1$. But since the sets must communicate there are only finitely many options for $j_{l+2}$. If, on other hand, the same set covers elements from yet another row then given a fixed $j_l$, there are again finitely many options for $j_{l+1}$ and for $j_{l+2}$. This argument can be continued inductively to find \eqref{cumhighF1UB} is of order 
\[
O(N^{n/2} M^{-n/2-x} E^n M^{1+x} N^{-2n+G})   =O(T^{1-n/2} N^{G-n-1}).\qedhere
\]
\end{proof}
Lemma \ref{lem:highcumF1} implies that for $n=2$ partitions with $G \le 2$ vanish, while for $n > 2$ all partitions of size $G \le n+1$ will vanish asymptotically. Moreover, indecomposability of the array requires to stay on the frequency manifold  (\eqref{eq:boundoffmani} of Corollary \ref{cor:cumbound}) and therefore imposes additional restrictions in frequency direction. in case $n=2$,  only those partitions of size $G \ge 2$ for which all sets are such that $\sum_{k \in P_g} \omega_{k} \equiv 0 \mod 2\pi$ will not vanish. For $n >2$, indecomposability of the partition and \autoref{cor:cumbound} also result in restrictions over frequencies $k_1, \ldots,k_n$. These restrictions are formalized in the following proposition.
\begin{prop}\label{prop:cumhighF1frequb}
For a partition of size $G=n+r_1+1$ with $r_1 \ge 1$ of the array \eqref{tab:arraythm41} with $n > 2$, only partitions with at least $r_1$ restrictions in frequency direction are indecomposable. For $n=2$, $G=n+r_1+1$ with $r_1 \ge 1$ will have at least 1 restriction in frequency direction.
\end{prop}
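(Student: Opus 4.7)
The plan is to reduce the proposition to a linear-algebraic rank lower bound on the signed coefficient matrix associated with the on-manifold conditions, and to derive that bound from the bipartite-graph formulation of indecomposability together with the two-positive/two-negative sign structure of each row of the array. First, I would use the standard equivalence between indecomposability of $\boldsymbol{P}=P_1\cup\cdots\cup P_G$ and connectedness of the bipartite row--block graph $B(\boldsymbol{P})$, whose edges are the pairs $(l,g)$ for which row $l$ meets $P_g$. If $d_l$ denotes the degree of row $l$ in $B(\boldsymbol{P})$, connectedness delivers the crucial edge count
\[
\sum_{l=1}^n d_l \;\ge\; n+G-1 \;=\; 2n+r_1 .
\]

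Next, I would linearise each block's on-manifold requirement $\sum_{p\in P_g}k_p\equiv 0\pmod N$ as $\sum_l \alpha_{g,l}\,k_l\equiv 0$, with $\alpha_{g,l}\in\{-2,-1,0,1,2\}$ recording the signed count of row-$l$ entries inside $P_g$; by definition, the number of restrictions in the frequency direction is $\operatorname{rank}(\alpha)$. Because each row of the array carries two $+$ and two $-$ labels, every column of $\alpha$ sums to zero, yielding the built-in dependency $\mathbf{1}^\top\alpha=0$. A short case analysis on the five admissible ways to distribute the four elements of row $l$ among $d_l$ blocks shows that at most two of those blocks can contain a sign-balanced pair, so $d_l^{(1)}:=\#\{g:\alpha_{g,l}\ne 0\}\ge 2(d_l-2)$, and summing over $l$ with the edge count gives $\sum_l d_l^{(1)}\ge 2r_1$.

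The final step promotes this entry-count estimate to the required rank bound. For $n=2$ the bound $\sum_l d_l^{(1)}\ge 2$ already suffices, since it forces $\alpha$ to possess at least one nonzero entry and hence $\operatorname{rank}(\alpha)\ge 1$. For $n>2$ I would decompose the subgraph $B^{(1)}$ of nonzero $\alpha$-incidences into its connected components and, via a signed-incidence/cycle-space argument in the spirit of the rank formula for graph Laplacians, show that each component with $k$ block-vertices contributes at least $k-1$ independent rows to $\alpha$. Combining these contributions with the global dependency $\mathbf{1}^\top\alpha=0$, together with the fact that indecomposability of $\boldsymbol{P}$ only allows trivial blocks (those with $\alpha_{g,l}\equiv 0$ in $l$) to bridge distinct components of $B^{(1)}$ without depressing the rank, would yield $\operatorname{rank}(\alpha)\ge r_1$.

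The hard part will be this last rank step in the $n>2$ case. The inequality $\sum_l d_l^{(1)}\ge 2r_1$ does not by itself rule out the nonzero entries from concentrating into a single column of $\alpha$, which would collapse the rank to one; excluding such concentration requires using both the indecomposability of $\boldsymbol{P}$ and the rigidity coming from the four-column row structure, which restricts a single row's contribution to $\alpha$ to at most two balanced pairs plus at most two unbalanced singletons. I expect the cleanest completion to combine the signed-incidence rank count with a careful accounting of how trivial blocks interact with the nontrivial components of $B^{(1)}$, possibly supplemented by an induction on $G$ to handle small degenerate configurations.
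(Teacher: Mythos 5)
Your reduction is set up correctly as far as it goes: the identification of indecomposability with connectedness of the row--block bipartite graph, the edge count $\sum_l d_l\ge n+G-1=2n+r_1$, the observation that at most two blocks can meet a fixed row in a sign-balanced way (whence $\sum_l d_l^{(1)}\ge 2r_1$), and the $n=2$ case are all sound --- and this is already more systematic than the paper's own proof, which simply exhibits the configuration it believes minimizes the number of restrictions (one large balanced block plus pairs) and counts for that configuration. But the step you flag as hard is not a technical loose end: the statement you have reduced the proposition to, namely $\operatorname{rank}(\alpha)\ge r_1$ for every indecomposable partition of size $G=n+r_1+1$, is \emph{false}, so no signed-incidence or cycle-space argument can close the gap.

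Here is a counterexample: take $n=5$, $G=8$, so $r_1=2$. Split each of rows $2,\dots,5$ of the array \eqref{tab:arraythm41} into its two sign-balanced column pairs $\{(l,1),(l,2)\}$ and $\{(l,3),(l,4)\}$; let $P_1,\dots,P_4$ each consist of one singleton from row $1$ together with one balanced pair from rows $2,3,4,5$ respectively (size $3$ each), and let $P_5,\dots,P_8$ be the four remaining balanced pairs (size $2$ each). The row--block graph is a tree, so the partition is indecomposable, yet every block's on-manifold condition is either vacuous or reads $\pm\omega_{k_1}\equiv 0 \bmod 2\pi$, so $\operatorname{rank}(\alpha)=1<r_1$. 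This is exactly the concentration of all nonzero entries of $\alpha$ into one column that you were worried about; indecomposability does not exclude it. What rescues the proposition --- and what Theorem~\ref{thm:highcumF1F2} actually needs --- is that ``restriction in frequency direction'' must be read as covering both independent \emph{satisfiable} constraints and the $N^{-1}$ penalties of \eqref{eq:boundoffmani} for blocks forced off the manifold: in the example $\omega_{k_1}\equiv 0$ has no solution with $1\le k_1\le\lfloor N/2\rfloor$, so all four unbalanced blocks are off-manifold for every admissible frequency and the partition in fact contributes an extra $N^{-4}\ll N^{-r_1}$. To complete your argument you would have to replace the rank bound by the combined estimate $\sum_{\mathbf k}N^{-o(\mathbf k)}=O(N^{n-r_1})$, where $o(\mathbf k)$ counts the blocks that are off-manifold at $\mathbf k$; your inequality $\sum_l d_l^{(1)}\ge 2r_1$ is the right starting point for that estimate, but the linear-algebraic rank of $\alpha$ is the wrong invariant to carry it through.
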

\begin{proof}[Proof of Proposition \autoref{prop:cumhighF1frequb}]
First consider the case $n >2$. We note that a minimal amount of restrictions will be given by those partitions in which frequencies and their conjugates are always part of the same set, i.e., in which for fixed row $l$, the first two columns are in the same set and the last two columns are in the same set. Given we need that $G \ge n+2$ and $\mathscr{C}_g \ge 2$, indecomposability of the array means that the smallest number of restrictions is given by partitions that have one large set that covers the first two or last two columns and $n-r_1$ rows and for the rest has $\frac{4n-2(n-r_1)}{2} = n+r_1$ sets with $\mathscr{C}_g =2$. This means there are no constraints in frequency in $n-r_1-1$ rows but for the array to hook there must be $r_1$ constraints in terms of frequencies in rows $n-r_1-1$ to row $n$. \\
Consider then the special case of $n=2$, for which the above argument implies a partition of size $G=3$ and thus $r_1=0$. For partitions are of size $G \ge 4$, indecomposability then requires the first row to hook with the second, which imposes at least one restriction in frequency direction since only those partitions for which $\sum_{k \in P_g} \omega_{k} \equiv 0 \mod 2\pi$ will not vanish.
\end{proof}
Together \autoref{lem:highcumF1} and Proposition \autoref{prop:cumhighF1frequb} allow to show that  all higher order cumulants vanish asymptotically and therefore establishes asymptotic normality.
\begin{thm}\label{thm:highcumF1F2} Under  \autoref{cumglsp}, we have for all $x=0,\ldots,n$ and $n>2$,
\[T^{n/2}\operatorname{cum}_{n-x,x}(\hat F_{1,T},\hat F_{2,T})\to 0 \quad \text{as } T \to \infty.\]
\end{thm}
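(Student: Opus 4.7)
The plan is to expand $T^{n/2}\operatorname{cum}_{n-x,x}(\hat F_{1,T},\hat F_{2,T})$ via \autoref{lem:cumHSinprod} as a finite sum, over the indecomposable partitions $\boldsymbol{P}=P_1\cup\cdots\cup P_G$ of the $n\times 4$ array \eqref{tab:arraythm41}, of traces of tensor products of joint cumulants of local fDFT's, summed over the frequency indices $k_1,\ldots,k_n$ and the block indices $j_1,\ldots,j_{n+x}$. Since the number of such partitions depends only on $n$, which is fixed, it suffices to show that the contribution of each single partition of size $G$ vanishes as $T\to\infty$. I would split the argument according to the value of $G$.

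If $G\le n+1$, I would apply \autoref{lem:highcumF1} directly: without invoking any frequency restriction, it already delivers the bound $O(T^{1-n/2}N^{G-n-1})$ for the corresponding summand. Since $G-n-1\le 0$, this is $O(T^{1-n/2})$, which tends to zero as $T\to\infty$ because $n>2$ implies $n/2-1>0$. No further combinatorics is needed in this regime.

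If $G\ge n+2$, write $G=n+r_1+1$ with $r_1\ge 1$. In this case \autoref{lem:highcumF1} alone only yields $O(T^{1-n/2}N^{r_1})$, which does not vanish, and the additional decay must come from the constraints the partition imposes on the frequencies. Here I would invoke \autoref{prop:cumhighF1frequb}: indecomposability of any such partition forces at least $r_1$ non-trivial equations of the form $\sum_{k\in P_g}\omega_k\equiv 0\mod 2\pi$, since otherwise the corresponding sets would be evaluated off the frequency manifold and only contribute the smaller rate \eqref{eq:boundoffmani} of \autoref{cor:cumbound}. Each such equation removes one free index from the sum over $k_1,\ldots,k_n$, replacing the factor of $N$ from that index by one, so the effective bound sharpens to $O(T^{1-n/2}N^{G-n-1-r_1})=O(T^{1-n/2})$, which once again vanishes for $n>2$.

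The delicate point of the argument is the second case: one has to verify that the $r_1$ frequency-sum constraints identified in \autoref{prop:cumhighF1frequb} are genuinely independent of the hooking conditions already exploited on the $j$-indices in the proof of \autoref{lem:highcumF1}, so that each of them actually buys a full factor $N^{-1}$ of improvement rather than being absorbed by an existing restriction. This is a combinatorial bookkeeping step on how the extra $r_1$ sets (beyond the minimum $n+1$ needed for indecomposability together with cardinality $\ge 2$ in every set) couple distinct rows of the array \eqref{tab:arraythm41}. Once this is in place and combined with the first case, every summand produced by \autoref{lem:cumHSinprod} is $O(T^{1-n/2})$; summing over the finitely many indecomposable partitions and all $x\in\{0,\ldots,n\}$ then yields $T^{n/2}\operatorname{cum}_{n-x,x}(\hat F_{1,T},\hat F_{2,T})\to 0$ for all $n>2$, which is the asserted vanishing of the higher-order joint cumulants.
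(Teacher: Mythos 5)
Your proposal is correct and follows essentially the same route as the paper's own proof: both split on the partition size $G$, dispose of the case $G\le n+1$ directly via Lemma~\ref{lem:highcumF1}, and for $G=n+r_1+1$ with $r_1\ge1$ combine that bound with the $r_1$ frequency restrictions from Proposition~\ref{prop:cumhighF1frequb} to gain the extra factor $N^{-r_1}$ and conclude $O(T^{1-n/2})\to0$. The ``delicate point'' you flag about the frequency constraints being genuinely additional to the hooking restrictions on the block indices is a fair observation, but it is exactly what Proposition~\ref{prop:cumhighF1frequb} is set up to deliver, so no new argument is needed beyond what the paper records.
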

\begin{proof}
By  \autoref{lem:highcumF1}, it is direct that all partitions of size $G \le n+1$ vanish. We therefore only have to consider the case where $G=n+r_1+1$ with $r_1 \ge 1$. In this case, Proposition \autoref{prop:cumhighF1frequb}, yields an upperbound of the joint cumulant that is of order $O(T^{1-n/2} N^{n+r_1+1-n-1}N^{-{r_1}})=O(T^{1-n/2})$. This establishes \eqref{eq:cum>2}. 
\end{proof}
Combining \autoref{thm:highcumF1F2} with \eqref{eq:unbBNT}, we immediately obtain the following result for the bias correction
\begin{Corollary} \label{cor:consBNT}
Under the conditions of \autoref{thm:dist} 
\[
\sqrt{T} \big(\hat{B}_{N,T} -  B_{N,T}\big) \overset{p}{\to} 0.
\]
\end{Corollary}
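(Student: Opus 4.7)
The plan is to extend the cumulant-tensor machinery developed in this section for $\hat F_{1,T}$ to the bias-correction statistic $\hat B_{N,T}$. Both estimators have the identical $\frac{1}{NM}\sum_{j,k}$ structure over blocks and Fourier frequencies, differing only in the contraction used: the Hilbert--Schmidt inner product $\langle I,I\rangle_{\mathrm{HS}}$ for $\hat F_{1,T}$ versus $\Tr(I\widetilde{\bigotimes}I)=\|D\|_2^2\,\|D\|_2^2$ for $\hat B_{N,T}$. Accordingly, Theorem \ref{lem:cumHSinprod} has an immediate analogue for $\hat B_{N,T}$ in which only the permutation operator $S_{\boldsymbol P}$ in the cumulant expansion is altered to reflect the new pairing of indices.

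First, I would establish the asymptotic unbiasedness \eqref{eq:unbBNT}, sharpened to $\sqrt T\,(\E\hat B_{N,T}-B_{N,T})\to 0$. Mirroring the derivation of $\E\hat F_{1,T}$, the expectation of a single summand of $\hat B_{N,T}$ decomposes into the four indecomposable partitions of the 4-th order moment of the local fDFTs; three of these contribute $O(N^{-1})$ by the off-manifold bound in Corollary \ref{cor:cumbound}, and the surviving on-manifold partition is identified via Lemma \ref{lem:cumfixu} with $\frac{1}{NM}\sum_{j,k}\Tr(\F_{u_j,\omega_k}\widetilde{\bigotimes}\F_{u_j,\omega_{k-1}})$, up to an $O(NM^{-2})$ correction. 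Smoothness of $\F_{u,\omega}$ in both arguments then gives a Riemann-approximation error $O(M^{-2})+O(N^{-1})$ to $B_{N,T}$, so under Assumption \ref{ratesNM} one obtains $\sqrt T\,|\E\hat B_{N,T}-B_{N,T}|=O(\sqrt{N/M^3})+O(\sqrt{M/N})\to 0$.

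Second, I would show $T^{n/2}\operatorname{cum}_n(\hat B_{N,T})\to 0$ for $n\ge 2$. The combinatorial bounds of Lemma \ref{lem:highcumF1} and Proposition \ref{prop:cumhighF1frequb} rely only on the cumulant-size estimates of Corollary \ref{cor:cumbound}, the block-separation bound of Lemma \ref{lem:cumdifu}, and the structure of indecomposable partitions of the $4n$-element array --- none of which are sensitive to the choice of contraction. Hence the argument underlying Theorem \ref{thm:highcumF1F2} extends to joint cumulants involving $\hat B_{N,T}$, and combined with the vanishing bias from the previous step, every cumulant of $\sqrt T\,(\hat B_{N,T}-B_{N,T})$ tends to zero, forcing convergence in distribution, and therefore in probability, to the constant $0$.

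The main obstacle lies in the variance analysis (the $n=2$ case): one must verify that every surviving indecomposable partition in the expansion of $\operatorname{cum}_2(\hat B_{N,T})$ picks up enough additional manifold or block-separation factors to yield $T\operatorname{Var}(\hat B_{N,T})\to 0$ rather than merely $O(1)$. This requires tracking which partitions under the trace-Kronecker permutation structure force the underlying fDFT cumulants to stay on or off the frequency manifold; once this combinatorial accounting is carried out, the corollary follows immediately from Chebyshev's inequality.
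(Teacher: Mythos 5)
Your proposal follows essentially the same route as the paper: the corollary is obtained there by combining the asymptotic unbiasedness \eqref{eq:unbBNT} (derived ``in complete analogy with'' the computation of $\E\hat F_{1,T}$) with the vanishing of higher-order cumulants via the machinery of Theorem~\ref{lem:cumHSinprod}, Lemma~\ref{lem:highcumF1} and Proposition~\ref{prop:cumhighF1frequb}, which is exactly the contraction-insensitive cumulant-tensor argument you describe. You are in fact more explicit than the paper about the one delicate point: Theorem~\ref{thm:highcumF1F2} is stated only for $n>2$, so the requirement $T\operatorname{Var}(\hat B_{N,T})\to0$ --- strictly stronger than the $O(1)$ bound that the surviving $G=3,4$ partitions yield for $\hat F_{1,T}$ --- must be tracked separately, as you correctly flag in your final paragraph.
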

}
\medskip
Finally, for the covariance structure of $\sqrt{T}\hat F_{1,T}$ and $\sqrt{T}\hat F_{2,T}$, we find in Appendix \ref{sec:cov}
\begin{align*}
\lim_{T \to \infty}T\cv(\hat{F}_{1,T}, \hat{F}_{1,T})
    &=  \frac{2}{8\pi} \int_{-\pi}^{\pi} \int_{-\pi}^{\pi}\int_0^{1} 
\innprod{\F_{u,\omega_{1},-\omega_{1},-\omega_{2}}}{\F_{u,\omega_{1}} \bigotimes \F_{u,\omega_{2}} }_{HS}du d\omega_1 d\omega_2 \\ 
&+   \frac{2}{8\pi} \int_{-\pi}^{\pi} \int_{-\pi}^{\pi}\int_0^{1} \innprod{ \F_{u,\omega_{1},-\omega_{1},\omega_{2}}}{ \F_{u,\omega_{1}}\bigotimes \F_{u,-\omega_{2}}}_{HS}  du d\omega_1 d\omega_2 \\ 
 &+  \frac{2}{4\pi} \int_{-\pi}^{\pi} \int_0^{1} \snorm{\F^2_{u,\omega}}^2_2 du d\omega \\ 
  &+  \frac{1}{4\pi} \int_{-\pi}^{\pi} \int_0^{1}\snorm{\F_{u,\omega}}^4_2 du d\omega \\ 
 &+  \frac{1}{4\pi} \int_{-\pi}^{\pi} \int_0^{1}   \innprod{\F_{u,\omega} \widetilde{\bigotimes} \F_{u,\omega}}{\F_{u,\omega} \bigotimes \F_{u,\omega}}_{HS}du d\omega \\ 
 &+  \frac{1}{4\pi} \int_{-\pi}^{\pi} \int_0^{1} \innprod{\F_{u,\omega} \widetilde{\bigotimes}_{\top} \F_{u,-\omega}}{ \F_{u,\omega}\bigotimes\F_{u,-\omega}}_{HS} du d\omega 
\\	\lim_{T \to \infty}T \cv(\hat{F}_{2,T}, \hat{F}_{2,T})
	&= \frac{2}{8\pi} \int_{-\pi}^{\pi} \int_{-\pi}^{\pi}\innprod{\widetilde{\F}_{\omega_1,-\omega_{1},-\omega_{2}}}{{\widetilde{\F}_{\omega_{1}}} \bigotimes \widetilde{\F}_{\omega_{2}}}_{HS}  d\omega_1 d\omega_2 \\ 
&+  \frac{2}{8\pi} \int_{-\pi}^{\pi} \int_{-\pi}^{\pi}\innprod{\widetilde{\F}_{\omega_1,-\omega_{1},\omega_{2}}}{{\widetilde{\F}_{\omega_{1}}} \bigotimes \widetilde{\F}_{-\omega_{2}}}_{HS}  d\omega_1 d\omega_2 \\  
 &+  \frac{1}{4\pi} \int_{-\pi}^{\pi} \int_0^{1}  \innprod{\F^2_{u,\omega }}{\F_{u,\omega} \widetilde{\F}_{\omega}}_{HS}du d\omega \\ 
  &+  \frac{1}{4\pi} \int_{-\pi}^{\pi} \int_0^{1} \innprod{{\widetilde{\F}_{\omega} }\F_{u,\omega}}{\F_{u,\omega} {\widetilde{\F}_{u,\omega}}}_{HS}du d\omega \\ 
 &+  \frac{1}{4\pi} \int_{-\pi}^{\pi} \int_0^{1}\innprod{\F_{u,\omega} \widetilde{\bigotimes} \F_{u,\omega}}{ \widetilde{\F}_{{\omega}} \bigotimes \widetilde{\F}_{\omega}}_{HS}  du d\omega \\ 
 &+  \frac{1}{4\pi} \int_{-\pi}^{\pi} \int_0^{1}\innprod{\F_{u,\omega} \widetilde{\bigotimes}_{\top} \F_{u,-\omega}}{ \widetilde{\F}_{{\omega}} \bigotimes \widetilde{\F}_{-\omega}}_{HS}  du d\omega \\ 
\lim_{T\to\infty} T\cv(\hat{F}_{1,T},\hat{F}_{2,T})
	&= \frac{2}{8\pi} \int_{-\pi}^{\pi} \int_{-\pi}^{\pi}\int_0^{1} \innprod{\F_{u,\omega_{1},-\omega_{1},-\omega_{2}}}{\F_{u,\omega_{1}} \bigotimes \widetilde{\F}_{\omega_{2}}}_{HS} du d\omega_1 d\omega_2 \\ 
&+ \frac{2}{8\pi} \int_{-\pi}^{\pi} \int_{-\pi}^{\pi}\int_0^{1} \innprod{\F_{u,\omega_{1},-\omega_{1},\omega_{2}}}{\F_{u,\omega_{1}} \bigotimes \widetilde{\F}_{-\omega_{2}}}_{HS} du d\omega_1 d\omega_2 \\ 
 &+  \frac{2}{4\pi} \int_{-\pi}^{\pi} \int_0^{1} \innprod{\F_{u,\omega} \F_{u,\omega}}{ \F_{u,\omega} \widetilde{\F}_{\omega} }_{HS} du d\omega \\ 
 &+  \frac{1}{4\pi} \int_{-\pi}^{\pi} \int_0^{1} \innprod{\F_{u,\omega} \widetilde{\bigotimes} \F_{u,\omega}}{\F_{u,\omega} \bigotimes \widetilde{\F}_{\omega}}_{HS}du d\omega \\ 
  &+  \frac{1}{4\pi} \int_{-\pi}^{\pi} \int_0^{1} \innprod{\F_{u,\omega} \widetilde{\bigotimes}_{\top} \F_{u,-\omega}}{\F_{u,\omega} \bigotimes \widetilde{\F}_{-\omega}}_{HS}du d\omega
\end{align*}
A straightforward calculation then yields the asymptotic variance $v^2$ is simply given by\[\nu^2=\lim_{T \to \infty}\Big(16 \pi^2 \V(\hat{F}_{1,T})+16 \pi^2 \V(\hat{F}_{2,T})-32 \pi^2 \cv(\hat{F}_{1,T},\hat{F}_{2,T})\Big).\]
We therefore obtain the following expression for the asymptotic variance 
\begin{align*}
&\nu^2= 4\pi \int_{-\pi}^{\pi} \int_{-\pi}^{\pi}\int_0^{1} 
\innprod{\F_{u,\omega_{1},-\omega_{1},-\omega_{2}}}{\F_{u,\omega_{1}} \bigotimes \F_{u,\omega_{2}} }_{HS}du d\omega_1 d\omega_2\\& + 4\pi \int_{-\pi}^{\pi} \int_{-\pi}^{\pi}\innprod{\widetilde{\F}_{\omega_1,-\omega_{1},-\omega_{2}}}{{\widetilde{\F}_{\omega_{1}}} \bigotimes \widetilde{\F}_{\omega_{2}}}_{HS}  d\omega_1 d\omega_2- 8\pi \int_{-\pi}^{\pi} \int_{-\pi}^{\pi}\int_0^{1} \innprod{\F_{u,\omega_{1},-\omega_{1},-\omega_{2}}}{\F_{u,\omega_{1}} \bigotimes \widetilde{\F}_{\omega_{2}}}_{HS} du d\omega_1 d\omega_2\\&
+  4\pi \int_{-\pi}^{\pi} \int_{-\pi}^{\pi}\int_0^{1}\innprod{ \F_{u,\omega_{1},-\omega_{1},\omega_{2}}}{ \F_{u,\omega_{1}}\bigotimes \F_{u,-\omega_{2}}}_{HS}  du d\omega_1 d\omega_2  + 4\pi \int_{-\pi}^{\pi} \int_{-\pi}^{\pi}\innprod{\widetilde{\F}_{\omega_1,-\omega_{1},\omega_{2}}}{{\widetilde{\F}_{\omega_{1}}} \bigotimes \widetilde{\F}_{-\omega_{2}}}_{HS}  d\omega_1 d\omega_2 
\\& - 8\pi  \int_{-\pi}^{\pi} \int_{-\pi}^{\pi}\int_0^{1} \innprod{\F_{u,\omega_{1},-\omega_{1},\omega_{2}}}{\F_{u,\omega_{1}} \bigotimes \widetilde{\F}_{-\omega_{2}}}_{HS} du d\omega_1 d\omega_2\\
 &+  8\pi \int_{-\pi}^{\pi} \int_0^{1} \snorm{\F^2_{u,\omega}}^2_2 du d\omega +  4\pi\int_{-\pi}^{\pi} \int_0^{1}  \innprod{\F^2_{u,\omega }}{\F_{u,\omega} \widetilde{\F}_{\omega}}_{HS}du d\omega +  4\pi\int_{-\pi}^{\pi} \int_0^{1} \innprod{{\widetilde{\F}_{\omega} }\F_{u,\omega}}{\F_{u,\omega} {\widetilde{\F}_{u,\omega}}}_{HS}du d\omega \\ 
  &- 16\pi  \int_{-\pi}^{\pi} \int_0^{1} \innprod{\F_{u,\omega} \F_{u,\omega}}{ \F_{u,\omega} \widetilde{\F}_{\omega} }_{HS} du d\omega + 4\pi  \int_{-\pi}^{\pi} \int_0^{1}\snorm{\F_{u,\omega}}^4_2 du d\omega  \\ 
 &+ 4\pi  \int_{-\pi}^{\pi} \int_0^{1}   \innprod{\F_{u,\omega} \widetilde{\bigotimes} \F_{u,\omega}}{\F_{u,\omega} \bigotimes \F_{u,\omega}}_{HS}du d\omega +  4\pi  \int_{-\pi}^{\pi} \int_0^{1} \innprod{\F_{u,\omega} \widetilde{\bigotimes}_{\top} \F_{u,-\omega}}{ \F_{u,\omega}\bigotimes\F_{u,-\omega}}_{HS} du d\omega 
 \\
 &+  4\pi \int_{-\pi}^{\pi} \int_0^{1}\innprod{\F_{u,\omega} \widetilde{\bigotimes} \F_{u,\omega}}{ \widetilde{\F}_{{\omega}} \bigotimes \widetilde{\F}_{\omega}}_{HS}  du d\omega +  4\pi \int_{-\pi}^{\pi} \int_0^{1}\innprod{\F_{u,\omega} \widetilde{\bigotimes}_{\top} \F_{u,-\omega}}{ \widetilde{\F}_{{\omega}} \bigotimes \widetilde{\F}_{-\omega}}_{HS}  du d\omega
\\ 
 &-8\pi \int_{-\pi}^{\pi} \int_0^{1} \innprod{\F_{u,\omega} \widetilde{\bigotimes} \F_{u,\omega}}{\F_{u,\omega} \bigotimes \widetilde{\F}_{\omega}}_{HS}du d\omega -8\pi  \int_{-\pi}^{\pi} \int_0^{1} \innprod{\F_{u,\omega} \widetilde{\bigotimes}_{\top} \F_{u,-\omega}}{\F_{u,\omega} \bigotimes \widetilde{\F}_{-\omega}}_{HS}du d\omega \\ 
  \tageq\label{eq:asvar} 
\end{align*}

Under $H_0$ this reduces to 
\[\nu^2_{H_0}=4\pi \int_{-\pi}^{\pi} \snorm{\widetilde{\F}_\omega}_2^4d\omega.\]

\section{Finite sample properties}  \label{sec4}
\def\theequation{4.\arabic{equation}}
\setcounter{equation}{0}

 In this section, we investigate the finite sample properties of the methods proposed in this
 paper by means of a simulation study and illustrate potential applications analysing annual temperature curves.

\subsection{Simulation study} \label{sec4a}

%\subsubsection{Tests for the classical hypothesis $H_0: m^2=0$} \label{sec4a1}
For the investigation  of the finite sample performance of the test \eqref{testwn}  for the hypothesis 
$H_0: m^2=0$
 with simulated data  we consider  a  similar set-up   as  \cite{avd16_main}, who 
used a  Fourier basis representation on the interval $[0,1]$ to  generate functional data. To be precise, let $\{\psi_l\}_{l=1}^{\infty}$ be the Fourier basis functions. Consider the  $p$-th order time varying functional autoregressive process (tvFAR(p)), $(X_t, t \in \mathbb Z)$ defined as
\begin{equation}
\label{tvFAR}
X_{t}(\tau) = \sum_{t'=1}^p A_{t,t'}(X_{t-t'})(\tau) + \epsilon_t(\tau), \hspace{0.4 in} \tau \in [0,1],
\end{equation}
where $A_{t,1}, \dots, A_{t,p}$ are time-varying auto-covariance operators and $\{\epsilon_t(\tau)\}_{t\in \mathbb{Z}}$ is a sequence of mean zero innovations. We have
\begin{align}
\langle X_t, \psi_l \rangle = &\sum_{l'=1}^{\infty} \sum_{t'=1}^p\langle X_{t-t'},\psi_l\rangle \langle A_{t,t'}(\psi_l),\psi_{l'}\rangle + \langle \epsilon_t, \psi_l\rangle \nonumber\\
\approx & \sum_{l'=1}^{L_{max}} \sum_{t'=1}^p\langle X_{t-t'},\psi_l\rangle \langle A_{t,t'}(\psi_l),\psi_{l'}\rangle + \langle \epsilon_t, \psi_l\rangle.\label{sim_approx}
\end{align}
Therefore the first $L_{max}$ Fourier coefficients of the process $X_t$ are  generated using the $p$-th order vector autoregressive, VAR(p), process
$$\widetilde{X}_t = \sum_{t'=1}^p \widetilde{A}_{t,t'}\widetilde{X}_{t-t'} + \widetilde{\epsilon}_t,$$
where $\widetilde{X}_t := \left(\langle X_t, \psi_1 \rangle, \dots, \langle X_t, \psi_{L_{max}} \rangle \right)^T$ is the vector of Fourier coefficients, the $(l,l')$-th entry of $\widetilde{A}_{t,j}$ is given by $\langle A_{t,j}(\psi_l),\psi_{l'}\rangle$ and $\widetilde{\epsilon}_t := \left(\langle \epsilon_t, \psi_1 \rangle, \dots, \langle \epsilon_t, \psi_{L_{max}} \rangle \right)^T$.
The entries of the matrix $\widetilde{A}_{t,j}$ are generated as $N\big (0, \nu_{l,l'}^{(t,j)}\big )$ with $\nu_{l,l'}^{(t,j)}$ specified below. To ensure stationarity or existence of a causal solution the norms $\kappa_{t,j}$ of $A_{t,j}$ are required to satisfy certain conditions [see \cite{bosq2000} for stationary 
and \cite{vde16} for  local stationary time series, respectively]. 

If $A_{t,j} \equiv A_j$ for all $t$ in \eqref{tvFAR} and the error sequence $(\epsilon_t, t \in \mathbb Z)$ is an i.i.d. sequence, we obtain 
the stationary functional autoregressive (FAR) model of order $p$. In that case we generate the entries of the operator matrix from $N\big (0, \nu_{l,l'}^{(j)}\big )$ distributions. Functional white noise can be thought of as FAR model of order $0$.

Throughout this  section the number of Monte Carlo replications is always $1000$.
We use the  fda package from R to generate the functional data, where $L_{max}$ is taken to be $15$. The periodogram kernels are evaluated on a $100 \times 100$ grid on 
the square $[0,1]^2$ and their integrals are calculated by averaging the functional values at the grid points. The asymptotic variance under the  null hypothesis is estimated by   \eqref{varesth0}. 
In Table \ref{Sim_size}
we  report the simulated nominal levels of the test \eqref{testwn}  for the hypotheses in  \eqref{hequiv0} 
for the sample sizes $T=128$, $256$, $512$ and $1024$, where we consider the following
three (stationary) data generating processes:
\begin{itemize}
\item [(I)] The functional white noise variables $\epsilon_1, \dots, \epsilon_T$ i.i.d. with coefficient variances $\text{Var}(\langle \epsilon_t, \psi_l \rangle) = \exp((l-1)/10)$.
\item[(II)] The FAR(2) variables $X_1, \dots, X_T$ with operators specified by variances $\nu_{l,l'}^{(1)} = \exp(-l-l')$ and $\nu_{l,l'}^{(2)} = 1/(l+l'^{3/2})$ with norms $\kappa_1 = 0.75$ and $\kappa_2 = -0.4$ and the innovations  $\epsilon_1, \dots, \epsilon_T$ are as in (I).
\item[(III)] The FAR(2) variables $X_1, \dots, X_T$ as in (II) but with $\kappa_1=0.4$ and $\kappa_2 = 0.45$.
\end{itemize}
Recall that the test requires the choice of  the number $M$ of blocks, which determines the number $N$ of observations in each block by the equation $T=MN$. As mentioned before, the quantities
 $M$ and $N$ have to be reasonable large, because they correspond to the number of terms used in the Riemann sum approximating the integral with respect to $du$ and $d\omega$ in \eqref{eq:M^2}.  We investigate the effect of this choice in more detail  in the next section but for the moment consider only those combinations for which \autoref{ratesNM} is satisfied. 
Interestingly, the results reported in Table \ref{Sim_size} are rather robust with respect to this choice and we observe a reasonable approximation of the nominal level in nearly all cases under consideration, albeit the test being slightly undersized for the smalle samples sizes. 
  
\medskip
\begin{table*}[t]
\begin{center}
\caption{\it Empirical rejection probabilities (in percentage) of the test \eqref{testwn}  for the hypotheses in \eqref{hequiv0} 
 under the  null hypothesis}
\label{Sim_size}
\begin{tabular}{cccccccccccc}
\hline
& & & \multicolumn{3}{c} {I} & \multicolumn{3}{c} {II} & \multicolumn{3}{c} {III}\\
\cmidrule(lr){4-6} \cmidrule(lr){7-9} \cmidrule(lr){10-12}
 T & N & M & 10\% & 5\% & 1\%  & 10\% & 5\% & 1\%  & 10\% & 5\% & 1\% \\
\hline
128 & 32 & 4 & 6.0 & 2.8 & 0.7 & 7.4 & 3.7 & 0.7 & 6.4 & 2.5 & 0.3 \\ 
128 & 16 & 8 & 5.9 & 2.7 & 0.4 & 7.3 & 2.8 & 0.8 & 5.2 & 2.5 & 0.5 \\ 
\\
 256 & 32 & 8 & 7.0 & 3.2 & 0.5 & 7.1 & 4.1 & 0.7 & 6.8 & 3.5 & 0.7 \\ 
 256 & 16 & 16 & 7.5 & 2.9 & 0.5 & 7.4 & 3.6 & 0.7 & 7.0 & 3.0 & 0.5 \\ 
\\
512 & 64 & 8 & 7.5 & 3.1 & 0.5 & 8.6 & 4.2 & 0.3 & 7.9 & 3.5 & 0.6 \\ 
512 & 32 & 16 & 6.7 & 2.4 & 0.4 & 7.1 & 3.3 & 0.7 & 6.4 & 2.4 & 0.2 \\ 
\\
 1024 & 128 & 8 &  8.8 & 4.2 & 1.0 & 9.6 & 4.1 & 1.0 & 8.9 & 3.9 & 0.9 \\ 
1024 & 64 & 16 & 9.7 & 4.7 & 1.1 & 10.0 & 5.3 & 1.4 & 9.8 & 4.6 & 0.9 \\ 
1024 & 32 & 32 & 8.0 & 3.3 & 0.5 & 9.3 & 5.2 & 1.3 & 8.0 & 3.6 & 0.5 \\ 
\hline
\end{tabular}
\end{center}
\end{table*}

Next we investigate the performance  of the test \eqref{testwn}   under the  alternative, where we consider the (non-stationary) 
 data generating processes:
\begin{itemize}
\item [(IV)] The tvFAR(1) variables $X_1, \dots, X_T$ with operator specified by variances $\nu_{l,l'}^{(t,1)} = \nu_{l,l'}^{(1)} = \exp(-l-l')$ and norm $\kappa_1 = 0.8$, and innovations are as in (I) with a multiplicative time-varying variance
$$\sigma^2(t) = \cos\Big(\frac{1}{2} + \cos\big(\frac{2\pi t}{1024}\big) + 0.3 \sin\big(\frac{2\pi t}{1024}\big)\Big).$$
\item[(V)] The tvFAR(2) variables $X_1, \dots, X_T$ with operators as in (IV), but with time-varying norm
$$\kappa_{1,t}=1.8 \cos\left( 1.5 - \cos\left(\frac{4\pi t}{T}\right)\right)$$
and constant norm $\kappa_2 = -0.81$ and innovations are as in (I).
\item[(VI)] The  structural break FAR(2) variables $X_1, \dots, X_T$ generated as follows
\begin{itemize}
\item for $t \leq 3T/8$, the operators are as in (II) with norms $\kappa_1 = 0.7$ and $\kappa_2 = 0.2$, with innovations as in (I).
\item for $t > 3T/8$, the operators are as in (II) with norms $\kappa_1 = 0$ and $\kappa_2 = -0.2$, with innovations as in (I) but with  coefficient variances $\text{Var}(\langle \epsilon_t, \psi_l \rangle) = 2\exp((l-1)/10)$.
\end{itemize}
\end{itemize}
The results of the test \eqref{testwn} under the alternative are displayed in Table \ref{Sim_power}. We observe that the test has very good power for model IV and VI, even for small sample sizes. For model V the power is observably lower than for the other two models but is still very good and not completely unintuitive as it can be explained by its data generating mechanism; depending on the draw of the operators, the resulting process in finite samples can be highly dependent as well as show barely any dependence at all. These results therefore coincide with the findings of \cite{avd16_main}.

\begin{table*}[h]
\begin{center}
\caption{\it Empirical rejection probabilities (in percentage) of the test \eqref{testwn}  for the hypotheses in \eqref{hequiv0}   under the  alternative hypothesis.}
\label{Sim_power}
\begin{tabular}{cccccccccccc}
\hline
& & & \multicolumn{3}{c} {IV} & \multicolumn{3}{c} {V} & \multicolumn{3}{c} {VI}\\
\cmidrule(lr){4-6} \cmidrule(lr){7-9} \cmidrule(lr){10-12}
 T & N & M & 10\% & 5\% & 1\%  & 10\% & 5\% & 1\%  & 10\% & 5\% & 1\% \\
\hline
128 & 32 & 4 &  65.8 & 55.8 & 31.8 & 55.2 & 43.1 & 19.1 & 72.8 & 59.7 & 34.2 \\ 
128 & 16 & 8 &  66.7 & 57.1 & 36.9 & 46.4 & 37.9 & 24.1 & 41.6 & 30.1 & 12.6 \\ 
\\
 256 & 32 & 8 &   99.9 & 99.8 & 99.7 & 73.1 & 65.2 & 46.3 & 65.1 & 53.6 & 30.2 \\ 
 256 & 16 & 16 &  99.5 & 99.4 & 99.2 & 54.2 & 48.8 & 37.6 & 70.8 & 59.0 & 34.0 \\ 
\\
 512 & 64 & 8 & 99.9 & 99.9 & 99.9 & 89.3 & 85.1 & 71.6 & 90.6 & 82.5 & 62.2 \\ 
512 & 32 & 16 &  100.0 & 100.0 & 100.0 & 80.2 & 75.3 & 66.6 & 92.2 & 87.8 & 70.1 \\ 
\\
 1024 & 128 & 8 &  100.0 & 100.0 & 100.0 & 92.2 & 90.1 & 83.9 & 99.6 & 98.4 & 92.9 \\ 
 1024 & 64 & 16 & 100.0 & 100.0 & 99.9 & 90.2 & 88.2 & 83.5 & 99.7 & 99.1 & 96.5 \\ 
1024 & 32 & 32 &  99.9 & 99.9 & 99.9 & 81.4 & 79.8 & 74.6 & 99.3 & 98.5 & 95.9 \\ 
\hline
\end{tabular}
\end{center}
\end{table*}

\subsection{Choice of $M$ and $N$}\label{sec:choiceofMandN}

To further investigate how the choice of $M$ and $N$ affects the test's performance, we considered a simulation study with sample size equal to $T=4096$  as this allows us to vary $M$ from $M=4,8,\ldots,1024$. We note that we thus also include choices of $M$ for which assumption \eqref{ratesNM} does not hold. The study was again performed over 1000 replications of each of the above models.  
\autoref{fig:dens}{(a)-(c)} provides the estimated densities for each $M$ for model I, II and III respectively. The estimated densities of model I appear well-aligned with a standard normal for all values of $M$. The fit appears however best for $16 \le M \le 128$.  For model II and III, we clearly observe that for $M>N$ the distribution becomes skewed and flatter. This is intuitive since the assumptions underlying \autoref{thm:dist} do not hold. The difference with the standard normal curve seems to become more pronounced the stronger the dependence. From these three models the dependence is strongest for  model B. In order to quantify our observations, we computed the mean absolute error to measure the difference between the estimated density of the test statistic and the standard normal density (see \autoref{fig:dens}(d)). The results indicate that a relatively small value of $M$ compared to $N$ leads to the best approximation. $M$ should however not be too small. More specifically, a minimal error is attained with $M=32$ for model~I and model~III and with $M=16$ for model~II. 

\autoref{fig:Rejpbalt} shows the rejection probabilities for $\alpha=0.1,\, 0.05,\, 0.01$ under the three alternatives. For model~IV and model~VI we find perfect power for all choices of $M$ and all critical values. For model~V, there is some sensitivity and power seems best for $8 \le M \le 32$. As previously remarked, the sensitivity for model~V is due to its data generating mechanism. To summarize, it appears that our  test is very robust for different choices of $M$ for which Assumption \ref{ratesNM} is satisfied. The empirical study indicates in particular good performance for the range $16 <M < 64$ for $T=4096$ which corroborates with our findings in \ref{sec3}. 
% {We leave a more rigorous investigation of the optimal values of $M$ and $N$ for future work.}

\begin{figure}
\centering
\begin{subfigure}{0.55\textwidth}
  \centering
  \includegraphics[scale = 0.4]{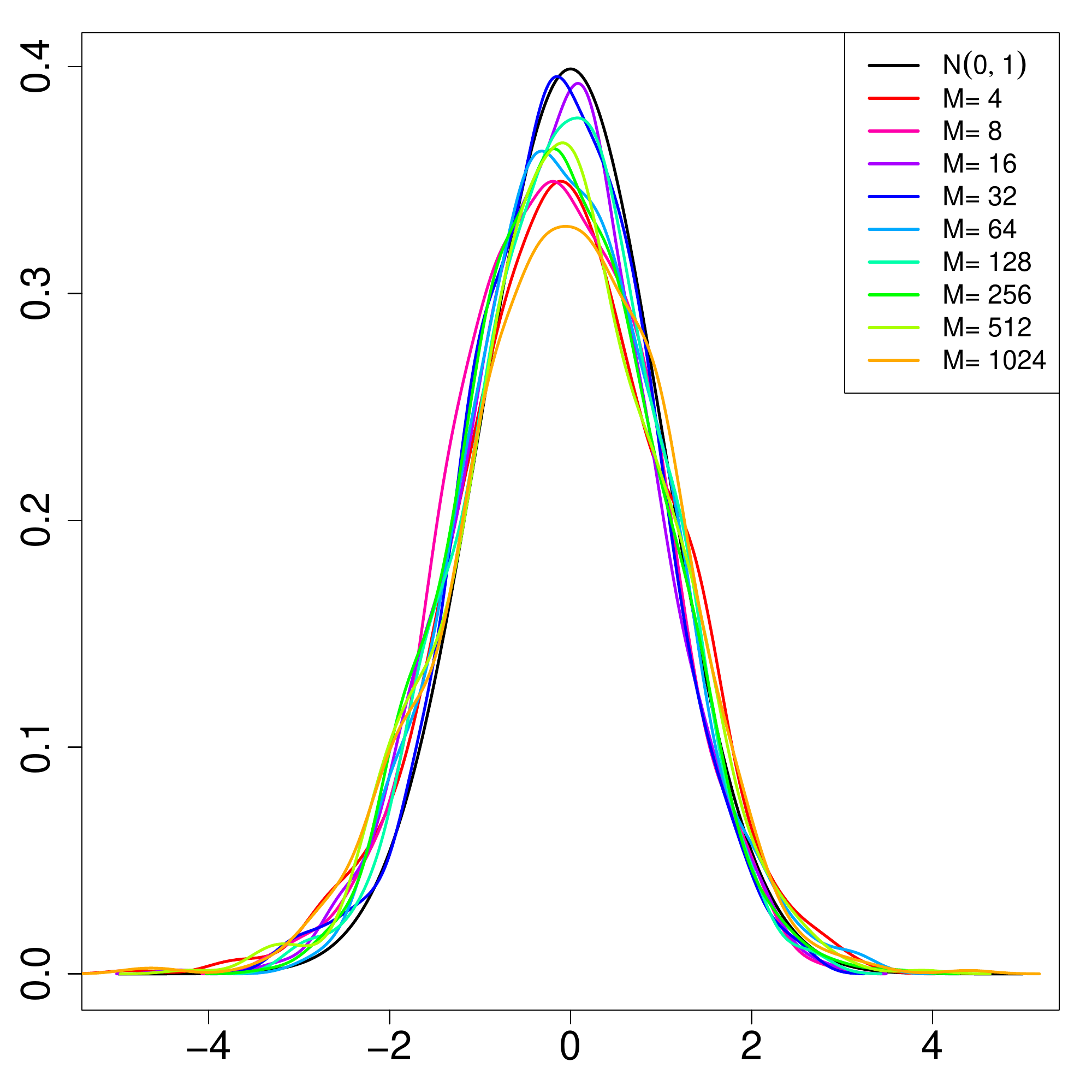}
   \caption{Model I}
\end{subfigure}%
\begin{subfigure}{.55\textwidth}
  \centering
  \includegraphics[scale = 0.4]{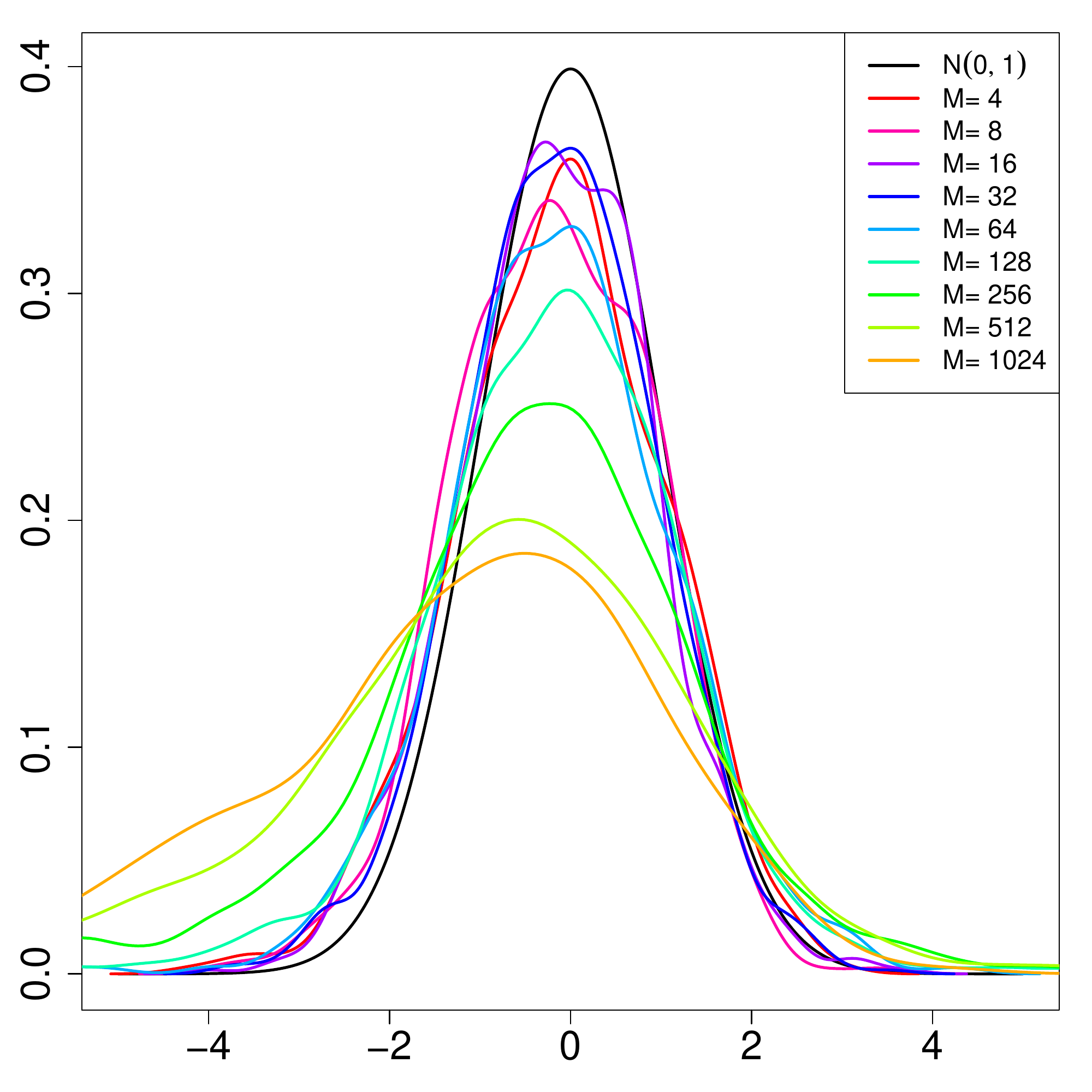}
  \caption{Model II}
\end{subfigure}
\centering
\begin{subfigure}{0.55\textwidth}
  \centering
  \includegraphics[scale = 0.4]{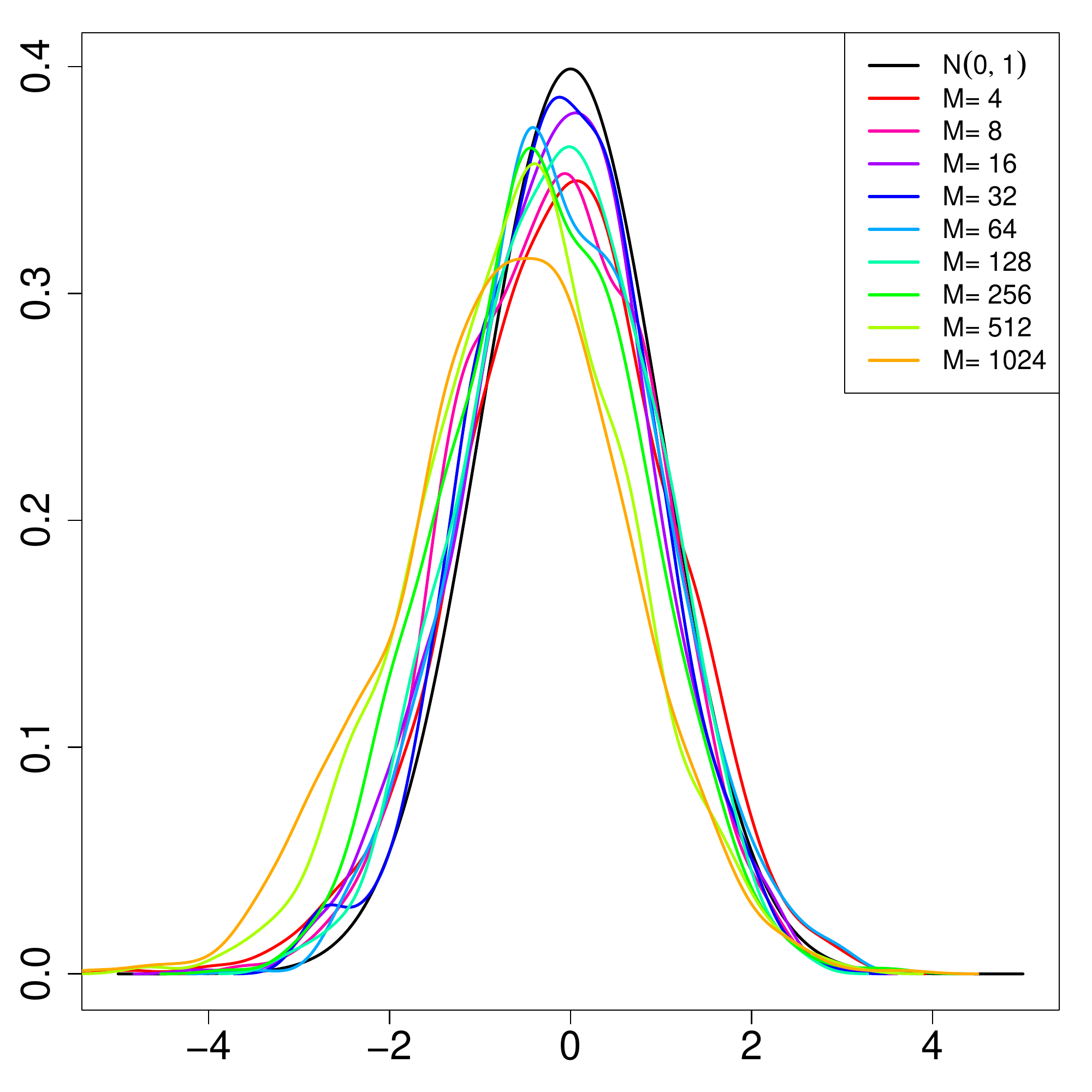}
   \caption{Model III}
\end{subfigure}%
\begin{subfigure}{.55\textwidth}
  \centering
  \includegraphics[scale = 0.4]{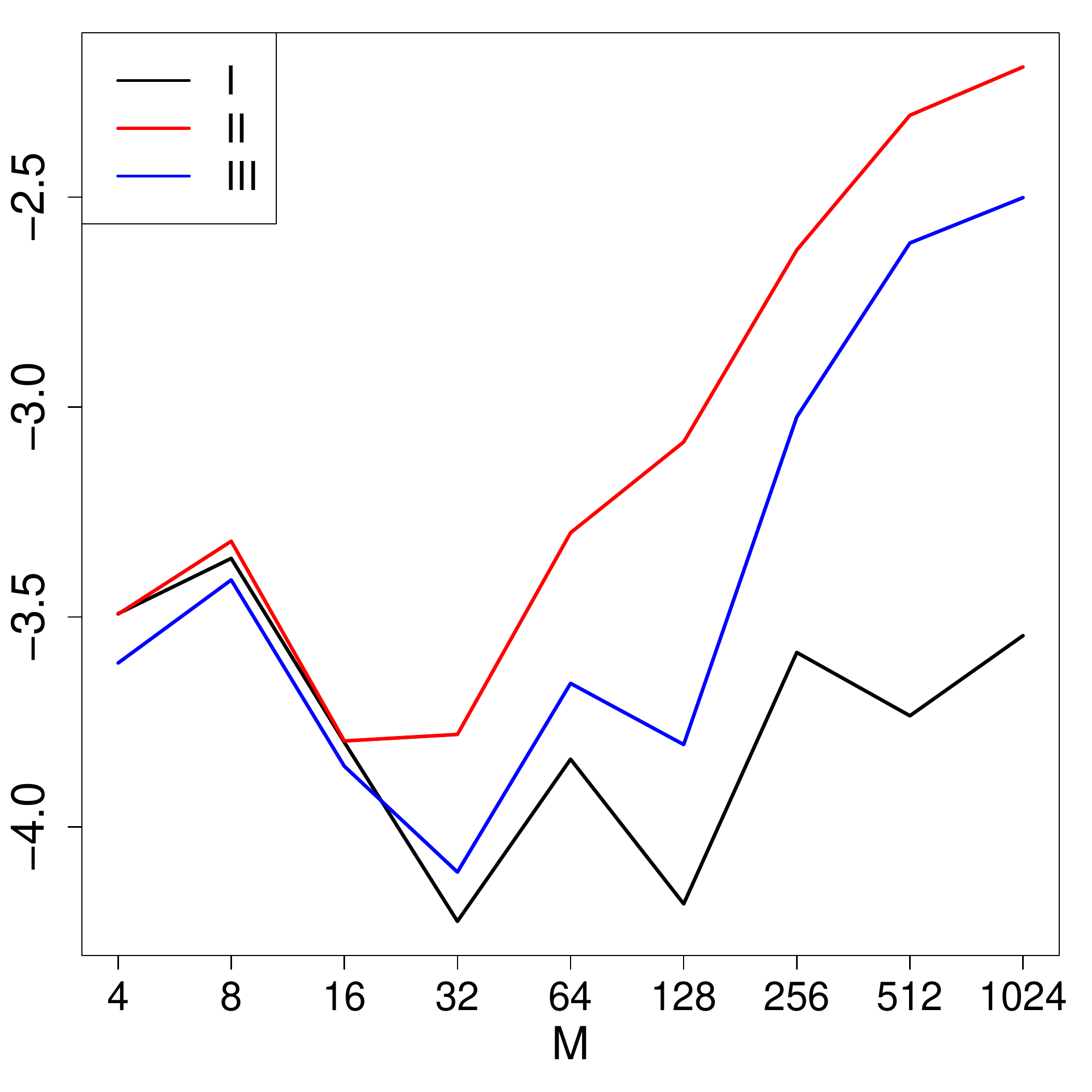}
  \caption{log MAE}
\end{subfigure}
\caption{\it (a)-(c) Estimated densities for different choices of $M$ with $T=4096$ compared to a standard normal distribution (black); (d) Natural logartihm of the mean absolute error compared to a standard normal distribution.}
\label{fig:dens}
\end{figure}

\begin{figure}
\centering
  \includegraphics[scale = 0.4]{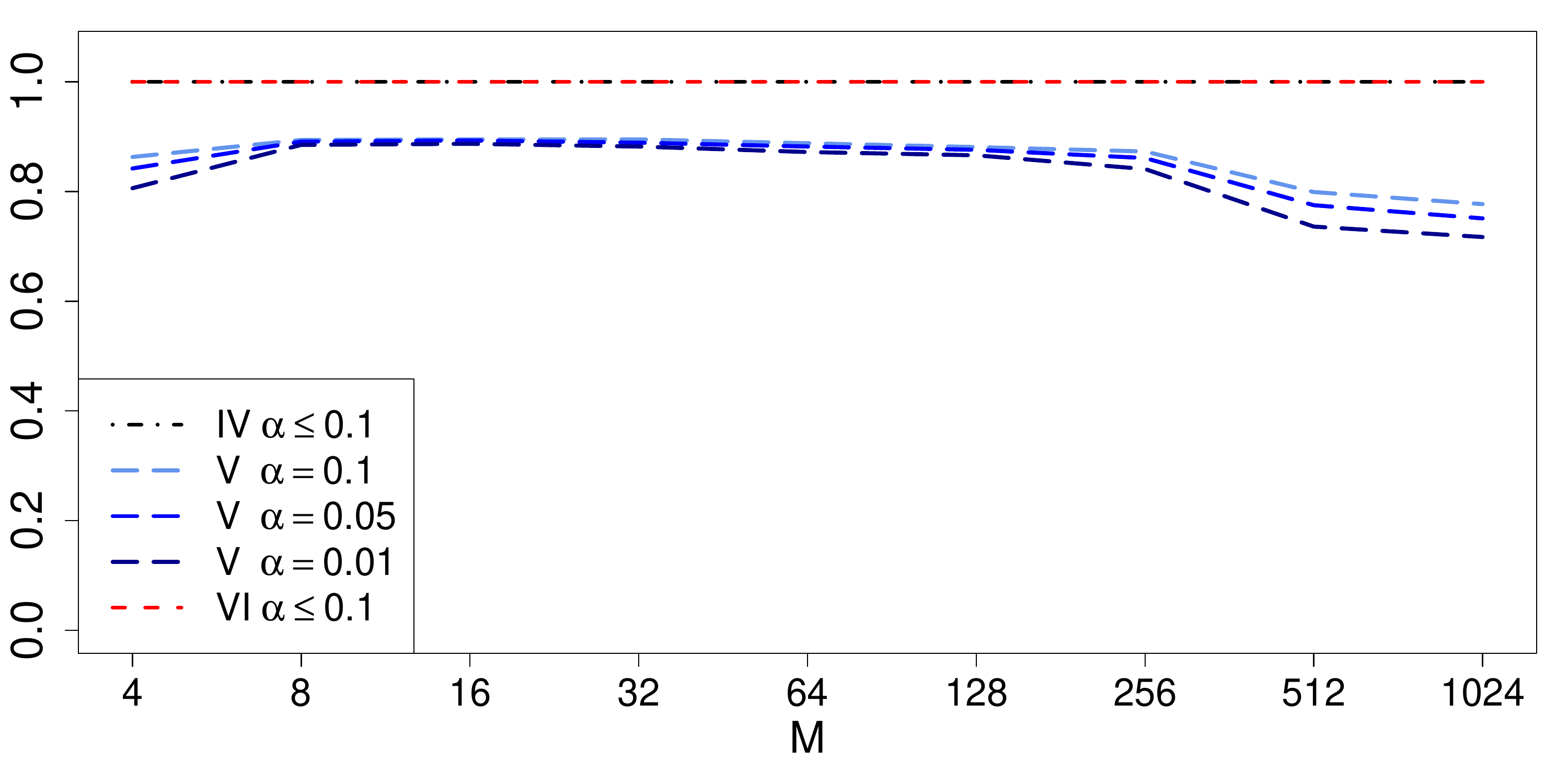}
   \caption{Rejection probabilities for the three alternative models for $T=4096$.}
  \label{fig:Rejpbalt}
\end{figure}

\subsection{Data example} \label{sec4b}
We illustrate the new methodology proposed in this  paper analyzing annual temperature curve data, recorded at several measuring stations across Australia. The recorded daily minimum temperatures for every year are treated as functional data. The locations of the measuring stations and lengths of the time series are reported in Table \ref{pval_data}. The temperature curves of Sydney and Boulia airport are exemplarily  presented in Figure \ref{fig:ts}.

\begin{figure}
\centering
\begin{subfigure}{.5\textwidth}
  \centering
  \includegraphics[scale = 0.4]{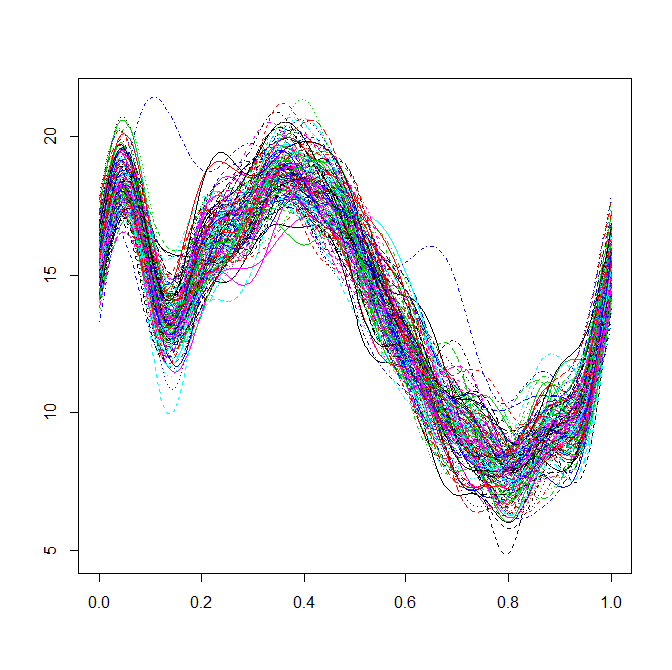}
   \caption{Sydney}
\end{subfigure}%
\begin{subfigure}{.5\textwidth}
  \centering
  \includegraphics[scale = 0.4]{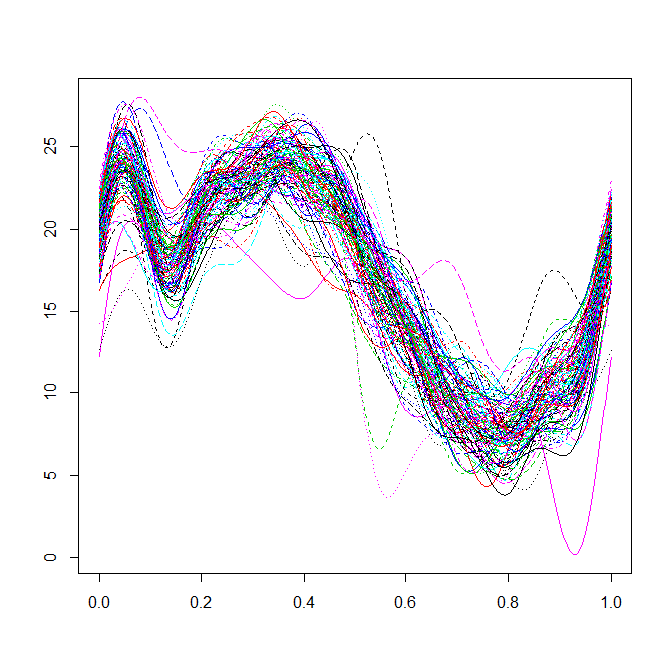}
  \caption{Boulia Airport}
\end{subfigure}
\caption{\it Time series of minimum temperature curves}
\label{fig:ts}
\end{figure}

We shall use the proposed test  in \eqref{testwn} to investigate  whether these temperature curves come from a stationary process or not. For the choice of the number of blocks, we use the above findings i.e., $M= \lceil T^{1/3}\rceil$. However, given the number of curves for each station, this value can be rather small and therefore we also consider $\lfloor T^{1/2}\rfloor$. As a fixed comparison for all curves, we take $M=8$ since the sample length is closest to $T=128$ (and often slightly larger).  The corresponding values of the test statistic  \eqref{testwn} for the hypothesis of stationarity are reported in \autoref{pval_data}. It is clear that we reject the null of stationarity in all cases at a 1\% significance level. The test therefore suggests strong evidence against the null hypothesis of stationarity for all measuring stations.

\begin{table*}
\begin{center}
\caption{\it Values of the test-statistic \eqref{testwn} for the hypothesis of stationarity of the annual temperature curve data}
\label{pval_data}
{\begin{tabular}{lcccc}
\hline
Measuring Station & T & $M=\lceil T^{1/3}\rceil$ & M=8 & $M= \lfloor T^{1/2}\rfloor$ \\
\hline
%Boulia Airport & 120& 0.00 & 0.00 & 0.00 \\  
%Cape Otway & 149& 0.00 & 0.00 & 0.00 \\  
%Gayndah Post Office & 117  & 0.00 & 0.00 & 0.00 \\  
%  Gunnedah Pool & 133 & 0.00 & 0.00 & 0.00 \\  
% Hobart & 121 & 0.00 & 0.00 & 0.00 \\  
% Melbourne & 158  & 0.00 & 0.00 & 0.00 \\    
% Robe & 129  & 0.00 & 0.00 & 0.00 \\   
% Sydney & 154  &0.00 & 0.00 & 0.00 \\ 
Boulia Airport & 120& 3.21 & 2.95 & 4.55 \\ 
Cape Otway & 149& 3.87 & 4.42 & 4.48 \\ 
Gayndah Post Office & 117  & 3.19 & 4.46 & 4.16 \\ 
 Gunnedah Pool & 133 & 4.33 & 3.72 & 5.04 \\ 
Hobart & 121 & 4.99 & 4.60 & 5.13 \\ 
 Melbourne & 158  & 2.88 & 3.68 & 4.36 \\ 
 Robe & 129  & 2.88 & 2.91 & 3.65 \\ 
Sydney & 154  &3.30 & 3.71 & 4.30 \\ 
\hline
\end{tabular}}
\end{center}
\end{table*}

\bigskip
\medskip

\noindent 
{\bf Acknowledgements.}
This work has been supported in part by the
Collaborative Research Center ``Statistical modeling of nonlinear
dynamic processes'' (SFB 823, Project A1, A7, C1) of the German Research Foundation (DFG). Vaidotas Characiejus would like to acknowledge the support of the Communaut\'e fran\c{c}aise de Belgique, Actions de Recherche Concert\'ees, Projects Consolidation 2016--2021. Anne van Delft gratefully acknowledges financial support by the contract ``Projet d'Actions de Recherche Concert{\'e}es'' No. 12/17-045 of the ``Communaut{\'e} fran\c{c}aise de Belgique''. 
% We are very grateful to two referees and an associate editor for their  constructive comments, which led to substantial improvement of  an earlier version of this manuscript.

\begin{appendices}   
\def\theequation{A.\arabic{equation}}
\setcounter{equation}{1}
\section{Auxiliary results and further proofs}
\label{appendix}

\subsection{Some properties of tensor products of operators} 

Let $\mathcal{H}_i$ for each $i=1,\ldots,n$ be a Hilbert space with inner product $\langle \cdot, \cdot \rangle$. The tensor of these is denoted by
\[
\mathcal{H}:= \mathcal{H}_1 \otimes \ldots \otimes \mathcal{H}_n = \bigotimes_{i=1}^n \mathcal{H}_i
\]

If $\mathcal{H}_i =\mathcal{H}\, \forall i$, then this is the n-th fold tensor product of $\mathcal{H}$. For $A_i \in H_i, 1\le i\le n$ the object $\bigotimes_{i=1}^n A_i$ is a multi-antilinear functional that generates a linear manifold, the usual algebraic tensor product of vector spaces $\mathcal{H}_i$, to which the scalar product  
\[
\innprod{\bigotimes_{i=1}^n A_i}{\bigotimes_{i=1}^n  B_i} = \prod_{i=1}^n\innprod{A_i}{B_i}
\]
can be extended to a pre-Hilbert space. The completion of the above algebraic tensor product is $\bigotimes_{i=1}^n \mathcal{H}_i$. \\
For $A, B, C \in \mathcal L(\mathcal{H})$ we define the following bounded linear mappings. The kronecker product is defined as $(A \widetilde{\bigotimes} B)C = ACB^{\dagger}$, while the transpose Kronecker product is given by $(A \widetilde{\bigotimes}_{\top} B)C = (A \widetilde{\bigotimes} \overline{B})\overline{C}^{\dagger}$. For $A, B, C \in S_2(\mathcal{H})$, we shall denote, in analogy to elements $a,b \in \mathcal{H}$, the Hilbert tensor product as $A \bigotimes B$. We list the following useful properties:
\begin{properties} \label{tensorprop}
Let $\mathcal{H}_i = L^2_\mathbb{C}([0,1]^k)$ for $i = 1,\ldots, n$. Then for $a_i, b_i \in \mathcal{H}_i$ and $A_i, B_i \in S_2(\mathcal{H}_i)$, we have 
\begin{enumerate}
\item $\innprod{A}{B}_{HS} =\Tr({A}{B}^{\dagger})$
\item $\innprod{\bigotimes_{i=1}^n A_i}{\bigotimes_{i=1}^n  B_i}_{{HS}} = \prod_{i=1}^n\innprod{A_i}{B_i}_{HS}$ 
\item $\innprod{a_1\otimes a_2}{b_1\otimes b_2}_{HS} =\innprod{a_1\otimes \overline{a}_2}{b_1\otimes \overline{b_2}}_{\mathcal{H}_i\otimes \mathcal{H}_i} =\innprod{a_1}{b_1}\overline{\innprod{a_2}{b_2}}$
\item If $A_i \in S_1(H)$, then $\prod_{i=1}^{n}\Tr(A_i) = \Tr(\widetilde{\bigotimes}_{i=1}^n  A_i)$
\item $\big((a_1 \otimes \overline{a}_2) \bigotimes (a_3 \otimes \overline{a}_4)\big) =\big((a_1 \otimes {a}_3) \widetilde{\bigotimes} (\overline{a}_2 \otimes \overline{a}_4)\big) =\big((a_1 \otimes {a}_4) \widetilde{\bigotimes}_{\top} {a}_2 \otimes {a}_3)\big)$
\end{enumerate}
\end{properties}

% \subsubsection{Cumulant tensors} \label{subsec:Cumtens}
Let $X$ be a random element on a probability space $(\Omega,\mathcal{A},\mathbb{P})$ that takes values in a separable Hilbert space $H$. More precisely, we endow $H$ with the topology induced by the norm on $H$ and assume that $X:\Omega\to H$ is Borel-measurable. The {\em $k$-th order cumulant tensor} is defined by \citep{vde16}
\begin{align}
\label{cumtens}
\cm \big( X_1, \ldots, X_k\big) 
= \sum_{l_1, \ldots l_k\in \nnum} \cm \big( \innprod{X_1}{\psi_{l_1}},\ldots,\innprod{X_k}{\psi_{l_k}}\big)
 (\psi_{l_1} \otimes \cdots \otimes \psi_{l_k}), 
\end{align}   
and the cumulants on the right hand side are as usual given by
\[
\cm\big(\innprod{X_1}{\psi_{l_1}},\ldots,\innprod{X_k}{\psi_{l_k}}\big)
=\sum_{\nu=(\nu_1,\ldots,\nu_p)}(-1)^{p-1}\,(p-1)!\,\prod_{r=1}^p \E\Big[\prod_{t\in\nu_r}\innprod{X_t}{\psi_{l_t}}\Big],
\]
where the summation extends over all unordered partitions $\nu$ of $\{1,\ldots,k\}$. The product theorem for cumulants \citep[Theorem 2.3.2]{brillinger} can then be generalised \citep[see e.g.][Theorem A.1]{avd16_main} to simple tensors of random elements of $H$, i.e., $X_t=\otimes_{j=1}^{J_t}X_{tj}$ with $j=1,\ldots,J_t$ and $t=1,\ldots,k$. The joint cumulant tensor is then be given by 
\begin{align} \label{prodcumthm} 
\cm(X_1,\ldots,X_k)
=\sum_{\nu=(\nu_1,\ldots,\nu_p)}S_\nu\Big(\otimes_{n=1}^{p}
\cm\big(X_{tj}|(t,j)\in\nu_n\big)\Big),
\end{align}
where $S_\nu$ is the permutation that maps the components of the tensor back into the original order, that is, $S_\nu(\otimes_{r=1}^p\otimes_{(t,j)\in\nu_r} X_{tj})=X_{11}\otimes\cdots\otimes X_{kJ_t}$.

\subsection{Bounds on cumulant tensors of local functional DFT}\label{sec:propcumlfdft}
In this section, we prove \autoref{lem:cumfixu}, \autoref{lem:cumdifu}.
% A direct consequence of \autoref{lem:cumfixu} is the following corollary
%\begin{Corollary} \label{cor:cumbound}
%We have
%$$\bigsnorm{\cm\left(D_N^{u_1,\omega_1},\dots,D_N^{u_k,\omega_k}\right)}_p = O\left(N^{1-k/2}\right)$$
%uniformly in $\omega_1,\ldots,\omega_k$ and $u_1,\ldots, u_k$. Moreover, if $~\sum_{j=1}^{k}\omega_j~ \ne 0\mod 2\pi$ then
% $$\bigsnorm{\cm\left(D_N^{u_1,\omega_1},\dots,D_N^{u_k,\omega_k}\right)}_p =O\left(N^{-k/2}\right).$$ 
%\end{Corollary}
Before we give the proofs, denote the function $\Delta^{(N)}(\omega)=\sum_{t=0}^{N-1}e^{-\im\omega t}$ for $\omega\in\rnum$. This function satisfies $|\Delta^{(N)}(\sum_{j=1}^{k} \omega_{j})|=N$ for any $\omega_1,\ldots,\omega_k$ for which their sum lies on the manifold $\omega \equiv 0 \mod 2\pi$, while it is of reduced magnitude off the manifold. For the canonical frequencies $\omega_k=2\pi k/N$ with $k\in\mathbb Z$, we moreover have
\begin{equation}\label{eq:delta}
\Delta^{(N)}(\omega_k)	=
\begin{cases}
N,&k\in N\mathbb Z;\\
0,&k\in\mathbb Z\setminus N\mathbb Z.
\end{cases}
\end{equation}

\begin{proof}[Proof of \autoref{lem:cumfixu}]
Let $p \in \{1,2\}$. Using linearity of cumulants we write 
\begin{align*} 
&\cm\left(D_N^{u_i,\omega_1},\dots,D_N^{u_i,\omega_k}\right)\\
= & \frac{1}{{(2\pi N)}^{k/2}}\sum_{s_1,\dots,s_k=0}^{N-1}\exp\left(-\im \sum_{j=1}^ks_j\omega_j\right)\cm\left(X_{\lfloor u_iT \rfloor - N/2 +s_1+1,T},\dots, X_{\lfloor u_iT \rfloor - N/2 +s_k+1,T}\right)\\
= &\frac{1}{{(2\pi N)}^{k/2}}\sum_{s_1,\dots,s_k=0}^{N-1}\exp\left(-\im \sum_{j=1}^ks_j\omega_j\right)C_{u_i-\frac{N/2-s_1-1}{T},s_1-s_k,s_2-s_k,\dots,s_{k-1}-s_k} + R_{k,M,N}^1, \tageq \label{eq:lemcumD}
\end{align*}
where $C_{u_i-\frac{N/2-s_1-1}{T}}$ denotes the local cumulant operator 
%on $L^2([0,1]^{k/2})$ 
as defined in \autoref{sec3}. Using Lemma A.2 of \citet{avd16_main} and Assumption \ref{cumglsp} of 
\begin{align*} \bigsnorm{R_{k,M,N}^1}_p \leq &\frac{1}{{(2\pi N)}^{k/2}}\sum_{s_1,\dots,s_k=0}^{N-1} \big(\frac{k}{T}+ \sum_{j=1}^{k-1}\frac{|s_j-s_k|}{T}\big)\snorm{\kappa_{k,s_1-s_k,\dots,s_{k-1}-s_k}}_p \\ & \le 
\frac{1}{{(2\pi N)}^{k/2}}\sum_{s_k=0}^{N-1} \frac{1}{T} \sum_{l_1,\dots,l_{k-1} \in \mathbb{Z}} \big(1+ \sum_{j=1}^{k-1}|l_j|\big)\snorm{\kappa_{k,l_1,\dots,l_{k-1}}}_p= O\left(N^{-k/2} M^{-1} \right).
\end{align*}
In addition, we can write the first term of \eqref{eq:lemcumD} as
\begin{align*}
= &\frac{1}{{(2\pi N)}^{k/2}}\sum_{s_1,\dots,s_k=0}^{N-1}\exp\left(-\im s_k\sum_j\omega_j\right)\exp\left(-\im \sum_{j=1}^{k-1}\omega_j(s_j-s_k)\right)C_{u_i-\frac{N/2-s_k-1}{T},s_1-s_k,s_2-s_k,\dots,s_{k-1}-s_k}\\
= &\frac{(2\pi)^{1-k/2}}{N^{k/2}}\sum_{s=0}^{N-1}e^{-\im s \sum_j\omega_j}\F_{u_i-\frac{N/2-s-1}{T},\omega_1,\dots,\omega_{k-1}}+R_{k,M,N}^2
\end{align*}
where 
\begin{align*}\bigsnorm{R_{k,M,N}^2}_p &\leq  \frac{1}{{(2\pi N)}^{k/2}}\sum_{s=0}^{N-1}\sum_{\substack{j:1,\ldots k-1:\\\vert s_j-s \vert \geq N-1}}\bigsnorm{C_{u-\frac{N/2-s-1}{T},s_1-s,s_2-s,\dots,s_{k-1}-s}}_p\\ &\leq  \frac{1}{{(2\pi N)}^{k/2}}\sum_{s=0}^{N-1}\sum_{\substack{j:1,\ldots k-1:\\\vert s_j-s \vert \geq N-1}}\snorm{\kappa_{k,s_1-s,s_2-s,\dots,s_{k-1}-s}}_p.\\
&\leq   \frac{1}{{(2\pi N)}^{k/2}}\sum_{s=0}^{N-1}\frac{1}{N^2}\sum_{\substack{j:1,\ldots k-1:\\\vert l_j \vert > N}}N^2\snorm{\kappa_{k,l_1,l_2,\dots,l_{k-1}}}_p\\
& \leq  \frac{1}{{(2\pi N)}^{k/2}}\frac{1}{N}\sum_{\substack{j:1,\ldots k-1:\\\vert l_j \vert > N}}\vert l_j\vert^{2} \snorm{\kappa_{k,l_1,l_2,\dots,l_{k-1}}}_p = o\left(N^{-k/2}\right).\end{align*}
Therefore, the cumulants satisfy
\begin{align*} 
\cm\left(D_N^{u_i,\omega_1},\dots,D_N^{u_i,\omega_k}\right)=\frac{(2\pi)^{1-k/2}}{N^{k/2}}\sum_{s=0}^{N-1}e^{-\im s \sum_j\omega_j}\F_{u_i-\frac{N/2-s-1}{T},\omega_1,\dots,\omega_{k-1}}+R_{k,M,N}^1+R_{k,M,N}^2.
\end{align*}
On the manifold $\sum_{j=1}^{k} \omega_{j} \equiv 0 \mod 2\pi$ we have that $e^{-\im s \sum_j\omega_j}=1$. Assumption \ref{cumglsp}(iv) and a Taylor expansion yield 
\begin{align*}
\cm\left(D_N^{u_i,\omega_1},\dots,D_N^{u_i,\omega_k}\right)= &\frac{(2\pi)^{k/2-1}}{N^{k/2-1}}\F_{u_i,\omega_1,\dots,\omega_{k-1}} + R_{k,M,N}^1+ R_{k,M,N}^2 + R_{k,M,N}^3,
\end{align*}
where 
\begin{align*}\snorm{R_{k,M,N}^3}_p & = \bigsnorm{\frac{(2\pi)^{k/2-1}}{N^{k/2}}\sum_{\ell =1}^{2}\sum_{s=0}^{N-1} \Big(\frac{1-N/2+s}{T}\Big)^{\ell}\frac{\partial^{\ell}}{\partial u^{\ell}} \F_{u,\omega_1,\dots,\omega_k}}_p 
\\ & \le \frac{(2\pi)^{k/2-1}}{N^{k/2}}O\Big(\frac{N}{T}+\frac{N}{M^2}\Big)\sum_{\ell =1}^{2}\sup_{u,\omega_1,\dots,\omega_k}\bigsnorm{\frac{\partial^{\ell}}{\partial u^{\ell}} \F_{u,\omega_1,\dots,\omega_k}}_p = O\left(N^{-k/2} \Big(\frac {N}{T}+\frac{N}{M^2}\Big) \right)
\end{align*}
which follows since $\sum_{s=0}^{N-1}\Big(\frac{1-N/2+s}{T}\Big)=\frac{(N-1)(1-N/2+N/2)}{T}=\frac{N}{2T}$ and similarly $\sum_{s=0}^{N-1}\Big(\frac{1-N/2+s}{T}\Big)^{2}=O\big(\frac{N^3}{T^2}\big)$. We additionally note that, off manifold, the first term of \eqref{eq:lemcumD} can be bounded in norm by
\begin{align*}
= &\bigsnorm{\frac{1}{{(2\pi N)}^{k/2}}\sum_{s_1,\dots,s_k=0}^{N-1}\exp\left(-\im s_k\sum_j\omega_j\right)\exp\left(-\im \sum_{j=1}^{k-1}\omega_j(s_j-s_k)\right)C_{u_i-\frac{N/2-s_k-1}{T},s_1-s_k,s_2-s_k,\dots,s_{k-1}-s_k}}_p\\
\le &  \frac{K}{{(2\pi N)}^{k/2}}\sum_{|l_1|\dots,|l_{k-1}|<N} \Big|\sum_{s_k=0}^{N-|l_j^{*}|}e^{-\im s_k\sum_j\omega_j} \Big|\bigsnorm{C_{u_i-\frac{N/2-s_k-1}{T},l_1,l_2,\dots,l_{k-1}}}_p
\\ \le &  \frac{K}{{(2\pi N)}^{k/2}}\sum_{|l_1|\dots,|l_{k-1}|<N} |l_j^{*}|\snorm{\cumbp{k}{l}}_p = O\left(N^{-k/2}\right),
\end{align*}
for some constant $K$ and where $j^{*}=\underset{{j=1,\ldots,k-1}}{\arg \max} |l_j|$. This finishes the proof of Lemma \ref{lem:cumfixu}.
\end{proof}

\begin{proof}[Proof of \autoref{lem:cumdifu}]
Using again the linearity of cumulants we write 
\begin{align*}
\cm & \left(D_N^{u_{j_1},\omega_1},\dots,D_N^{u_{j_k},\omega_k}\right)\\&=  \frac{1}{(2\pi N)^{k/2}}\sum_{s_1,\dots,s_k=0}^{N-1}\exp\left(-\im \sum_{v=1}^ks_v\omega_v\right)\cm\left(X_{\lfloor u_{j_1}T \rfloor - N/2 +s_1+1,T},\dots, X_{\lfloor u_{j_k}T \rfloor - N/2 +s_k+1,T}\right)\\
&=\frac{1}{(2\pi N)^{k/2}}\sum_{s_1,\dots,s_k=0}^{N-1}\exp\left(-\im \sum_{v=1}^ks_v\omega_v\right)C_{u_k',\lfloor u_{j_1}T \rfloor - \lfloor u_{j_k}T \rfloor +s_1-s_k,\dots,\lfloor u_{j_{k-1}}T \rfloor - \lfloor u_kT \rfloor +s_{k-1}-s_k}+ R_{k,M,N}^1
\end{align*}
where $u_k'= u_k -\frac{N/2-s_k-1}{T}$ and where $R_{k,M,N}^1$ is the error term derived in Lemma \ref{lem:cumfixu}. Let \[l_m =\lfloor u_{j_{m}}T \rfloor - \lfloor u_kT \rfloor +s_{m}-s_k \leftrightarrow s_m = t_{j_k} - t_{j_m} +l_m-s_k\qquad m =1,\ldots k-1\] 
Similar to the proof of \autoref{lem:cumfixu}, we note that
\[\sum_{\substack{l_1,l_2, \ldots l_{k-1},\\ |l_m|>N}} \snorm{C_{u_k', l_1,\dots,l_m,\ldots,l_{k-1}}}_p \le \sum_{\substack{l_1,l_2, \ldots l_{k-1}\\, |l_m|>N}} \frac{|l_m|^2}{N^2}\snorm{\kappa_{k,l_1,\dots,l_{k-1}}}_p=O(N^{-2}).\]
From which it follows that if $|l_m|>N$, the term
\begin{align*}
& \bigsnorm{ \frac{1}{(2\pi N)^{k/2}}\sum_{s_k=0}^{N-1}e^{-\im s_k\sum_{v=1}^{k} \omega_v} \sum_{\substack{l_1,l_2, \ldots l_{k-1},\\ |l_m|>N}}e^{-\im \sum_{v=1}^{k-1} (t_{j_v} - t_{j_k} +l_v)\omega_v}  C_{u_k', l_1,\dots,l_{k-1}}}_p
\end{align*}
is bounded by   
\begin{align*}& \frac{1}{(2\pi N)^{k/2}}\sum_{s=0}^{N-1}\sum_{\substack{{l_1,l_2,\dots,l_{k-1}}\\{|l_m| \geq N}}}\snorm{\kappa_{k,l_1,\dots, l_m, \ldots,l_{k-1}}}_2 \\& \le \frac{1}{(2\pi N)^{k/2+2}} \sum_{s=0}^{N-1} \sum_{\substack{{l_1,l_2,\dots,l_{k-1}}\\{|l_1| \geq N}}} |l_m|^2\snorm{\kappa_{k,l_1,\dots, l_m, \ldots,l_{k-1}}}_p  = O\left(N^{-k/2-1}\right). 
\end{align*}
 \end{proof}

%%%%%%%%%%%%%%%%%%%%%%%%%%%%%%%%%%%%%%

%%%%%%%%%%%%%%%%%%%%%%%%%%%%%%%%%%%%%%%%%%%%%%%%%%%%%%%

\subsection{Derivation of covariance %and higher order cumulants
} \label{sec:cov} 
\subsubsection{Covariance structure of $\sqrt{T}\hat{F}_{1,T}$}
\[
T \cv(\hat{F}_{1,T}, \hat{F}_{1,T}) 
= T \cm \Big (\frac{1}{T}\sum_{k_1=1}^{\lfloor N/2 \rfloor} \sum_{j_1=1}^M \innprod{ I_N^{u_{j_1},\omega_{k_1}}}{I_N^{u_{j_1},\omega_{k_1-1}}}_{HS},\frac{1}{T}\sum_{k_2=1}^{\lfloor N/2 \rfloor} \sum_{j_2=1}^M \overline{\innprod{I_N^{u_{j_2},\omega_{k_2}}}{I_N^{u_{j_2},\omega_{k_2-1}}}}_{HS}  \Big ) . 
\]
Using again \autoref{lem:cumHSinprod} 
\begin{align*}
	\cm_2(\sqrt{T}\hat F_{1,T}) 
=\frac{1}{T}\sum_{k_1,k_2=1}^{\lfloor N/2 \rfloor} \sum_{j_1,j_2=1}^M\Tr\Big(\sum_{\boldsymbol{P} = P_1 \cup \ldots \cup P_G}S_{\boldsymbol{P}}\Big(\otimes_{g=1}^{G} \operatorname{cum}\big(\fdft{p}{p}|p\in \nu_g\Big) \Big), \end{align*}
where $p=(l,m)$ with $k_{p} = (-1)^{l-m} k_l-\delta_{\{m \in \{3,4\}\}}$ and $j_p= j_l$ for $l \in\{1, 2\}$ and $m \in \{1,2,3,4\}$ and where $\delta_{\{A\}}$ equals 1 if event $A$ occurs and $0$ otherwise. In particular, we are interested in all indecomposable partitions of the array
\[\begin{matrix}
\underbrace{D_N^{u_{j_1},\omega_{k_1}}}_{1}&\underbrace{D_N^{u_{j_1},-\omega_{k_1}}}_2 & \underbrace{D_N^{u_{j_1},-\omega_{k_1-1}}}_3& \underbrace{D_N^{u_{j_1},\omega_{k_1-1}}}_4\\ 
\underbrace{D_N^{u_{j_2},-\omega_{k_2}}}_5 &\underbrace{D_N^{u_{j_2},\omega_{k_2}}}_6 &\underbrace{D_N^{u_{j_2},\omega_{k_2-1}}}_7& \underbrace{D_N^{u_{j_2},-\omega_{k_2-1}}}_8
\end{matrix}\]
By \autoref{lem:highcumF1}, all partitions of size $G < 3$, will be of lower order. By Proposition \autoref{prop:cumhighF1frequb}), the only partitions that remain are those that contain either one fourth-order cumulant and two second-order cumulants or those consisting only of second-order cumulants. Additionally, Corollary \ref{cor:cumbound} and Lemma \ref{lem:cumdifu} indicate that for the partitions with structure $\cm_4\cm_2 \cm_2$ to be indecomposable there must be at least one restriction in time. More restrictions in terms of frequency would mean the partition term is of lower order. \\

For the structure $\cm_4\cm_2 \cm_2$, the only significant terms are therefore  
\begin{align*}
& \Tr\Big(S_{(1256)(34)(78)}\Big(\delta_{j_1,j_2}\left[(\frac{2 \pi}{N}\F_{u_{j_1},\omega_{k_1},-\omega_{k_1},-\omega_{k_2}}+\Eps_4) \otimes (\F_{u_{j_1},-\omega_{k_1-1}}+\Eps_2) \otimes (\F_{u_{j_2},\omega_{k_2-1}}+\Eps_2)\right] \Big)\\ 
& \Tr\Big(S_{(1278)(34)(56)}\Big(\delta_{j_1,j_2}\left[ (\frac{2 \pi}{N}\F_{u_{j_1},\omega_{k_1},-\omega_{k_1},\omega_{k_2-1}}+\Eps_4) \otimes (\F_{u_{j_1},-\omega_{k_1-1}}+\Eps_2) \otimes ( \F_{u_{j_2},-\omega_{k_2}}+\Eps_2) \right]\Big)\\ 
&\Tr\Big(S_{(3456)(12)(78)}\Big(\delta_{j_1,j_2}\left[(\frac{2 \pi}{N}\F_{u_{j_1},-\omega_{k_1-1},\omega_{k_1-1},-\omega_{k_2}}+\Eps_4)\otimes (\F_{u_{j_1},\omega_{k_1}}+\Eps_2) \otimes(\F_{u_{j_2},\omega_{k_2-1}}+\Eps_2)\right] \Big) 
\\ 
& \Tr\Big(S_{(3478)(12)(56)} \Big(\delta_{j_1,j_2}\left[(\frac{2 \pi}{N}\F_{u_{j_1},-\omega_{k_1-1},\omega_{k_1-1},\omega_{k_2-1}}+\Eps_4) \otimes ( \F_{u_{j_1},\omega_{k_1}}+\Eps_2) \otimes (\F_{u_{j_2},-\omega_{k_2}}+\Eps_2)\right]\Big), 
\end{align*}
where $\Eps_k$ denotes an operator on $L^2_\mathbb{C}([0,1]^{\lfloor k/2 \rfloor})$ that satisfies $\snorm{\Eps_k}_{1}=O(N^{-k/2}\times\frac{N}{M^2})$. 
For the partitions with structure $\cm_2 \cm_2 \cm_2 \cm_2$, there must be at least one restriction in terms of time and frequency for the partition to be indecomposable. Those with more than the minimum restrictions are of lower order. For the structure $\cm_2 \cm_2 \cm_2 \cm_2$, the significant indecomposable partitions  are 
\begin{align*}
\Tr\Big(S_{(12)(37)(56)(48)}&\Big(\delta_{j_1,j_2} \delta_{k_1,k_2} \big[\F_{u_{j_1},\omega_{k_1}}\otimes \F_{u_{j_1},-\omega_{k_1-1}}\otimes \F_{u_{j_2},-\omega_{k_2}} \otimes \F_{u_{j_1},\omega_{k_1-1}}+\Eps_2\big]\Big)\\ 
\Tr\Big(S_{(12)(36)(78)(45)}& \Big(\delta_{j_1,j_2} \delta_{k_1-1,k_2} \big[\F_{u_{j_1},\omega_{k_1}}\otimes\F_{u_{j_1},-\omega_{k_1-1}}\otimes \F_{u_{j_2},\omega_{k_2-1}} \otimes \F_{u_{j_1},\omega_{k_1-1}}+\Eps_2\big]\Big)\\  
\Tr\Big(S_{(15)(26)(37)(48)}&\Big(\delta_{j_1,j_2} \delta_{k_1,k_2} \big[\F_{u_{j_1},\omega_{k_1}}\otimes \F_{u_{j_1},-\omega_{k_1}}\otimes \F_{u_{j_1},-\omega_{k_1-1}} \otimes \F_{u_{j_1},\omega_{k_1-1}}+\Eps_2\big]\Big) \\ 
\Tr\Big(S_{(15)(26)(34)(78)}&\Big(\delta_{j_1,j_2} \delta_{k_1,k_2} \big[\F_{u_{j_1},\omega_{k_1}}\otimes \F_{u_{j_1},-\omega_{k_1}}\otimes \F_{u_{j_1},-\omega_{k_1-1}}\otimes \F_{u_{j_2},\omega_{k_2-1}}+\Eps_2\big]\Big)  \\ 
\Tr\Big(S_{(18)(27)(34)(56)}&\Big(\delta_{j_1,j_2} \delta_{k_1,k_2-1} \big[\F_{u_{j_1},\omega_{k_1}}\otimes \F_{u_{j_1},-\omega_{k_1}}\otimes \F_{u_{j_1},-\omega_{k_1-1}}\otimes \F_{u_{j_2},-\omega_{k_2}}+\epsilon_2\big]\Big).
\end{align*}
Using Remark \autoref{rem:2ndstruc} and in particular equation  \eqref{eq:2ndcum4} and  \eqref{eq:2ndcum2} below, the corresponding terms of the covariance equal
\begin{align*}
	T \cv(\hat{F}_{1,T},& \hat{F}_{1,T})
	=\frac{1}{T}\sum_{k_1,k_2=1}^{\lfloor N/2 \rfloor} \sum_{j_1,j_2=1}^M\delta_{j_1,j_2}\left[ \biginnprod{\frac{2 \pi}{N}\F_{u_{j_1},\omega_{k_1},-\omega_{k_1},-\omega_{k_2}}}{\F_{u_{j_1},\omega_{k_1-1}} \bigotimes \F_{u_{j_2},\omega_{k_2-1}} }_{HS} +O(\frac{1}{T})\right]\\
	&+\frac{1}{T}\sum_{k_1,k_2=1}^{\lfloor N/2 \rfloor} \sum_{j_1,j_2=1}^M\delta_{j_1,j_2}\left[ \biginnprod{ \frac{2 \pi}{N}\F_{u_{j_1},\omega_{k_1},-\omega_{k_1},\omega_{k_2-1}}}{ \F_{u_{j_1},\omega_{k_1-1}}\bigotimes \F_{u_{j_2},-\omega_{k_2}}}_{HS} +O(\frac{1}{T})\right]\\
	&+\frac{1}{T}\sum_{k_1,k_2=1}^{\lfloor N/2 \rfloor} \sum_{j_1,j_2=1}^M\delta_{j_1,j_2}\left[ \biginnprod{\frac{2 \pi}{N}\F_{u_{j_1},-\omega_{k_1-1},\omega_{k_1-1},-\omega_{k_2}}}{\F_{u_{j_1},-\omega_{k_1}} \bigotimes \F_{u_{j_2},\omega_{k_2-1}}}_{HS} +O(\frac{1}{T})\right]\\
	&+\frac{1}{T}\sum_{k_1,k_2=1}^{\lfloor N/2 \rfloor} \sum_{j_1,j_2=1}^M\delta_{j_1,j_2}\left[ \biginnprod{(\frac{2 \pi}{N}\F_{u_{j_1},-\omega_{k_1-1},\omega_{k_1-1},\omega_{k_2-1}}}{\F_{u_{j_1},-\omega_{k_1}} \bigotimes \F_{u_{j_2},-\omega_{k_2}}}_{HS} +O(\frac{1}{T})\right]\\
	&+\frac{1}{T}\sum_{k_1,k_2=1}^{\lfloor N/2 \rfloor} \sum_{j_1,j_2=1}^M\delta_{j_1,j_2} \delta_{k_1,k_2} \left[ 
\innprod{{\F^{\dagger}_{u_{j_1},-\omega_{k_1}}} \F_{u_{j_1},-\omega_{k_1-1}}}{{\F_{u_{j_1},-\omega_{k_1-1}}}\F^{\dagger}_{u_{j_2},-\omega_{k_2}}} +O(\frac{1}{M^2})\right]\\
	&+\frac{1}{T}\sum_{k_1,k_2=1}^{\lfloor N/2 \rfloor} \sum_{j_1,j_2=1}^M\delta_{j_1,j_2} \delta_{k_1-1,k_2}  \left[ \innprod{{\F^{\dagger}_{u_{j_1},-\omega_{k_1}}}\F_{u_{j_1},-\omega_{k_1-1}}}{\F_{u_{j_1},-\omega_{k_1-1}}\F_{u_{j_2},-\omega_{k_2-1}}}_{HS}
 +O(\frac{1}{M^2})\right]
 \\& +\frac{1}{T}\sum_{k_1,k_2=1}^{\lfloor N/2 \rfloor} \sum_{j_1,j_2=1}^M\delta_{j_1,j_2} \delta_{k_1,k_2}   \left[ \innprod{\F_{u_{j_1},\omega_{k_1}}}{\F_{u_{j_1},\omega_{k_1-1}}}_{HS}\innprod{ \F_{u_{j_1},-\omega_{k_1}}}{ \F_{u_{j_1},-\omega_{k_1-1}}}_{HS}
 +O(\frac{1}{M^2})\right]
  \\& +\frac{1}{T}\sum_{k_1,k_2=1}^{\lfloor N/2 \rfloor} \sum_{j_1,j_2=1}^M\delta_{j_1,j_2} \delta_{k_1,k_2}   \left[\innprod{\F_{u_{j_1},\omega_{k_1}} \widetilde{\bigotimes} \F_{u_{j_1},\omega_{k_1}}}{\F_{u_{j_1},\omega_{k_1-1}} \bigotimes \F_{u_{j_2},\omega_{k_2-1}}}_{HS}
 +O(\frac{1}{M^2})\right]
   \\& +\frac{1}{T}\sum_{k_1,k_2=1}^{\lfloor N/2 \rfloor} \sum_{j_1,j_2=1}^M
 \delta_{j_1,j_2} \delta_{k_1,k_2-1}\left[\innprod{\F_{u_{j_1},\omega_{k_1}} \widetilde{\bigotimes}_{\top}  \F_{u_{j_1},-\omega_{k_1}}}{ \F_{u_{j_1},\omega_{k_1-1}} \bigotimes \F_{u_{j_2},-\omega_{k_2}}}_{HS}+O(\frac{1}{M^2})\right]
\end{align*}

So that, as $N, M \to \infty$,
\begin{align*}
NM \cv (\hat{F}_{1,T}, \hat{F}_{1,T})  \to 
&  \frac{2}{8\pi} \int_{-\pi}^{\pi} \int_{-\pi}^{\pi}\int_0^{1} 
\biginnprod{\F_{u,\omega_{1},-\omega_{1},-\omega_{2}}}{\F_{u,\omega_{1}} \bigotimes \F_{u,\omega_{2}} }_{HS}du d\omega_1 d\omega_2 \\ 
&+   \frac{2}{8\pi} \int_{-\pi}^{\pi} \int_{-\pi}^{\pi}\int_0^{1} \biginnprod{ \F_{u,\omega_{1},-\omega_{1},\omega_{2}}}{ \F_{u,\omega_{1}}\bigotimes \F_{u,-\omega_{2}}}_{HS}  du d\omega_1 d\omega_2 \\ 
 &+  \frac{2}{4\pi} \int_{-\pi}^{\pi} \int_0^{1} \snorm{\F^2_{u,\omega}}^2_2 du d\omega \\ 
  &+  \frac{1}{4\pi} \int_{-\pi}^{\pi} \int_0^{1}\snorm{\F_{u,\omega}}^4_2 du d\omega \\ 
 &+  \frac{1}{4\pi} \int_{-\pi}^{\pi} \int_0^{1}   \innprod{\F_{u,\omega} \widetilde{\bigotimes} \F_{u,\omega}}{\F_{u,\omega} \bigotimes \F_{u,\omega}}_{HS}du d\omega \\ 
 &+  \frac{1}{4\pi} \int_{-\pi}^{\pi} \int_0^{1} \innprod{\F_{u,\omega} \widetilde{\bigotimes}_{\top} \F_{u,-\omega}}{ \F_{u,\omega}\bigotimes\F_{u,-\omega}}_{HS} du d\omega 
\end{align*}
where we used the self-adjointness of the spectral density operator and that, for any function $g: \mathbb{R} \to \mathbb{K}$, we have $\int_{-\pi}^{\pi} g(\omega) d\omega=\int_{-\pi}^{\pi} g(-\omega) d\omega $. (From this it follows that the term 1,4 ;2,3 and 5,6 are respectively equal in the limit). 

\begin{Remark}[\textbf{A note on the permutation for the 2nd order cumulants}]\label{rem:2ndstruc}
%%%%%%%%%%%%%%%%%%%%%%%%%%%%%%%%%%%%%%%%%%%%%
%%%%%%%%%%%%%%%%%%%%%%%%%%%%%%%5
{\rm In order to give meaning to the covariance structure, we need to investigate how it `operates' as a result of the permutation that occurs due to the cumulant operation. For the second order cumulant structure, Theorem \ref{lem:cumHSinprod}] implies that the original order of the simple tensors has structure $\Tr(S_{1234}\cdot \otimes \cdot \otimes \cdot \otimes \cdot \widetilde{\bigotimes} S_{5678} \cdot \otimes \cdot \otimes \cdot \otimes \cdot )$ which leads to the following correspondence of simple tensors
\[
\begin{matrix}
1 &\leftrightarrow& 3 \\
2 &\leftrightarrow & 4\\
5 &\leftrightarrow & 7 \\
6 &\leftrightarrow & 8 \\ 
\end{matrix}
\]
Let $X \in  \mathcal{H}^{{\otimes^4}}$ and $Y, Z,  X \in  \mathcal{H}^{\otimes^2}$, then using properties \autoref{tensorprop} we find
\begin{align*}
 \Tr\Big(S_{(1256)(12)(56)} X \otimes Y \otimes Z)&= \Tr\big( X (\overline{Y} \bigotimes {Z}) ^{\dagger}\big) =\innprod{X}{\overline{Y} \bigotimes {Z}}_{HS}\\
 \Tr\Big(S_{(1256)(12)(56)} X \otimes Y \otimes Z)&= \Tr\big( X (\overline{Y} \bigotimes {Z}) ^{\dagger}\big) =\innprod{X}{\overline{Y} \bigotimes {Z}}_{HS}\\
\Tr\Big(S_{(1256)(12)(56)} X \otimes Y \otimes Z)&=  \Tr\big( X (\overline{Y} \bigotimes {Z}) ^{\dagger}\big) =\innprod{X}{\overline{Y} \bigotimes {Z}}_{HS}\\
 \Tr\Big(S_{(1256)(12)(56)}X \otimes Y \otimes Z)&= \Tr\big( X (\overline{Y} \bigotimes {Z}) ^{\dagger}\big) =\innprod{X}{\overline{Y} \bigotimes {Z}}_{HS}\tageq \label{eq:2ndcum4}\\
\end{align*}
and for $W, X, Y, Z \in X \in \mathcal{H}^{\otimes^2}$
\begin{align*}
\Tr\Big(S_{(12)(15)(56)(26)} W \otimes  X \otimes Y \otimes Z\Big) &%= \Tr(S_{(21)(15)(56)(62)}\overline{W}^{\dagger}XY\overline{Z}^{\dagger}) = \Tr(\overline{W}^{\dagger}XY\overline{Z}^{\dagger}) 
=\innprod{\overline{W}^{\dagger}X}{\overline{Z}Y^{\dagger}}_{HS}\\ 
\Tr\Big(S_{(12)(16)(56)(25)} W \otimes  X \otimes Y \otimes Z\Big)&=% \Tr\Big(S_{(21)(16)(65)(52)}\overline{W}^{\dagger}  X \overline{Y}^{\dagger} \overline{Z}^{\dagger}\Big)
 \innprod{\overline{W}^{\dagger}X}{\overline{Z}\overline{Y}}_{HS}\\
\Tr\Big(S_{(15)(26)(15)(26)} W \otimes  X \otimes Y \otimes Z\Big)&= 
%\Tr( (W\overline{Y}^{\dagger} \widetilde{\bigotimes} X\overline{Z}^{\dagger}))=
\innprod{W}{\overline{Y}}_{HS}\innprod{X}{\overline{Z}}_{HS}\\ 
\Tr\Big(S_{(15)(26)(12)(56)} W \otimes  X \otimes Y \otimes Z\Big) &=\innprod{W \widetilde{\bigotimes} \overline{X}}{(\overline{Y} \bigotimes Z)}_{HS} \\ 
\Tr\Big(S_{(16)(25)(12)(56)}  W \otimes  X \otimes Y \otimes Z\Big)&=\innprod{W \widetilde{\bigotimes}_{\top} {X}}{(\overline{Y} \bigotimes Z)}_{HS} \tageq \label{eq:2ndcum2}
\end{align*}
}
\end{Remark}

\subsubsection{Covariance structure of $\sqrt{T}\hat{F}_{2,T}$}
\[
T \cv(\hat{F}_{2,T}, \hat{F}_{2,T}) 
= T \cm(\frac{1}{N M^2}\sum_{k_1=1}^{\lfloor N/2 \rfloor} \sum_{j_1,j_2=1}^M \innprod{ I_N^{u_{j_1},\omega_{k_1}}}{I_N^{u_{j_2},\omega_{k_1}}}_{HS},\frac{1}{N M^2}\sum_{k_2=1}^{\lfloor N/2 \rfloor} \sum_{j_3,j_4=1}^M \overline{\innprod{I_N^{u_{j_3},\omega_{k_2}}}{I_N^{u_{j_4},\omega_{k_2}}}}_{HS}) 
\]
Using again \autoref{lem:cumHSinprod} 
\begin{align*}
	\cm_2(\hat F_{2,T}) 
 =\frac{1}{N^2 M^4}\sum_{k_1,k_2=1}^{\lfloor N/2 \rfloor} \sum_{\substack{j_1,j_2,\\j_3, j_4=1}}^M\Tr\Big(\sum_{\boldsymbol{P} = P_1 \cup \ldots \cup P_G}S_{\boldsymbol{P}}\Big(\otimes_{g=1}^{G} \operatorname{cum}\big(\fdft{p}{p}|p\in \nu_g\Big)\Big)
\end{align*}
where $p=(l,m)$ with $k_{p} = (-1)^{l-m} k_{l}$ and $j_p= j_{2l-\delta_{\{m \in \{1,2\}\}}}$ for $l \in\{1, 2\}$ and $m \in \{1,2,3,4\}$ and where $\delta_{\{A\}}$ equals 1 if event $A$ occurs and $0$ otherwise. That is, we are interested in all indecomposable partitions of the array
\[\begin{matrix}
\underbrace{\fdft{1}{1}}_{1}&\underbrace{\fdftc{1}{1}}_2 & \underbrace{\fdftc{2}{1}}_3& \underbrace{\fdft{2}{1}}_4\\ 
\underbrace{\fdftc{3}{2}}_5 &\underbrace{\fdft{3}{2}}_6 &\underbrace{\fdft{4}{2}}_7& \underbrace{\fdftc{4}{2}}_8
\end{matrix}\]
For the same reason as above, we only have to consider the structures $\cm_4\cm_2 \cm_2$ and $\cm_2 \cm_2 \cm_2 \cm_2$. For the structure $\cm_4\cm_2 \cm_2$, the only significant terms are again
\begin{align*}
& \Tr\Big(S_{(1256)(34)(78)}\Big(\delta_{j_1,j_3}\left[(\frac{2 \pi}{N}\F_{u_{j_1},\omega_{k_1},-\omega_{k_1},-\omega_{k_2}}+\Eps_4) \otimes (\F_{u_{j_2},-\omega_{k_1}}+\Eps_2) \otimes (\F_{u_{j_4},\omega_{k_2}}+\Eps_2)\right] \Big)\\ 
& \Tr\Big(S_{(1278)(34)(56)}\Big(\delta_{j_1,j_4}\left[ (\frac{2 \pi}{N}\F_{u_{j_1},\omega_{k_1},-\omega_{k_1},\omega_{k_2}}+\Eps_4) \otimes (\F_{u_{j_2},-\omega_{k_1}}+\Eps_2) \otimes ( \F_{u_{j_3},-\omega_{k_2}}+\Eps_2) \right]\Big)\\ 
&\Tr\Big(S_{(3456)(12)(78)}\Big(\delta_{j_2,j_3}\left[(\frac{2 \pi}{N}\F_{u_{j_2},-\omega_{k_1},\omega_{k_1},-\omega_{k_2}}+\Eps_4)\otimes (\F_{u_{j_1},\omega_{k_1}}+\Eps_2) \otimes(\F_{u_{j_4},\omega_{k_2}}+\Eps_2)\right] \Big)  \\ 
& \Tr\Big(S_{(3478)(12)(56)} \Big(\delta_{j_2,j_4}\left[(\frac{2 \pi}{N}\F_{u_{j_2},-\omega_{k_1},\omega_{k_1},\omega_{k_2}}+\Eps_4) \otimes ( \F_{u_{j_1},\omega_{k_1}}+\Eps_2) \otimes (\F_{u_{j_3},-\omega_{k_2}}+\Eps_2\right]\Big)  
\end{align*}
For the structure $\cm_2 \cm_2 \cm_2 \cm_2$, the only significant terms are in this case
\begin{align*}
&\Tr\Big(S_{(3478)(12)(56)}\big(\delta_{j_2,j_4} \delta_{k_1,k_2} \big[\F_{u_{j_1},\omega_{k_1}} \otimes \F_{u_{j_2},-\omega_{k_1}} \otimes \F_{u_{j_3},-\omega_{k_2}} \otimes \F_{u_{j_2},\omega_{k_1}}+\Eps_2\big]\big)\Big)\\ 
& \Tr\Big(S_{(12)(36)(78)(45)}\big(\delta_{j_2,j_3} \delta_{k_1,k_2} \big[\F_{u_{j_1},\omega_{k_1}} \otimes \F_{u_{j_2},-\omega_{k_1}} \otimes \F_{u_{j_4},\omega_{k_2}} \otimes \F_{u_{j_2},\omega_{k_1}} +\Eps_2\big]\big)\Big)\\  
 &\Tr\Big(S_{(15)(26)(34)(78)}\big(\delta_{j_1,j_3} \delta_{k_1,k_2} \big[\F_{u_{j_1},\omega_{k_1}}\otimes \F_{u_{j_1},-\omega_{k_1}}\otimes \F_{u_{j_2},-\omega_{k_1}}\otimes \F_{u_{j_4},\omega_{k_2}}+\Eps_2\big]\big)\Big)
\\ &\Tr\Big(S_{(18)(27)(34)(56)}\big(\delta_{j_1,j_4} \delta_{k_1,k_2} \big[\F_{u_{j_1},\omega_{k_1}}\otimes\F_{u_{j_1},-\omega_{k_1}}\otimes \F_{u_{j_2},-\omega_{k_1}} \otimes \F_{u_{j_3},-\omega_{k_2}}+\Eps_2\big]\big)\Big).
\end{align*}
Using Remark \autoref{rem:2ndstruc}, we find
\begin{align*}
	\cv(\sqrt{T}\hat F_{2,T}) 
& =\frac{1}{N M^3}\sum_{k_1,k_2=1}^{\lfloor N/2 \rfloor} \sum_{\substack{j_1,j_2,\\j_3, j_4=1}}^M \delta_{j_1,j_3}\left[ \innprod{\frac{2 \pi}{N}\F_{u_{j_1},\omega_{k_1},-\omega_{k_1},-\omega_{k_2}}}{\overline{\F_{u_{j_2},-\omega_{k_1}}} \bigotimes \F_{u_{j_4},\omega_{k_2}}}_{HS}  +O(\frac{1}{T}) \right]\\
 &+\frac{1}{N M^3}\sum_{k_1,k_2=1}^{\lfloor N/2 \rfloor} \sum_{\substack{j_1,j_2,\\j_3, j_4=1}}^M \delta_{j_1,j_4}\left[ \innprod{\frac{2 \pi}{N}\F_{u_{j_1},\omega_{k_1},-\omega_{k_1},\omega_{k_2}}}{\overline{\F_{u_{j_2},-\omega_{k_1}}} \bigotimes { \F_{u_{j_3},-\omega_{k_2}}}}_{HS} +O(\frac{1}{T})  \right]
 \\&
  +\frac{1}{N M^3}\sum_{k_1,k_2=1}^{\lfloor N/2 \rfloor} \sum_{\substack{j_1,j_2,\\j_3, j_4=1}}^M \delta_{j_2,j_3} \left[ \innprod{\frac{2 \pi}{N}\F_{u_{j_2},-\omega_{k_1},\omega_{k_1},-\omega_{k_2}}}{\overline{\F_{u_{j_1},\omega_{k_1}}} \bigotimes {\F_{u_{j_4},\omega_{k_2}}}}_{HS} +O(\frac{1}{T})  \right]
  \\&
    +\frac{1}{N M^3}\sum_{k_1,k_2=1}^{\lfloor N/2 \rfloor} \sum_{\substack{j_1,j_2,\\j_3, j_4=1}}^M \delta_{j_2,j_4}\left[ \innprod{(\frac{2 \pi}{N}\F_{u_{j_2},-\omega_{k_1},\omega_{k_1},\omega_{k_2}}}{\overline{\F_{u_{j_1},\omega_{k_1}}} \bigotimes {\F_{u_{j_3},-\omega_{k_2}}}}_{HS}+O(\frac{1}{T})  \right]
    \\&
  +\frac{1}{N M^3}\sum_{k_1,k_2=1}^{\lfloor N/2 \rfloor} \sum_{\substack{j_1,j_2,\\j_3, j_4=1}}^M \delta_{j_2,j_4} \delta_{k_1,k_2} \left[  \innprod{\overline{\F_{u_{j_1},\omega_{k_1}} }^{\dagger}\F_{u_{j_2},-\omega_{k_1}}}{\overline{\F_{u_{j_2},\omega_{k_1}}} \F_{u_{j_3},-\omega_{k_2}}^{\dagger}}_{HS} +O(\frac{1}{M^2})\right]
  \\&
  +\frac{1}{N M^3}\sum_{k_1,k_2=1}^{\lfloor N/2 \rfloor} \sum_{\substack{j_1,j_2,\\j_3, j_4=1}}^M \delta_{j_2,j_3} \delta_{k_1,k_2} \left[\innprod{\overline{\F_{u_{j_1},\omega_{k_1}} }^{\dagger}\F_{u_{j_2},-\omega_{k_1}}}{\overline{\F_{u_{j_2},\omega_{k_1}} }\overline{\F_{u_{j_4},\omega_{k_2}}}}_{HS}+O(\frac{1}{M^2})  \right]
 \\&
  +\frac{1}{N M^3}\sum_{k_1,k_2=1}^{\lfloor N/2 \rfloor} \sum_{\substack{j_1,j_2,\\j_3, j_4=1}}^M \delta_{j_1,j_3} \delta_{k_1,k_2} \left[\innprod{\F_{u_{j_1},\omega_{k_1}} \widetilde{\bigotimes} \overline{\F_{u_{j_1},-\omega_{k_1}}}}{\overline{ \F_{u_{j_2},-\omega_{k_1}}} \bigotimes \F_{u_{j_4},\omega_{k_2}}}_{HS} +O(\frac{1}{M^2}) \right]\\&
    +\frac{1}{N M^3}\sum_{k_1,k_2=1}^{\lfloor N/2 \rfloor} \sum_{\substack{j_1,j_2,\\j_3, j_4=1}}^M \delta_{j_1,j_4} \delta_{k_1,k_2} \left[  \innprod{\F_{u_{j_1},\omega_{k_1}} \widetilde{\bigotimes}_{\top} {\F_{u_{j_1},-\omega_{k_1}}}}{\overline{ \F_{u_{j_2},-\omega_{k_1}}} \bigotimes \F_{u_{j_3},-\omega_{k_2}}}_{HS}+O(\frac{1}{M^2}) \right]
\end{align*}

So that, as $N, M \to \infty$,
\begin{align*}
NM \cv (\hat{F}_{2,T}, \hat{F}_{2,T})  \to 
&  \frac{2}{8\pi} \int_{-\pi}^{\pi} \int_{-\pi}^{\pi}\innprod{\widetilde{\F}_{\omega_1,-\omega_{1},-\omega_{2}}}{{\widetilde{\F}_{\omega_{1}}} \bigotimes \widetilde{\F}_{\omega_{2}}}_{HS}  d\omega_1 d\omega_2 \\ 
&  \frac{2}{8\pi} \int_{-\pi}^{\pi} \int_{-\pi}^{\pi}\innprod{\widetilde{\F}_{\omega_1,-\omega_{1},\omega_{2}}}{{\widetilde{\F}_{\omega_{1}}} \bigotimes \widetilde{\F}_{-\omega_{2}}}_{HS}  d\omega_1 d\omega_2 \\  
 &+  \frac{1}{4\pi} \int_{-\pi}^{\pi} \int_0^{1}  \innprod{\F^2_{u,\omega }}{\F_{u,\omega} \widetilde{\F}_{\omega}}_{HS}du d\omega \\ 
  &+  \frac{1}{4\pi} \int_{-\pi}^{\pi} \int_0^{1} \innprod{{\widetilde{\F}_{\omega} }\F_{u,\omega}}{\F_{u,\omega} {\widetilde{\F}_{u,\omega}}}_{HS}du d\omega \\ 
 &+  \frac{1}{4\pi} \int_{-\pi}^{\pi} \int_0^{1}\innprod{\F_{u,\omega} \widetilde{\bigotimes} \F_{u,\omega}}{ \widetilde{\F}_{{\omega}} \bigotimes \widetilde{\F}_{\omega}}_{HS}  du d\omega \\ 
 &+  \frac{1}{4\pi} \int_{-\pi}^{\pi} \int_0^{1}\innprod{\F_{u,\omega} \widetilde{\bigotimes}_{\top} \F_{u,-\omega}}{ \widetilde{\F}_{{\omega}} \bigotimes \widetilde{\F}_{-\omega}}_{HS}  du d\omega \\ 
\end{align*}

\subsubsection{Cross-covariance $\hat{F}_{1,T}$ and $\hat{F}_{2,T}$}

Using again \autoref{lem:cumHSinprod} 
\begin{align*}
T\cm_2(\hat F_{1,T},\hat F_{2,T}) 
=\frac{T}{N^2 M^3}\sum_{k_1,k_2=1}^{\lfloor N/2 \rfloor} \sum_{\substack{j_1,j_2,j_3=1}}^M \Tr\Big(\sum_{\boldsymbol{P} = P_1 \cup \ldots \cup P_G}S_{\boldsymbol{P}}\Big(\otimes_{g=1}^{G} \operatorname{cum}\big(\fdft{p}{p}|p\in \nu_g\Big)\Big),
\end{align*}
where this time we are interested in all indecomposable partitions of the array
\[\begin{matrix}
\underbrace{\fdft{1}{1}}_{1}&\underbrace{\fdftc{1}{1}}_2 & \underbrace{D_N^{u_{j_1},-\omega_{k_1-1}}}_3& \underbrace{D_N^{u_{j_1},\omega_{k_1-1}}}_4\\ 
\underbrace{\fdftc{2}{2}}_5 &\underbrace{\fdft{2}{2}}_6 &\underbrace{\fdft{3}{2}}_7& \underbrace{\fdftc{3}{2}}_8
\end{matrix}\]
By \autoref{lem:highcumF1} and Proposition \autoref{prop:cumhighF1frequb}), we only have to consider partitions of the form $\cm_4\cm_2 \cm_2$ and $\cm_2 \cm_2 \cm_2 \cm_2$. The only significant terms of first form are again
\begin{align*}
\Tr\Big(S_{(1256)(34)(78)}\delta_{j_1,j_2}\left[(\frac{2 \pi}{N}\F_{u_{j_1},\omega_{k_1},-\omega_{k_1},-\omega_{k_2}}+\Eps_4) \otimes(\F_{u_{j_1},-\omega_{k_1-1}}+\Eps_2) \otimes ( \F_{u_{j_3},\omega_{k_2}}+\Eps_2)\right]\Big) \\ 
\Tr\Big(S_{(1278)(34)(56)}\delta_{j_1,j_3}\left[\frac{2 \pi}{N}\F_{u_{j_1},\omega_{k_1},-\omega_{k_1},\omega_{k_2}}+\Eps_4) \otimes(\F_{u_{j_1},-\omega_{k_1-1}}+\Eps_2)\otimes (\F_{u_{j_2},-\omega_{k_2}}+\Eps_2)\right]\Big) \\ 
\Tr\Big(S_{(3456)(12)(78)}\delta_{j_1,j_2}\left[\frac{2 \pi}{N}\F_{u_{j_1},-\omega_{k_1-1},\omega_{k_1-1},-\omega_{k_2}}+\Eps_4) \otimes (\F_{u_{j_1},\omega_{k_1}}+\Eps_2) \otimes (\F_{u_{j_3},\omega_{k_2}}+\Eps_2)\right] \Big)\\ 
\Tr\Big(S_{(3478)(12)(56)}\delta_{j_1,j_3}\left[(\frac{2 \pi}{N}\F_{u_{j_1},-\omega_{k_1-1},\omega_{k_1-1},\omega_{k_2}}+\Eps_4) \otimes (\F_{u_{j_1},\omega_{k_1}}+\Eps_2)\otimes (\F_{u_{j_2},-\omega_{k_2}}+\Eps_2) \right]  \Big)
\end{align*}
and for the structure $\cm_2 \cm_2 \cm_2 \cm_2$, the only significant terms are 
\begin{align*}
&\Tr\Big(S_{(12)(37)(56)(48)}\delta_{j_1,j_3} \delta_{k_1-1,k_2}\delta_{j_1,j_3} \delta_{k_1-1,k_2}  \big( \F_{u_{j_1},\omega_{k_1}}\otimes \F_{u_{j_1},-\omega_{k_1-1}} \otimes \F_{u_{j_2},-\omega_{k_2}}\otimes \F_{u_{j_1},\omega_{k_1-1}}+\Eps_2\big]\Big)\\
&\Tr\Big(S_{(12)(36)(78)(45)}\delta_{j_1,j_2} \delta_{k_1-1,k_2}\delta_{j_1,j_2} \delta_{k_1-1,k_2}\delta_{j_1,j_2} \delta_{k_1-1,k_2} \big[\F_{u_{j_1},\omega_{k_1}}\otimes \F_{u_{j_1},-\omega_{k_1-1}}\F_{u_{j_3},\omega_{k_2}}\otimes \F_{u_{j_1},\omega_{k_1-1}}+\Eps_2\big]\Big)\\   
&\Tr\Big(S_{(15)(26)(34)(78)}\delta_{j_1,j_2}\delta_{k_1,k_2})^2\delta_{j_1,j_2} \delta_{k_1,k_2} \big[\F_{u_{j_1},\omega_{k_1}}\otimes \F_{u_{j_1},-\omega_{k_1}}\otimes \F_{u_{j_1},-\omega_{k_1-1}}\otimes \F_{u_{j_2},\omega_{k_2}}+\Eps_2\big]\Big)  \\ 
&\Tr\Big(S_{(18)(27)(34)(56)}\delta_{j_1,j_3}\delta_{k_1,k_2})^2 \delta_{j_1,j_3} \delta_{k_1,k_2} \big[\F_{u_{j_1},\omega_{k_1}}\otimes\F_{u_{j_1},-\omega_{k_1}}\otimes \F_{u_{j_1},-\omega_{k_1-1}}\otimes \F_{u_{j_2},-\omega_{k_2}}+\Eps_2\big]\Big)
\end{align*}
which implies by Remark \autoref{rem:2ndstruc}
\begin{align*}
T&\cv(\hat F_{1,T},\hat F_{2,T}) \\
& =\frac{1}{N M^2}\sum_{k_1,k_2=1}^{\lfloor N/2 \rfloor} \sum_{\substack{j_1,j_2,j_3=1}}^M \delta_{j_1,j_2} \left[  \innprod{\frac{2 \pi}{N}\F_{u_{j_1},\omega_{k_1},-\omega_{k_1},-\omega_{k_2}}}{\overline{\F_{u_{j_1},-\omega_{k_1-1}}} \bigotimes { \F_{u_{j_3},\omega_{k_2}}}}_{HS}+O(\frac{1}{T}) \right]
\\ &
+\frac{1}{N M^2}\sum_{k_1,k_2=1}^{\lfloor N/2 \rfloor} \sum_{\substack{j_1,j_2,j_3=1}}^M \delta_{j_1,j_3}\left[\innprod{\frac{2 \pi}{N}\F_{u_{j_1},\omega_{k_1},-\omega_{k_1},\omega_{k_2}}}{\overline{\F_{u_{j_1},-\omega_{k_1-1}}} \bigotimes {\F_{u_{j_2},-\omega_{k_2}}}}_{HS} +O(\frac{1}{T})  \right]
 \\&
  +\frac{1}{N M^2}\sum_{k_1,k_2=1}^{\lfloor N/2 \rfloor} \sum_{\substack{j_1,j_2,j_3=1}}^M \delta_{j_1,j_2} \left[ \innprod{\frac{2 \pi}{N}\F_{u_{j_1},-\omega_{k_1-1},\omega_{k_1-1},-\omega_{k_2}}}{\overline{\F_{u_{j_1},\omega_{k_1}}} \bigotimes {\F_{u_{j_3},\omega_{k_2}}}}_{HS}+O(\frac{1}{T}) \right]
  \\&
   +\frac{1}{N M^2}\sum_{k_1,k_2=1}^{\lfloor N/2 \rfloor} \sum_{\substack{j_1,j_2,j_3=1}}^M \delta_{j_1,j_3} \left[  \innprod{\frac{2 \pi}{N}\F_{u_{j_1},-\omega_{k_1-1},\omega_{k_1-1},\omega_{k_2}}}{\overline{\F_{u_{j_1},\omega_{k_1}}} \bigotimes {\F_{u_{j_2},-\omega_{k_2}}}}_{HS}+O(\frac{1}{T}) \right]\\
& +\frac{1}{N M^2}\sum_{k_1,k_2=1}^{\lfloor N/2 \rfloor} \sum_{\substack{j_1,j_2,j_3=1}}^M \delta_{j_1,j_3} \delta_{k_1-1,k_2}\delta_{j_1,j_3} \delta_{k_1-1,k_2} \left[  \innprod{\overline{\F_{u_{j_1},\omega_{k_1}}}^{\dagger} \F_{u_{j_1},-\omega_{k_1-1}}}{\overline{ \F_{u_{j_1},\omega_{k_1-1}}}\F_{u_{j_2},-\omega_{k_2}}^{\dagger}}_{HS}+O(\frac{1}{M^2}) \right]\\
 &+\frac{1}{N M^2}\sum_{k_1,k_2=1}^{\lfloor N/2 \rfloor} \sum_{\substack{j_1,j_2,j_3=1}}^M \delta_{j_1,j_2} \delta_{k_1-1,k_2}\delta_{j_1,j_2} \delta_{k_1-1,k_2}\delta_{j_1,j_2}\left[\innprod{\overline{\F_{u_{j_1},\omega_{k_1}}}^{\dagger}\F_{u_{j_1},-\omega_{k_1-1}}}{\overline{\F_{u_{j_1},\omega_{k_1-1}}}\overline{\F_{u_{j_3},\omega_{k_2}}}}_{HS}  +O(\frac{1}{M^2})\right]
 \\&
  +\frac{1}{N M^2}\sum_{k_1,k_2=1}^{\lfloor N/2 \rfloor} \sum_{\substack{j_1,j_2,j_3=1}}^M \delta_{j_1,j_2}\delta_{k_1,k_2}\delta_{j_1,j_2} \delta_{k_1,k_2}\left[\innprod{\F_{u_{j_1},\omega_{k_1}} \widetilde{\bigotimes} \overline{\F_{u_{j_1},-\omega_{k_1}}}}{\overline{\F_{u_{j_1},-\omega_{k_1-1}}} \bigotimes \F_{u_{j_2},\omega_{k_2}}}_{HS} +O(\frac{1}{M^2}) \right]\\&
    +\frac{1}{N M^2}\sum_{k_1,k_2=1}^{\lfloor N/2 \rfloor} \sum_{\substack{j_1,j_2,j_3=1}}^M \delta_{j_1,j_3}\delta_{k_1,k_2} \delta_{j_1,j_3} \delta_{k_1,k_2}\left[  \innprod{\F_{u_{j_1},\omega_{k_1}} \widetilde{\bigotimes}_{\top} {\F_{u_{j_1},-\omega_{k_1}}}}{\overline{\F_{u_{j_1},-\omega_{k_1-1}}} \bigotimes\F_{u_{j_2},-\omega_{k_2}}}_{HS}+O(\frac{1}{M^2}) \right]\\&
\end{align*}

So that, as $N, M \to \infty$,
\begin{align*}
NM \cv (\hat{F}_{1,T}, \hat{F}_{2,T})  \to 
&  \frac{2}{8\pi} \int_{-\pi}^{\pi} \int_{-\pi}^{\pi}\int_0^{1} \innprod{\F_{u,\omega_{1},-\omega_{1},-\omega_{2}}}{\F_{u,\omega_{1}} \bigotimes \widetilde{\F}_{\omega_{2}}}_{HS} du d\omega_1 d\omega_2 \\ 
&  \frac{2}{8\pi} \int_{-\pi}^{\pi} \int_{-\pi}^{\pi}\int_0^{1} \innprod{\F_{u,\omega_{1},-\omega_{1},\omega_{2}}}{\F_{u,\omega_{1}} \bigotimes \widetilde{\F}_{-\omega_{2}}}_{HS} du d\omega_1 d\omega_2 \\ 
 &+  \frac{2}{4\pi} \int_{-\pi}^{\pi} \int_0^{1} \innprod{\F_{u,\omega} \F_{u,\omega}}{ \F_{u,\omega} \widetilde{\F}_{\omega} }_{HS} du d\omega \\ 
 &+  \frac{1}{4\pi} \int_{-\pi}^{\pi} \int_0^{1} \innprod{\F_{u,\omega} \widetilde{\bigotimes} \F_{u,\omega}}{\F_{u,\omega} \bigotimes \widetilde{\F}_{\omega}}_{HS}du d\omega \\ 
  &+  \frac{1}{4\pi} \int_{-\pi}^{\pi} \int_0^{1} \innprod{\F_{u,\omega} \widetilde{\bigotimes}_{\top} \F_{u,-\omega}}{\F_{u,\omega} \bigotimes \widetilde{\F}_{-\omega}}_{HS}du d\omega \\ \end{align*} 
  
%%%%%%%%%%%%%%%%%%%%%%%%%%%%%%%%%%%%%%%%%%%%%%%%%
\subsubsection{Limiting Variance of $\widehat{m}_T$}
\label{54}
The limiting variance of $\widehat{m}_T$ is given by 
\[\nu^2=\lim_{T \to \infty}\Big(16 \pi^2 \V(\hat{F}_{1,T})+16 \pi^2 \V(\hat{F}_{2,T})-32 \pi^2 \cv(\hat{F}_{1,T},\hat{F}_{2,T})\Big).\] The above therefore yields the following expression for the asymptotic variance 
\begin{align*}
&\nu^2= 4\pi \int_{-\pi}^{\pi} \int_{-\pi}^{\pi}\int_0^{1} 
\innprod{\F_{u,\omega_{1},-\omega_{1},-\omega_{2}}}{\F_{u,\omega_{1}} \bigotimes \F_{u,\omega_{2}} }_{HS}du d\omega_1 d\omega_2\\& + 4\pi \int_{-\pi}^{\pi} \int_{-\pi}^{\pi}\innprod{\widetilde{\F}_{\omega_1,-\omega_{1},-\omega_{2}}}{{\widetilde{\F}_{\omega_{1}}} \bigotimes \widetilde{\F}_{\omega_{2}}}_{HS}  d\omega_1 d\omega_2- 8\pi \int_{-\pi}^{\pi} \int_{-\pi}^{\pi}\int_0^{1} \innprod{\F_{u,\omega_{1},-\omega_{1},-\omega_{2}}}{\F_{u,\omega_{1}} \bigotimes \widetilde{\F}_{\omega_{2}}}_{HS} du d\omega_1 d\omega_2\\&
+  4\pi \int_{-\pi}^{\pi} \int_{-\pi}^{\pi}\int_0^{1}\innprod{ \F_{u,\omega_{1},-\omega_{1},\omega_{2}}}{ \F_{u,\omega_{1}}\bigotimes \F_{u,-\omega_{2}}}_{HS}  du d\omega_1 d\omega_2  + 4\pi \int_{-\pi}^{\pi} \int_{-\pi}^{\pi}\innprod{\widetilde{\F}_{\omega_1,-\omega_{1},\omega_{2}}}{{\widetilde{\F}_{\omega_{1}}} \bigotimes \widetilde{\F}_{-\omega_{2}}}_{HS}  d\omega_1 d\omega_2 
\\& - 8\pi  \int_{-\pi}^{\pi} \int_{-\pi}^{\pi}\int_0^{1} \innprod{\F_{u,\omega_{1},-\omega_{1},\omega_{2}}}{\F_{u,\omega_{1}} \bigotimes \widetilde{\F}_{-\omega_{2}}}_{HS} du d\omega_1 d\omega_2\\
 &+  8\pi \int_{-\pi}^{\pi} \int_0^{1} \snorm{\F^2_{u,\omega}}^2_2 du d\omega +  4\pi\int_{-\pi}^{\pi} \int_0^{1}  \innprod{\F^2_{u,\omega }}{\F_{u,\omega} \widetilde{\F}_{\omega}}_{HS}du d\omega +  4\pi\int_{-\pi}^{\pi} \int_0^{1} \innprod{{\widetilde{\F}_{\omega} }\F_{u,\omega}}{\F_{u,\omega} {\widetilde{\F}_{u,\omega}}}_{HS}du d\omega \\ 
  &- 16\pi  \int_{-\pi}^{\pi} \int_0^{1} \innprod{\F_{u,\omega} \F_{u,\omega}}{ \F_{u,\omega} \widetilde{\F}_{\omega} }_{HS} du d\omega + 4\pi  \int_{-\pi}^{\pi} \int_0^{1}\snorm{\F_{u,\omega}}^4_2 du d\omega  \\ 
 &+ 4\pi  \int_{-\pi}^{\pi} \int_0^{1}   \innprod{\F_{u,\omega} \widetilde{\bigotimes} \F_{u,\omega}}{\F_{u,\omega} \bigotimes \F_{u,\omega}}_{HS}du d\omega +  4\pi  \int_{-\pi}^{\pi} \int_0^{1} \innprod{\F_{u,\omega} \widetilde{\bigotimes}_{\top} \F_{u,-\omega}}{ \F_{u,\omega}\bigotimes\F_{u,-\omega}}_{HS} du d\omega 
 \\
 &+  4\pi \int_{-\pi}^{\pi} \int_0^{1}\innprod{\F_{u,\omega} \widetilde{\bigotimes} \F_{u,\omega}}{ \widetilde{\F}_{{\omega}} \bigotimes \widetilde{\F}_{\omega}}_{HS}  du d\omega +  4\pi \int_{-\pi}^{\pi} \int_0^{1}\innprod{\F_{u,\omega} \widetilde{\bigotimes}_{\top} \F_{u,-\omega}}{ \widetilde{\F}_{{\omega}} \bigotimes \widetilde{\F}_{-\omega}}_{HS}  du d\omega
\\ 
 &-8\pi \int_{-\pi}^{\pi} \int_0^{1} \innprod{\F_{u,\omega} \widetilde{\bigotimes} \F_{u,\omega}}{\F_{u,\omega} \bigotimes \widetilde{\F}_{\omega}}_{HS}du d\omega -8\pi  \int_{-\pi}^{\pi} \int_0^{1} \innprod{\F_{u,\omega} \widetilde{\bigotimes}_{\top} \F_{u,-\omega}}{\F_{u,\omega} \bigotimes \widetilde{\F}_{-\omega}}_{HS}du d\omega \\ 
  \tageq\label{eq:asvar} 
\end{align*}

Under $H_0$ this reduces to 
\[\nu^2_{H_0}=4\pi \int_{-\pi}^{\pi} \snorm{\widetilde{\F}_\omega}_2^4d\omega.\]
\subsection{Proof of Lemma \ref{consvarest} (consistency of variance estimate)}\label{subsec:estvar}
 
\textit{Proof of Lemma \ref{consvarest}}: We write 
\begin{align*}
\E(\hat v_{H_0}^2 ) =&\frac{16\pi^2}{N}\sum_{k=1}^{\lfloor N/2 \rfloor} \E \Big[\frac{1}{M}\sum_{j=1}^M\innprod{I_N^{u_j,\omega_k}}{I_N^{u_j,\omega_{k-1}}}_{HS}\Big]^2\\
= &\frac{16\pi^2}{N}\sum_{k=1}^{\lfloor N/2 \rfloor}\text{Var}\Big[\frac{1}{M}\sum_{j=1}^M\innprod{I_N^{u_j,\omega_k}}{I_N^{u_j,\omega_{k-1}}}_{HS}\Big] + \frac{16\pi^2}{N}\sum_{k=1}^{\lfloor N/2 \rfloor}\left(\E\Big[\frac{1}{M}\sum_{j=1}^M\innprod{I_N^{u_j,\omega_k}}{I_N^{u_j,\omega_{k-1}}}_{HS}\Big]\right)^2\\
= &\frac{16\pi^2}{NM^2}\sum_{k=1}^{\lfloor N/2 \rfloor}\sum_{j_1,j_2=1}^M\text{Cov}\left[\innprod{I_N^{u_{j_1},\omega_k}}{I_N^{u_{j_1},\omega_{k-1}}}_{HS}, \innprod{I_N^{u_{j_2},\omega_k}}{I_N^{u_{j_2},\omega_{k-1}}}_{HS}\right]\\
& + \frac{16\pi^2}{N}\sum_{k=1}^{\lfloor N/2 \rfloor}\left(\frac{1}{M}\sum_{j=1}^M \E\left[\innprod{I_N^{u_j,\omega_k}}{I_N^{u_j,\omega_{k-1}}}_{HS}\right]\right)^2. 
\end{align*} 
Using Theorem \ref{lem:cumHSinprod} we can write the first term as
\begin{align} \label{eq:varnuh0}
\frac{16\pi^2}{NM^2}\sum_{k=1}^{\lfloor N/2 \rfloor}\sum_{j_1,j_2=1}^M \Tr\Big(\sum_{\boldsymbol{P} = P_1 \cup \ldots \cup P_G}S_{\boldsymbol{P}}\Big(\otimes_{g=1}^{G} \operatorname{cum}\big(\fdft{p}{p}|p\in \nu_g\Big)\Big) 
\end{align}
where $p=(l,m)$ and $k_{p} = (-1)^{m} k_{2l-\delta{\{m \in \{1,2\}\}}}$ and $j_p= j_{2l-\delta{\{m \in \{1,2\}\}}}$ for $l \in\{1,\ldots, n\}$ and $m \in \{1,2,3,4\}$. In this case, we are interested in all indecomposible partitions of the array
\[\begin{matrix}
\underbrace{\fdft{1}{}}_{1}&\underbrace{\fdftc{1}{}}_2 & \underbrace{\fdftcl{1}{}}_3& \underbrace{\fdftl{1}{}}_7\\ 
\underbrace{\fdftcl{2}{}}_5 &\underbrace{\fdftl{2}{}}_6 &\underbrace{\fdft{2}{}}_7& \underbrace{\fdftc{2}{}}_8.
\end{matrix}\] Indecomposability immediately implies that there must be one restriction in time. Using the results in \autoref{sec5} and a similar argument as in \autoref{sec:cov} 
%using \autoref{cor:cumbound} 
will show that all of these are at most order $O(\frac{1}{M})$ and hence will vanish as $M \to \infty$. For example,
\begin{align*}
&\Tr\Big(S_{(18)(27)(36)(45)}\delta_{j_1,j_2}\big( \F_{u_{j_1},\omega_{k}}\otimes \F_{u_{j_1},-\omega_{k-1}} \otimes \F_{u_{j_1},-\omega_{k}}\otimes \F_{u_{j_1},\omega_{k-1}}+\Eps_2\big]\Big)\\ 
&\Tr\Big(S_{(12)(78)(36)(45)}\delta_{j_1,j_2}\big( \F_{u_{j_1},\omega_{k}}\otimes \F_{u_{j_2},\omega_{k}} \otimes \F_{u_{j_1},-\omega_{k}}\otimes \F_{u_{j_1},\omega_{k-1}}+\Eps_2\big]\Big)\\ 
& \vdots
\end{align*}
Using again Theorem \autoref{lem:cumHSinprod}, we can prove similar to the proof of $\E \hat{F}_{1,T}$, that the second term converges to $4\pi \int_{-\pi}^{\pi} \Big( \int^1_0 \snorm{\F_{u,\omega}}^2_2 du \Big)^2 d\omega$.
Under $H_0$, we have  ${\F}_{\omega,u}  \equiv {\F}_{\omega}$ and it follows therefore that the second term  converges to $4\pi \int_{-\pi}^{\pi} \snorm{\widetilde{\F}_\omega}_2^4d\omega$ if the null is satisifed.

For the variance of the estimator, we write 
 \begin{equation}\label{eq:varofestvar} 
\text{Var}( \hat v_{H_0}^2) = \E [\hat v_{H_0}^2]^2-\big(\E [\hat v_{H_0}^2] \big)^2.
%= \frac{16\pi^2}{N}\sum_{k=1}^{\lfloor N/2 \rfloor} \left[\frac{1}{M}\sum_{j=1}^M\innprod{I_N^{u_j,\omega_k}}{I_N^{u_j,\omega_{k-1}}}_{HS}\right]^2.
 \end{equation}
Under $H_0$, the above derivation yields that the second term of \eqref{eq:varofestvar} converges to $\big(4\pi \int_{-\pi}^{\pi} \snorm{\widetilde{\F}_\omega}_2^4d\omega)^2$. Consider then decomposing the first term of \eqref{eq:varofestvar} as
 \begin{align*}
 \E [\hat v_{H_0}^2]^2= &\frac{2^8 \pi^4}{N^2}\sum_{k_1,k_2}\E\left[\frac{1}{M^2}\sum_{j_1,j_2}\innprod{I_N^{u_{j_1},\omega_{k_1}}}{I_N^{u_{j_1},\omega_{k_1-1}}}_{HS}\innprod{I_N^{u_{j_2},\omega_{k_2}}}{I_N^{u_{j_2},\omega_{k_2-1}}}_{HS}\right]^2\\
= &\frac{2^8 \pi^4}{N^2}\sum_{k_1,k_2}\text{Var}\left[\frac{1}{M^2}\sum_{j_1,j_2}\innprod{I_N^{u_{j_1},\omega_{k_1}}}{I_N^{u_{j_1},\omega_{k_1-1}}}_{HS}\innprod{I_N^{u_{j_2},\omega_{k_2}}}{I_N^{u_{j_2},\omega_{k_2-1}}}_{HS}\right]  \tageq  \label{eq:estvardec1}\\
&+\frac{2^8 \pi^4}{N^2}\sum_{k_1,k_2}\left[\E\left(\frac{1}{M^2}\sum_{j_1,j_2}\innprod{I_N^{u_{j_1},\omega_{k_1}}}{I_N^{u_{j_1},\omega_{k_1-1}}}_{HS}\innprod{I_N^{u_{j_2},\omega_{k_2}}}{I_N^{u_{j_2},\omega_{k_2-1}}}_{HS}\right)\right]^2 \tageq  \label{eq:estvardec2}
\end{align*}
We consider\eqref{eq:estvardec1} and \eqref{eq:estvardec2} separately. Using the product theorem for cumulants \eqref{eq:estvardec1} equals
\begin{align*}
%= &\frac{2^8\pi^4}{N^2M^4}\sum_{k_1,k_2}\sum_{j_1,j_2,j_3,j_4}\text{Cov}\left(\innprod{I_N^{u_{j_1},\omega_{k_1}}}{I_N^{u_{j_1},\omega_{k_1-1}}}_{HS}\innprod{I_N^{u_{j_2},\omega_{k_2}}}{I_N^{u_{j_2},\omega_{k_2-1}}}_{HS},\right.\\
%& \hspace{1.4 in}\left.\innprod{I_N^{u_{j_3},\omega_{k_1}}}{I_N^{u_{j_3},\omega_{k_1-1}}}_{HS}\innprod{I_N^{u_{j_4},\omega_{k_2}}}{I_N^{u_{j_4},\omega_{k_2-1}}}_{HS}\right)\\
&\frac{2^8\pi^4}{N^2M^4}\sum_{k_1,k_2}\sum_{j_1,j_2,j_3,j_4}\text{Cum}\Big(\innprod{I_N^{u_{j_1},\omega_{k_1}}}{I_N^{u_{j_1},\omega_{k_1-1}}}_{HS}\,,\,\overline{\innprod{I_N^{u_{j_3},\omega_{k_1}}}{I_N^{u_{j_3},\omega_{k_1-1}}}}_{HS}\Big)\\
& \phantom{\frac{2^8\pi^4}{N^2M^4}\sum_{k_1,k_2}\sum_{j_1,j_2,j_3,j_4}}
\times \text{Cum}\Big(\innprod{I_N^{u_{j_2},\omega_{k_2}}}{I_N^{u_{j_2},\omega_{k_2-1}}}_{HS}\,,\,\overline{\innprod{I_N^{u_{j_4},\omega_{k_2}}}{I_N^{u_{j_4},\omega_{k_2-1}}}}_{HS}\Big)\\
& + \frac{2^8\pi^4}{N^2M^4}\sum_{k_1,k_2}\sum_{j_1,j_2,j_3,j_4}\text{Cum}\Big(\innprod{I_N^{u_{j_1},\omega_{k_1}}}{I_N^{u_{j_1},\omega_{k_1-1}}}_{HS}\,,\,\overline{\innprod{I_N^{u_{j_4},\omega_{k_2}}}{I_N^{u_{j_4},\omega_{k_2-1}}}}_{HS}\Big)\\
& \phantom{\frac{2^8\pi^4}{N^2M^4}\sum_{k_1,k_2}\sum_{j_1,j_2,j_3,j_4}}
\times \text{Cum}\Big(\innprod{I_N^{u_{j_2},\omega_{k_2}}}{I_N^{u_{j_2},\omega_{k_2-1}}}_{HS}\,,\,\overline{\innprod{I_N^{u_{j_3},\omega_{k_1}}}{I_N^{u_{j_3},\omega_{k_1-1}}}}_{HS}\Big)
\end{align*}
 \autoref{lem:cumHSinprod} then shows that indecomposability of the first of these terms implies the restrictions $k_1=k_2$ and $\{j_1,j_2\} \cap \{j_3,j_4\} = \emptyset$, while indecomposability of the second implies the constraints $\{j_1,j_2\} \cap \{j_3,j_4\} = \emptyset$ only. Therefore, \eqref{eq:estvardec1} is of order $O(\frac{1}{NM^2}+\frac{1}{M^2})$ and hence converges to zero as $N,M \to \infty$. Finally, it is straightforward to show using a similar argument that \eqref{eq:estvardec2} equals
\begin{align*}
%&+\frac{2^8 \pi^4}{N^2}\sum_{k_1,k_2}\left[\frac{1}{M^2}\sum_{j_1,j_2} \E\left[ \innprod{I_N^{u_{j_1},\omega_{k_1}}}{I_N^{u_{j_1},\omega_{k_1-1}}}_{HS}\innprod{I_N^{u_{j_2},\omega_{k_2}}}{I_N^{u_{j_2},\omega_{k_2-1}}}_{HS}\right]\right]^2\\
&\frac{2^8 \pi^4}{N^2}\sum_{k_1,k_2}\Big[\frac{1}{M^2}\sum_{j_1,j_2} \text{Cum}\Big( \innprod{I_N^{u_{j_1},\omega_{k_1}}}{I_N^{u_{j_1},\omega_{k_1-1}}}_{HS},\overline{\innprod{I_N^{u_{j_2},\omega_{k_2}}}{I_N^{u_{j_2},\omega_{k_2-1}}}_{HS}}\Big) \\& 
\phantom{\frac{2^8 \pi^4}{N^2}\sum_{k_1,k_2}\Big[\frac{1}{M^2}\sum_{j_1,j_2} \text{Cum}}+ \E \innprod{I_N^{u_{j_1},\omega_{k_1}}}{I_N^{u_{j_1},\omega_{k_1-1}}}_{HS} \E\innprod{I_N^{u_{j_2},\omega_{k_2}}}{I_N^{u_{j_2},\omega_{k_2-1}}}_{HS}\Big) \Big]^2 \\
&=\frac{2^8 \pi^4}{N^2}\sum_{k_1,k_2}\Big[O(\frac{1}{M})+\frac{1}{M^2}\sum_{j_1,j_2} \E \innprod{I_N^{u_{j_1},\omega_{k_1}}}{I_N^{u_{j_1},\omega_{k_1-1}}}_{HS} \E\innprod{I_N^{u_{j_2},\omega_{k_2}}}{I_N^{u_{j_2},\omega_{k_2-1}}}_{HS}\Big) \Big]^2
\\ 
&=\frac{2^8 \pi^4}{N^2}\sum_{k_1,k_2}\Big[\frac{1}{M}\sum_{j_1} \snorm{\F_{u_{j_1},\omega_{k_1}}}^2_2 \frac{1}{M}\sum_{j_2}\snorm{\F_{u_{j_2},\omega_{k_2}}}^2_2 + O(\frac{1}{M}) \Big]^2. 
\end{align*}
Under $H_0$,the latter converges to $\big(4\pi \int_{-\pi}^{\pi} \snorm{\widetilde{\F}_\omega}_2^4d\omega)^2$. Altogether, the above derivation shows that $%&=\frac{2^8 \pi^4}{N^2}\sum_{k_1,k_2}\Big[ \snorm{\widetilde{\F}_{\omega_{k_1}}}^2_2 \snorm{\widetilde{\F}_{\omega_{k_2}}}^2_2  \Big]^2 \to 
 \E [\hat v_{H_0}^2]^2\to \big(4\pi \int_{-\pi}^{\pi} \snorm{\widetilde{\F}_\omega}_2^4d\omega)^2$
as $N, M \to \infty$. Since the second term of \eqref{eq:varofestvar} converges to the same limit, we thus find $\text{Var}( \hat v_{H_0}^2) \to 0$ and consequently $\hat{v}_{H_0}^2 \stackrel{p}{\to} v_{H_0}^2$.
%%%%%%%%%%%%%%%%%%%%%%%%%%%%%%%%%%%%%%%%%%%%%%%%%%%%%%

%%%%%%%%%%%%%%%%%%%%%%%%%%%%%%%%%%%%%%%%%%%%%%%%%%%%%%

\end{appendices}

%\begin{comment}

%\end{comment}

%\bibliography{References}

\end{document}